\newtheorem{theorem}{Theorem}[section]
\newtheorem*{theorem*}{Theorem}
\newaliascnt{definition}{theorem}
\newtheorem{definition}[definition]{Definition}
\newtheorem*{definition*}{Definition}
\newaliascnt{lemma}{theorem}
\newtheorem{lemma}[lemma]{Lemma}
\newtheorem*{lemma*}{Lemma}
\newaliascnt{claim}{theorem}
\newtheorem*{claim*}{Claim}
\newaliascnt{fact}{theorem}
\newtheorem{fact}[fact]{Fact}
\newtheorem*{fact*}{Fact}
\newaliascnt{observation}{theorem}
\newtheorem*{observation*}{Observation}
\newaliascnt{conjecture}{theorem}
\newtheorem*{conjecture*}{Conjecture}
\newaliascnt{corollary}{theorem}
\newtheorem*{corollary*}{Corollary}
\newaliascnt{remark}{theorem}
\newtheorem*{remark*}{Remark}
\newaliascnt{proposition}{theorem}
\newtheorem{proposition}[proposition]{Proposition}
\newtheorem*{proposition*}{Proposition}
\patchcmd{\ALG@step}{\addtocounter{ALG@line}{1}}{\refstepcounter{ALG@line}}{}{}
\newcommand{\ALG@lineautorefname}{Line}
\DeclarePairedDelimiter{\abs}{\lvert}{\rvert}
\DeclarePairedDelimiter{\set}{\{}{\}}
\DeclarePairedDelimiter{\card}{\lvert}{\rvert}
\DeclarePairedDelimiter{\sq}{[}{]}
\DeclarePairedDelimiter{\paren}{\lparen}{\rparen}
\DeclareMathOperator*{\argmax}{arg\,max}
\DeclareMathOperator*{\snd}{2nd}
\def\maxind{\mathsf{tb}}
\def\E{\mathop{{}\mathbb{E}}}
\def\Var{\mathsf{Var}}
\def\D{\mathcal{D}}
\def\V{\mathcal{V}}
\def\reals{\mathbb{R}}
\def\auc{\mathcal{A}}
\def\barpi{\overline{\pi}}
\def\barp{\overline{p}}
\def\region{\mathcal{R}}
\def\pregion{\mathcal{P}}
\def\T{\mathcal{T}}
\def\rev{\mathsf{Rev}}
\def\srev{\mathsf{SRev}}
\def\bvcg{\mathsf{BVCG}}
\def\vcg{\mathsf{VCG}}
\def\pibvcg{\mathsf{PI}\text{\normalfont{-}}\bvcg}
\def\vv{\varphi}
\def\ivv{\tilde{\vv}}
\def\fv{\Phi}
\def\ub{\mathsf{IU}}
\def\nf{\normalfont{\textsc{NF}}}
\def\und{\normalfont{\textsc{Und}}}
\def\srp{\normalfont{\textsc{Srp}}}
\def\nonfavorite{\normalfont{\textsc{Non-Favorite}}}
\def\under{\normalfont{\textsc{Under}}}
\def\surplus{\normalfont{\textsc{Surplus}}}
\def\core{\normalfont{\textsc{Core}}}
\def\tail{\normalfont{\textsc{Tail}}}
\def\single{\normalfont{\textsc{Single}}}
\def\over{\normalfont{\textsc{Over}}}
\def\ron{\mathsf{Ron}}
\def\util{\mathsf{Util}}
\def\U{\mathsf{U}}
\def\caputil{\widehat{\util}}
\def\capU{\hat{\U}}
\def\fee{\mathsf{Fee}}
\def\N{\mathcal{N}}
\def\iu{\normalfont{\text{IU}}}
\def\dec{\mathsf{dec}}
\title{99\% Revenue with Constant Enhanced Competition}
\author{
Linda Cai\thanks{\href{mailto:tcai@princeton.edu}{tcai@princeton.edu}}\\
{\small Princeton University}
\and
Raghuvansh R. Saxena\thanks{\href{mailto:rrsaxena@princeton.edu}{rrsaxena@princeton.edu}}\\
{\small Princeton University}
}
\date{} 
\begin{document}

\maketitle
\thispagestyle{empty}
\addtocounter{page}{-1}


\begin{abstract}

The {\em enhanced competition} paradigm is an attempt at bridging the gap between simple and optimal auctions. In this line of work, given an auction setting with $m$ items and $n$ bidders, the goal is to find the smallest $n' \geq n$ such that selling the items to $n'$ bidders through a simple auction generates (almost) the same revenue as the optimal auction.

Recently, Feldman, Friedler, and Rubinstein [EC, 2018] showed that an arbitrarily large constant fraction of the optimal revenue from selling $m$ items to a single bidder can be obtained via simple auctions with a constant number of bidders. However, their techniques break down even for two bidders, and can only show a bound of $n' = n \cdot O(\log \frac{m}{n})$.

Our main result is that $n' = O(n)$ bidders suffice for all values of $m$ and $n$. That is, we show that, for all $m$ and $n$, an arbitrarily large constant fraction of the optimal revenue from selling $m$ items to $n$ bidders can be obtained via simple auctions with $O(n)$ bidders. Moreover, when the items are regular, we can achieve the same result through auctions that are prior-independent, {\em i.e.}, they do not depend on the distribution from which the bidders' valuations are sampled.

\end{abstract}

\newpage

%

\section{Introduction} \label{sec:intro}

That optimal auctions are not simple and simple auctions are not optimal is the theme of a lot of recent work on designing auctions for multi-item multi-bidder settings. Indeed, it has been well demonstrated that revenue-optimal auctions selling $m$ items to $n$ additive bidders suffer from several undesirable properties, such as the need for randomization, non-monotonicity, computational intractability, {\em etc.} \cite{Thanassoulis04, ManelliV07, Pavlov11, HartN13, DaskalakisDT14, HartR15, DaskalakisDT17}, that make them impractical. On the other hand, the state of the art bounds for simple auctions only show that they obtain a small constant fraction of the optimal revenue \cite{ChawlaHK07, ChawlaHMS10, ChawlaMS10, HartN12, LiY13, BabaioffILW14, BateniDHS15, Yao15, RubinsteinW15, ChawlaMS15, ChawlaM16, CaiDW16, CaiZ17, EdenFFTW17a}. 

The {\em enhanced competition} paradigm is an attempt at bridging the gap between simple and optimal auctions. In this paradigm, given an auction setting with $m$ items and $n$ independent and identically distributed additive bidders, the goal is to find the smallest number $n' \geq n$ of bidders such that simple auctions with $n'$ bidders (almost) match the revenue of the optimal auction with $n$ bidders. If such a result can be shown, then it conveys the message that an auctioneer aspiring to get the optimal revenue with $n$ bidders need not spend all his energy on finding the optimal, even if impractical, auction with $n$ bidders. Instead, he can try to rope in $n' - n$ more bidders and get the same revenue using a simple and practical auction format.

The focus of this paper is to show enhanced competition results for general auction settings where $n' = O(n)$ is at most a constant times $n$. The first such result is found in the seminal work of Bulow and Klemperer \cite{BulowK96} where it was shown that the revenue of the optimal auction selling a {\em single} item to $n$ bidders is at most the revenue of the simple VCG auction with $n+1$ bidders, as long as the distribution of the bidders' valuation for the item is regular\footnote{A (continuous) distribution with probability density function $f(\cdot)$ and cumulative density function $F(\cdot)$ is regular if the function $x - \frac{ 1 - F(x) }{ f(x) }$ is monotone non-decreasing.}. 

The only other enhanced competition result with $n' = O(n)$ is in \cite{FeldmanFR18} where it is shown that, for any $\epsilon > 0$, a $(1 - \epsilon)$-fraction of the revenue of the optimal auction selling $m$ items to a {\em single} bidder can be obtained by either selling the items separately, or by selling the grand bundle, to a constant number of bidders. However, the techniques used in \cite{FeldmanFR18} do not generalize to $n > 1$ bidders and finding enhanced competition results with $n' = O(n)$ for $n > 1$ bidders remains an open problem\footnote{\cite{FeldmanFR18} also show that, for any $\epsilon > 0$, when $n \gg m$, {\em i.e.}, when the number of bidders is much larger than the number of items, then, even without ``enhancing'' competition, {\em i.e.}, with $n' = n$, the revenue of selling the items separately obtains a $(1 - \epsilon)$-fraction of the revenue of the optimal auction. In other words, these settings are competitive enough for enhanced competition to not yield great gains in the revenue. Thus, the interesting range of parameters is $1 < n \ll m$. See \cite{BeyhaghiW19} for a related result.}.

\subsection{Our Results}
\label{sec:intro:results}

Our main theorem is the first enhanced competition result with $n' = O(n)$ that works for all $m$ and $n$.

\begin{theorem}[Informal]
\label{thm:main}
Consider an auction setting where $m$ items are being sold to $n$ bidders. Let $\epsilon > 0$ and $n' = O(n / \epsilon)$. At least one of the following hold:
\begin{enumerate}
\item \label{thm:main:0.99}A $(1 - \epsilon)$-fraction of the optimal revenue with $n$ bidders is obtained by a VCG auction with $n'$ bidders.
\item \label{thm:main:1} A simple auction (either selling the items separately using Myerson's optimal auction or a VCG auction with an entry fee, as in \cite{CaiDW16}) with $n'$ bidders generates more revenue than the optimal auction with $n$ bidders.
\end{enumerate}
\end{theorem}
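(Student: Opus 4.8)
The plan is to prove the dichotomy as a trichotomy: I would show that if \emph{neither} enhanced simple auction (selling the items separately via Myerson, or VCG with per‑bidder entry fees) with $n'=\Theta(n/\epsilon)$ bidders out‑earns the optimal revenue $\mathrm{OPT}_n$, then the plain $n'$‑bidder VCG auction already collects a $(1-\epsilon)$ fraction of $\mathrm{OPT}_n$. The workhorse is a core--tail decomposition of $\mathrm{OPT}_n$ in the spirit of the duality framework of \cite{CaiDW16}. First I would fix, for each item $j$, a threshold $\tau_j$ with $\Pr_{v\sim D_j}[v\ge\tau_j]$ at most about $\epsilon/(c\,m\,n')$, calling item $j$'s coordinate of a profile \emph{tail} if it is $\ge\tau_j$ and \emph{core} otherwise; this guarantees that, outside an event of probability $O(\epsilon)$, no item has two bidders in its tail, which bounds the cross‑item correlations the optimal mechanism could exploit. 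The decomposition would read $\mathrm{OPT}_n\le\mathrm{TAIL}+\mathrm{CORE}$, where $\mathrm{TAIL}=\sum_j\mathrm{TAIL}_j$ and $\mathrm{TAIL}_j$, the revenue attributable to item $j$'s tail, is controlled (up to lower‑order terms) by the single‑item Myerson revenue $\mathrm{Mye}^{(j)}_n$ (the optimal revenue for selling item $j$ alone to $n$ i.i.d.\ bidders), while $\mathrm{CORE}$, the revenue attributable to core coordinates, is bounded by the core welfare $\sum_j\E\big[\max_{i\le n}v_{ij}\mathbf{1}[v_{ij}<\tau_j]\big]$ and is extractable --- up to a constant in general, and up to a $(1-\epsilon)$ factor once there are enough bidders relative to how concentrated the (bounded) core values are --- by a VCG auction with entry fees as in \cite{CaiDW16}.

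Suppose first $\mathrm{TAIL}\ge(1-\epsilon)\mathrm{OPT}_n$, so $\sum_j\mathrm{Mye}^{(j)}_n\ge(1-O(\epsilon))\mathrm{OPT}_n$. The separate‑Myerson auction with $n'$ bidders earns $\sum_j\mathrm{Mye}^{(j)}_{n'}\ge\sum_j\mathrm{Mye}^{(j)}_n$; if this exceeds $\mathrm{OPT}_n$ we are in the second alternative, so I may assume $\sum_j\big(\mathrm{Mye}^{(j)}_{n'}-\mathrm{Mye}^{(j)}_n\big)\le\epsilon\,\mathrm{OPT}_n$, i.e.\ separate‑selling revenue hardly grows when the population is scaled by $\Theta(1/\epsilon)$. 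Items on which it does grow substantially must have had negligible $\mathrm{Mye}^{(j)}_n$ and can be discarded; for the rest I would invoke a distribution‑free, quantitative Bulow--Klemperer statement: if single‑item Myerson revenue grows by at most a constant factor when the number of i.i.d.\ bidders is multiplied by $\Theta(1/\epsilon)$, then the monopoly‑reserve region is already so crowded that at least two of the $n'$ bidders clear it with probability $1-O(\epsilon)$, whence plain VCG with $n'$ bidders recovers a $(1-O(\epsilon))$ fraction of $\mathrm{Mye}^{(j)}_n$. Since VCG on $m$ additive items decomposes item by item, summing yields $\rev(\vcg_{n'})=\sum_j\rev(\vcg^{(j)}_{n'})\ge(1-O(\epsilon))\sum_j\mathrm{Mye}^{(j)}_n\ge(1-O(\epsilon))\mathrm{OPT}_n$, which is the first alternative after rescaling $\epsilon$. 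When every $D_j$ is regular this step is literally Bulow--Klemperer \cite{BulowK96} applied per item and needs only $n'\ge n+1$.

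Suppose instead $\mathrm{CORE}\ge(1-\epsilon)\mathrm{OPT}_n$ (the general case interpolating between these extremes, so it suffices to treat them). Either the entry‑fee VCG auction with $n'$ bidders beats $\mathrm{OPT}_n$ --- the second alternative --- or it does not, in which case, using $\mathrm{CORE}\le\sum_j\E\big[\max_{i\le n}v_{ij}\mathbf{1}[v_{ij}<\tau_j]\big]\le\sum_j\E\big[\max_{i\le n'}v_{ij}\mathbf{1}[v_{ij}<\tau_j]\big]$ together with the fact that with $n'=\Theta(n/\epsilon)$ bidders this entry‑fee auction extracts a $(1-\epsilon)$ fraction of the core welfare (our choice of $\tau_j$ makes the core values concentrated enough for this), I would sandwich the core welfare with $n$ bidders and with $n'$ bidders both within a $(1\pm O(\epsilon))$ factor of $\mathrm{OPT}_n$. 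Hence for essentially every item the core welfare is barely changed by the $\Theta(1/\epsilon)$‑fold increase in bidders, so already with $n$ bidders the per‑item core maximum sits near the top of its support; then with $n'$ bidders even the second‑highest value sits near the top, and $\rev(\vcg_{n'})=\sum_j\E\big[\snd_{i\le n'}v_{ij}\big]\ge(1-O(\epsilon))\mathrm{OPT}_n$, again the first alternative.

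The hard part --- and the whole point of improving over \cite{FeldmanFR18} --- is to make the core--tail decomposition \emph{lossless}: the standard duality bounds lose constant factors and \cite{FeldmanFR18} loses a further $\log(m/n)$, whereas here every such slack must become $1\pm\epsilon$ while paying only an $O(n/\epsilon)$‑fold increase in the number of bidders. This is exactly what forces the two non‑trivial lemmas above --- the sharp per‑item tail bound $\mathrm{TAIL}_j\lesssim\mathrm{Mye}^{(j)}_n$ and the distribution‑free quantitative Bulow--Klemperer step --- both of which are delicate for non‑regular $D_j$, where virtual values must be ironed and the tidy monopoly‑reserve picture degrades; nailing the dependence of $n'$ on $\epsilon$ in the entry‑fee argument, via the choice of $\tau_j$ and the concentration of the core, is the analogous obstacle in the core regime. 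Finally, for the prior‑independent statement when the $D_j$ are regular: the VCG auction in the first alternative uses no reserves and is already prior‑independent, and in the second alternative every reserve or entry fee can be estimated from the reports of the $\Theta(n/\epsilon)$ additional bidders by a standard ``hold out a sample to price the rest'' argument, at the cost of only a further constant factor in $n'$.
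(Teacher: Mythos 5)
Your plan runs into the precise obstacle the paper is built to circumvent, and the way you sidestep it doesn't hold up. The crux is your assertion in the \core{} case that the entry-fee VCG auction with $n'$ bidders ``extracts a $(1-\epsilon)$ fraction of the core welfare,'' justified only by ``our choice of $\tau_j$ makes the core values concentrated enough for this.'' When $n>1$, the core welfare is $\sum_j \E[\max_{i\le n'} v_{ij}\mathbf{1}[v_{ij}<\tau_j]]$, a sum of per-item maxima across bidders, while the quantity any single bidder's entry fee can target is that bidder's own VCG utility $\sum_j (v_{ij}-\max_{i'\neq i} v_{i'j})^+$. The fee is set before the bidder reports, so it can only charge a lower quantile of this utility; even with bounded core values and Chebyshev, this loses a constant factor, not a $(1-\epsilon)$ factor. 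The paper states this explicitly (end of \S2.3): prior work gets core welfare within a factor $<20$ of simple-auction revenue, and ``getting this constant down all the way to $1$, even with more bidders, seems challenging.'' Your chain of inequalities in the \core{} case (steps: entry-fee revenue $\le \mathrm{OPT}_n$, entry-fee revenue $\ge (1-\epsilon)\cdot$core welfare with $n'$, hence core welfare barely grows) collapses if the constant is $1/20$ instead of $1-\epsilon$: then core welfare with $n'$ could be $20\,\mathrm{OPT}_n$, and no concentration of per-item maxima follows.

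The paper's fix is exactly the observation you needed but didn't make: instead of showing that core welfare barely grows (which would require the false $(1-\epsilon)$ extraction), it shows the \emph{contrapositive at a boosted scale}. Lemma~\ref{lemma:step2} (proved via the maximizer-independence trick of Lemma~\ref{lemma:maxind}) says: if $(1-\epsilon)\rev(n) > \vcg(n')$ then the benchmark with $n$ bidders is at most $\tfrac{1}{20}$ of the benchmark with $n'$ bidders. One then only needs the crude $\le 20\times$ bound on the $n'$-bidder benchmark (Lemma~\ref{lemma:step3}), which is achievable. You have the right intuition that the comparison should be welfare-to-welfare rather than welfare-to-revenue, and even a version of the max/second-max argument in your final ``second-highest sits near the top'' step, but you deploy it to close the wrong gap. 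Separately, the paper also explains why your fixed-threshold core--tail split for $n>1$ is problematic: the decompositions that are known to upper bound $\rev(n)$ in the multi-bidder case (Yao, CDW) use cutoffs that depend on the realized second-highest bid, which breaks the i.i.d.-across-bidders and $n$-independence properties your argument (and the paper's) needs. This is precisely why the paper replaces core--tail with the $\iu$-benchmark built from independently sampled ghost bidders. Your quantitative Bulow--Klemperer claim in the \tail{} case is also stated without proof and would need care for irregular distributions with ironing, but the \core{} gap is the fatal one.
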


Note that aside from Case~\ref{thm:main:0.99} where the optimal revenue with $n'$ bidders is nearly matched by the revenue of a VCG auction with $n$ bidders, \autoref{thm:main} actually promises that simple auctions with $n'$ bidders outperform the optimal auction with $n$ bidders. This is interesting as all known ``hard'' instances for enhanced competition results involve the equal-revenue distribution\footnote{The equal revenue distribution is the distribution defined by $F(x) = 1 - \frac{1}{x}$ for all $x \geq 1$.} for which Case~\ref{thm:main:0.99} does not hold \cite{EdenFFTW17b, FeldmanFR18}. Thus, we outperform the optimal auction in all of these ``hard'' cases.

In fact, when Case~\ref{thm:main:0.99} does not hold, then, at the cost of increasing the total number of bidders by another constant factor, our techniques can also show that a simple auction as in Case~\ref{thm:main:1} with this increased number of bidders obtains much more, say a $100$ times more, revenue than the revenue of the optimal auction with $n$ bidders. As far as we know, this is the first enhanced competition result that not only outperforms the optimal revenue but actually obtains revenue that is significantly larger. At least this stronger version of Case~\ref{thm:main:1} cannot be shown for general distributions, and some condition like Case~\ref{thm:main:0.99} not being true is necessary\footnote{To see why, consider an item such that the values for this item are sampled from a distribution that is supported on the interval $[1,2]$. Even with one bidder, setting a posted price of $1$ achieves revenue $1$, while no number of additional bidders can get revenue larger than $2$.}.

Moreover, this stronger version of \autoref{thm:main} implies that Case~\ref{thm:main:1} can be made to work with any auction that guarantees a constant approximation to the optimal revenue (without enhanced competition). Indeed, the fact that Case~\ref{thm:main:1} obtains $C$ times more revenue than the optimal auction with $n$ bidders (for some $C > 1$) implies that a $C$-approximate auction with $n'$ bidders obtains at least as much revenue as the optimal auction with $n$ bidders\footnote{Thus, we actually show that the two problems: \begin{inparaenum}[(1)] \item \label{item:generalization:1} designing (simple) auctions that guarantee a constant approximation of the optimal revenue, and
\item \label{item:generalization:2} designing (simple) auctions for enhanced competition results 
\end{inparaenum} are equivalent. Indeed, that  (\ref{item:generalization:1}) is necessary for (\ref{item:generalization:2}) is folklore, and (\ref{item:generalization:1}) is sufficient for (\ref{item:generalization:2}) due to the argument above. We could have alternatively presented this equivalence as our main result.}.

Finally, as we discuss in \autoref{sec:overview} below, our proof of \autoref{thm:main} is the same for all values of $m$ and $n$, avoiding the case analysis in \cite{FeldmanFR18} that uses different techniques to prove a claim similar to \autoref{thm:main} in the case $n = 1$ and in the case $n \sim m$.

\paragraph{Prior-independent auctions.} Finally, we mention that even though \autoref{thm:main} matches the optimal revenue with $n$ bidders using a simple auction with $n'$ bidders, the simple auction that it uses in Case~\ref{thm:main:1} is {\em prior-dependent}, {\em i.e.}, the auctioneer needs to know the distribution from which the bidders' sample their valuations in order to run the auction. Dependence on the prior is necessary\footnote{The example to keep in mind is a single-item single-bidder setting where the value of the bidder for the item is sampled from a distribution that, for some $p > 1$, takes the value $0$ with probability $1 - \frac{1}{p}$, and the value $p$ with probability $\frac{1}{p}$. For such a distribution, it is impossible to design an auction with any non-trivial revenue guarantee without the knowledge of $p$.} when there is no other promise on the distribution of the bidders' valuation. However, when these distributions are promised to be regular, there is a long line of work that focuses on developing prior-independent auctions \cite{BulowK96, DevanurHKN11, AzarDMW13,  AzarKW14, GoldnerK16, EdenFFTW17b}. We contribute to this line of work by showing the following prior-independent analogue of \autoref{thm:main}. 

\begin{theorem}[Informal]
\label{thm:mainpi}
Consider an auction setting where $m$ {\em regular} items are being sold to $n$ bidders. Let $\epsilon > 0$ and $n' = O(n / \epsilon)$. At least one of the following hold:
\begin{enumerate}
\item \label{thm:mainpi:0.99}A $(1 - \epsilon)$-fraction of the optimal revenue with $n$ bidders is obtained by a VCG auction with $n'$ bidders.
\item \label{thm:mainpi:1} A prior-independent VCG auction with an entry fee (inspired by \cite{GoldnerK16}) and $n'$ bidders generates more revenue than the optimal auction with $n$ bidders.
\end{enumerate}
\end{theorem}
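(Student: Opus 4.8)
The plan is to reduce the prior-independent regular case to (the proof of) the prior-dependent Theorem~\ref{thm:main}, replacing every prior-dependent ingredient by a prior-independent surrogate that is available because the items are regular. Recall that in the proof of Theorem~\ref{thm:main}, the optimal revenue with $n$ bidders is decomposed into a ``core'' part and a ``tail'' part; the core is captured by a VCG auction with extra bidders (giving Case~\ref{thm:mainpi:0.99} when it dominates), and the tail is captured, item by item, by Myerson's optimal single-item auction run with extra bidders (giving Case~\ref{thm:mainpi:1}). The only genuinely prior-dependent object used in Case~\ref{thm:main:1} is Myerson's reserve-price auction for a single item. So the first step is to replace, for each item $j$, Myerson's optimal auction on that item by the prior-independent single-item auction guaranteed by Bulow--Klemperer~\cite{BulowK96}: running VCG on item $j$ with one extra bidder already beats Myerson's optimal revenue for item $j$ with the original bidder count, whenever the per-item distribution is regular (which it is, by the ``regular items'' hypothesis). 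Doing this for all $m$ items simultaneously costs only a constant blow-up in $n'$ (we run the sub-auctions in parallel on disjoint copies of the bidder pool, or invoke the one-extra-bidder bound coordinate-wise), and yields a prior-independent ``sell separately'' mechanism matching the per-item Myerson benchmark.

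The second step is to make the entry-fee layer prior-independent. In \cite{CaiDW16} the VCG-with-entry-fee mechanism sets, for each bidder, an entry fee tuned to that bidder's value distribution; here we instead follow \cite{GoldnerK16}, which shows how to estimate such entry fees from the reported values of a constant number of additional ``sample'' bidders drawn from the same (regular) distribution. Concretely, I would set aside $O(n/\epsilon)$ extra bidders as a sample, use their reports to form empirical estimates of the relevant quantiles/entry fees, and then run VCG-with-entry-fee on the remaining bidders using these estimated fees; standard concentration for regular distributions (the empirical Myerson reserve, resp.\ the relevant entry-fee statistic, is within a $(1\pm\epsilon)$ factor with the stated sample size) shows the resulting revenue is within a constant factor of the prior-dependent version. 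Combining the two steps, the prior-independent ``Case~\ref{thm:mainpi:1}'' mechanism — prior-independent VCG with an estimated entry fee — collects, up to constants, both the tail revenue (via the Bulow--Klemperer per-item step) and the core revenue (via the estimated entry fee), so whenever the core does not already yield Case~\ref{thm:mainpi:0.99} via plain VCG, this mechanism out-earns the optimal auction with $n$ bidders.

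The third step is bookkeeping: one must check that the constant factors lost to prior-independence (the Bulow--Klemperer ``one extra bidder'' step, the sampling error in the entry fees, and the $m$-fold parallelization) compose into the single $O(n/\epsilon)$ bound, and that the core/tail split from Theorem~\ref{thm:main} still goes through verbatim since that split is a statement about the optimal revenue with $n$ bidders and does not itself use prior knowledge. I expect the main obstacle to be the second step: unlike Bulow--Klemperer, which is a clean one-extra-bidder statement, making the \emph{entry fee} prior-independent for a general (possibly correlated-across-items) regular valuation profile requires the right estimator and a concentration argument that is robust to the heavy tails allowed by regularity — this is exactly the technical heart of \cite{GoldnerK16}, and adapting it to the multi-item entry-fee mechanism of \cite{CaiDW16}, rather than to a single-item reserve, is where the real work lies. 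A secondary subtlety is ensuring that when Case~\ref{thm:mainpi:0.99} fails, the failure is quantitative enough (the core is a bounded fraction of the total) that the estimated-entry-fee mechanism provably strictly exceeds, rather than merely approximates, the $n$-bidder optimum; this should follow by tracking the slack in the core/tail inequality exactly as in the proof of Theorem~\ref{thm:main}.
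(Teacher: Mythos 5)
Your high-level plan --- replace the prior-dependent ingredients by a Bulow--Klemperer step for the separate-item part and a Goldner--Karlin sample-bidder step for the entry-fee part --- is directionally aligned with the paper, and both moves do appear there (Bulow--Klemperer as \autoref{prop:bk}, and the single special-bidder entry fee in \autoref{lemma:bvcgcorepi}).

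There is, however, a serious structural gap upstream. You write that the proof of \autoref{thm:main} ``decomposes the optimal revenue with $n$ bidders into a core part and a tail part'' in the sense of \cite{FeldmanFR18, LiY13} and that this split ``still goes through verbatim.'' The paper explicitly \emph{abandons} the FFR/Li--Yao core-tail decomposition: for $n>1$ bidders the core becomes a sum of per-item maxima over bidders, which cannot be realized as the revenue of a grand-bundle sale or any simple auction --- this is precisely the obstacle discussed in \autoref{sec:difficulty} that forces the switch to a duality benchmark. The paper instead defines the $\iu$-benchmark $\ub(n,n')$ (whose virtual values are determined by $n'-1$ independently sampled ghost bidders), proves $\rev(n)\le\ub(n,n')$ in \autoref{lemma:step1}, and then --- the crucial step that is entirely absent from your proposal --- proves the ``boost'' lemma, \autoref{lemma:step2}: if Case~\ref{thm:mainpi:0.99} fails, then $\ub(n,n')\le\tfrac{1}{20}\,\ub(n',n')$. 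The ghost-bidder construction is engineered precisely so that the symmetry argument in \autoref{lemma:step2} goes through; it is not an incidental detail. Without an analogue of this lemma there is no way to turn the failure of Case~\ref{thm:mainpi:0.99} into a strict improvement over the $n$-bidder optimum, which you flag as a ``secondary subtlety'' but which is in fact the heart of the argument.

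On the prior-independence layer itself, your proposed mechanism is also heavier than what is needed. You would set aside $O(n/\epsilon)$ sample bidders, form empirical estimates of quantiles and entry fees, and argue these are within a $(1\pm\epsilon)$ factor by concentration for regular distributions. The paper's \autoref{lemma:bvcgcorepi} instead adds a \emph{single} special bidder $s$ and sets bidder $i$'s entry fee to be the VCG utility $s$ would have earned against $w_{-i}$; since $s$ and $w_i$ are i.i.d., symmetry alone gives that the fee is paid with constant probability, and a Chebyshev bound on the capped utility (\autoref{lemma:variancecaputil}) finishes the argument. No empirical reserve estimation or regularity-specific concentration is used --- regularity enters only through the Bulow--Klemperer step $\srev(n')\le\vcg(n'+1)$. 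In short, the genuine gap is not in the prior-independence layer but in the missing $\iu$-benchmark and boost lemma: the FFR core-tail scaffolding you propose to reuse verbatim is exactly what the paper had to abandon to handle $n>1$.
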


It is important to mention here that in general, the better of two prior-independent mechanisms is not necessarily prior-independent. However, the stronger version of \autoref{thm:main} also applies to \autoref{thm:mainpi} (along with the other properties mentioned after \autoref{thm:main}) and implies that a prior-independent auction actually follows from \autoref{thm:mainpi}. This is because, when the constant $C$ above equals $\frac{1}{\epsilon}$, one can run the auction in Case~\ref{thm:mainpi:0.99} with probability $1 - \epsilon$ and the auction in Case~\ref{thm:mainpi:1} with probability $\epsilon$ and be guaranteed a $(1 - 2\epsilon)$-fraction of the optimal revenue with $n$ bidders in either case.

Lastly, we can even extend \autoref{thm:mainpi} to certain settings of irregular distributions. Specifically, it is possible to combine our proof of \autoref{thm:mainpi} with ideas from \cite{SivanS13} to get analogous claims for the form of irregular distributions considered there.

\subsection{Related Work}
\label{sec:intro:related}

Besides the works mentioned above, our work is also related to the following works.

\paragraph{Enhanced competition results with $n' = \omega(n)$.} The focus of the current paper is getting enhanced competition results with $n' = O(n)$. However, there is a long line of work focusing on getting enhanced competition results with larger $n'$. Among the first such works were those of \cite{RoughgardenTY20} and \cite{EdenFFTW17b} which show the bounds $n' = m$ and $n' = O(n+m)$ for unit-demand and additive bidders respectively. These works were followed by \cite{FeldmanFR18} and \cite{BeyhaghiW19} which improve these bounds to $n' = n \cdot O(\log \frac{m}{n})$ for additive bidders. We remark that \cite{FeldmanFR18} focuses on getting a $(1 - \epsilon)$-fraction of the optimal revenue with $n$ bidders while all the other works outperform the optimal revenue with $n$ bidders.

One key difference between the current work and the foregoing works is that all of them focus on upper bounding the optimal revenue with $n$ bidders by the revenue of an auction that sells the items separately with $n'$ bidders, while we also consider VCG auctions with an entry fee. It is known that when restricting attention to auctions that sell the items separately, one cannot get a bound better than $n' = n \cdot \Omega(\log \frac{m}{n})$ \cite{FeldmanFR18, BeyhaghiW19}. Thus, these works cannot hope to get $n' = O(n)$ like we do.

\paragraph{The duality framework.} We prove \autoref{thm:main} and \autoref{thm:mainpi} using the duality framework of \cite{CaiDW16}. In this work, \cite{CaiDW16} view the problem of finding the optimal revenue as a linear program, and analyze it in terms of its Lagrangian dual. The duality framework shown in this work is extremely general, and in particular, is the first one that also applies to multi-bidder settings. In fact, it also applies to settings beyond the additive bidder setting we consider in this paper but we shall not need those ideas. 

The \cite{CaiDW16} framework has been used in a lot of subsequent work on getting enhanced competition results in particular and approximately revenue optimal mechanisms in general. Examples include \cite{LiuP18, CaiZ17, EdenFFTW17b, EdenFFTW17a, BrustleCWZ17, DevanurW17, FuLLT18, BeyhaghiW19}. Our duality based proof also uses tools and ideas from \cite{Ronen01, GoldnerK16}.

\subsection{Our Techniques}
\label{sec:intro:tech}

We now summarize the most important ideas in this work, focusing solely on \autoref{thm:main}. A more comprehensive overview can be found immediately below in \autoref{sec:overview}.

As mentioned in \autoref{sec:intro:related} above, several works have studied how many bidders are necessary for the revenue obtained by selling the items separately to surpass the optimal revenue from selling to $n$ bidders. Due to these works, we now know that the answer is $n' = n \cdot \Theta(\log\frac{m}{n})$, and thus any result that works for $n' = O(n)$ (when $n \ll m$) must use auctions other than just selling the items separately. 

The only such enhanced competition result in the literature is by \cite{FeldmanFR18} and shows \autoref{thm:main} when $n = 1$. The proof proceeds by first bounding the optimal revenue when selling to a single bidder by the core-tail decomposition of \cite{LiY13}. The resulting bound is the sum of the welfare from the bidders with a ``low'' value for the items (the \core{}) and the revenue from the bidders with a ``high'' value for the items (the \tail{}). The next step is to upper bound the sum of \core{} and \tail{} by the revenue of a simple auction with $n' = O(1)$ bidders.

The term \tail{} turns out to be easier to bound than \core{}, and the reason is that \core{} corresponds to the {\em welfare} of some distribution and bounding it in terms of the {\em revenue} of some other distribution is like comparing apples to oranges. We get around this problem by adopting a different approach that tries to bound the welfare term coming from the \core{} with another welfare term. Specifically, our main lemma, that works for any $n$ and any distribution, shows that if Case~\ref{thm:main:0.99} of \autoref{thm:main} does not hold, then, the welfare with $n$ bidders can be upper bounded by, say, a $\frac{1}{20}$ fraction of the welfare with $n' = O(n)$ bidders.

Unfortunately, even though \core{} corresponds to the welfare of some distribution and our main lemma applies to welfare of any distribution, it cannot be applied to \core{} as it assumes that the values of all the bidders are drawn independently and identically from the same distribution, a property that \core{} does not satisfy in general. This turns out to be a major obstacle and to get around it, we have to start from scratch. This time, instead of starting with the core-tail decomposition of \cite{LiY13}, we start with the virtual welfare based upper bound on the revenue from \cite{CaiDW16}. We are able to extend our main lemma to this notion of virtual welfare, and show that unless Case~\ref{thm:main:0.99} of \autoref{thm:main} holds, the virtual welfare with $n$ bidders can be upper bounded by a $\frac{1}{20}$ fraction of the virtual welfare with $n' = O(n)$ bidders.

To finish the proof, we use techniques from \cite{CaiDW16} to upper bound the virtual welfare with $n'$ bidders by $20$ times the revenue of simple auctions, {\em i.e.} either selling the items separately or through a VCG auction with an entry fee, with $n'$ bidders.

\subsection{Acknowledgements} The authors are grateful to Matt Weinberg for suggesting the problem and for helpful discussions at multiple stages of this work. We also thank him for his comments on an earlier draft of this manuscript. Last but not the least, we thank the anonymous reviewers for their helpful comments that greatly improved the paper.


\section{Technical Overview} 
\label{sec:overview}

In this section, we cover the main ideas behind the proof of \autoref{thm:main} and \autoref{thm:mainpi}. As the proofs have significant overlap, we focus only on \autoref{thm:main} for the most part.

\subsection{The \cite{FeldmanFR18} Result}
\label{sec:ffr}

As mentioned above, the work of \cite{FeldmanFR18}, showed, amongst other results, that \autoref{thm:main} holds in the case $n = 1$, {\em i.e.}, when there is only one bidder. In this special case, \autoref{thm:main} reduces to showing that 99\% of the revenue can be obtained by either selling the items separately, or selling the grand bundle, to a (large enough) constant number of bidders. The main tool in this result of \cite{FeldmanFR18} is the core-tail decomposition of \cite{LiY13}.

\paragraph{Core-tail decomposition.} The key insight in the core-tail decomposition framework is to set a cutoff $t_j$ for each item $j \in [m]$ and reason separately about the case where the values of the bidders for item $j$ are ``low'', {\em i.e.} at most $t_j$, and when they are ``high'', {\em i.e.}, more than $t_j$. We say that an item $j$ is in the \core{} if its value is low, and say that it is in the \tail{} otherwise. Core-tail decomposition says that, for any choice of the cutoffs $t_j$, the optimal revenue $\rev$ is at most the optimal {\em welfare} from the items in the \core{} and the optimal {\em revenue} from the items in the \tail{}. 

Thus, in order to show that 99\% of the revenue is upper bounded by either the revenue obtained by selling separately, or that obtained by selling the grand bundle to a constant number of bidders, it is sufficient to show that, for some choice of the cutoff, the sum total of the welfare from the \core{} and the revenue from the \tail{} is also upper bounded by the maximum of selling separately or selling the grand bundle to a constant number of bidders. The specific cutoff in \cite{FeldmanFR18} is involved, and for simplicity in this overview, we shall assume that the cutoff is just a large, say the $(1- \delta)$-th for some small $\delta > 0$, quantile.

Bounding the revenue from the \tail{} is the easy part, and uses the observation that the probability that a given item is in the \tail{} is at most $\delta$. In fact, one can show that the revenue from the \tail{} is only a small fraction, say $\sqrt{\delta}$, of the revenue obtained by selling the items separately to a constant number of bidders. 

\paragraph{Bounding the {\core{}}.} The hard part is to bound the welfare from the \core{}, and it is here that the $n=1$ assumption is crucially used. When $n = 1$, the welfare is simply the sum of the bidder's value for all the items in the \core{}. Being the sum of independent random values that are bounded (below the cutoff), one would expect that the welfare from the \core{} would be reasonably well concentrated in an interval of size roughly equal to the standard deviation.

It turns out that, with the right choice of the cutoffs, the standard deviation is much smaller than the revenue obtained by selling the items separately to a constant number of bidders. This implies that either the expected welfare of the \core{} is at most the revenue obtained by selling the items separately to a constant number of bidders, or one can expect (with at least some small constant probability) that the sum of values of a bidder for all the items is very close to the welfare from the \core{}.

In the latter case, when there are many bidders, it is very likely that there {\em exists} a bidder whose total value for all the items is around the welfare from the \core{}. Thus, an auction that sells the grand bundle around this price will likely generate revenue equal to the welfare from the \core{}, proving the result.

\subsection{Difficulty in Extending to Multiple Bidders}
\label{sec:difficulty}

Interestingly, the core-tail decomposition framework used in \cite{FeldmanFR18} extends, with some changes (see \autoref{sec:overviewmultiplebidder}), to the $n > 1$ case. Moreover, the analysis of the \tail{} can be done in essentially the same way, and the only part of the argument above that does not extend to the $n > 1$ case is bounding the \core{}. 

Specifically, what breaks is that the welfare of the \core{} is no longer the sum over all items of the value the (only) bidder has for the item. Instead, the welfare in the $n > 1$ case equals the sum over all items of the {\em maximum} value that any bidder has for the item. Thus, even if one can somehow show that it is highly concentrated around some value, say $x$, it is not clear how to design an auction with a constant-factor more bidders whose revenue is at least $x$. 

In particular, selling the grand bundle at price $x$ does not work, as it is no longer guaranteed that there will be a bidder whose value for the grand bundle will be around $x$. All the concentration of the welfare buys us in this case is that if we take the sum of the maximum value of $n$ bidders for all the items, we are likely to get $x$. However, it is not clear how to realize this maximum value as the revenue of a simple auction.

\subsection{Our Approach}
\label{sec:approach}

 The difficulty described above is serious, as we are trying to upper bound the welfare by the revenue, when it is known that the former may be arbitrarily larger than the latter in the worst case. What is supposed to save us is that the distribution of the values in the \core{} is not a general distribution as, for example, its support is upper bounded by the cutoff. Using this upper bound on the support, prior work \cite[amongst others]{Yao15, CaiDW16} has shown that the welfare of the \core{} is at most a constant (actually, $< 20$) times the revenue of a (simple) auction. However, getting this constant down all the way to $1$, even with more bidders seems challenging.

Our way around this difficulty is to upper bound the welfare of the \core{} in two steps: \begin{inparaenum}[(1)]
\item \label{item:corebound1} Firstly, we show that if ($99\%$ of the) \core{} is larger than revenue obtained by selling separately to a constant-factor larger number of bidders, then the welfare of the \core{} can be bounded by a small constant (say, $\frac{1}{20}$) of the welfare of the \core{} with a constant-factor larger number of bidders. The hope is that proving such a bound is easier as it requires bounding the welfare term by another welfare term.
\item \label{item:corebound2} Next, we invoke the known results about the welfare of the \core{} being at most $20$ times the revenue of simple auctions on the \core{} with the larger number of bidders to get our result about the \core{} with the smaller number of bidders. 
\end{inparaenum}

Overall, this scheme will give us that if ($99\%$ of the) \core{} is larger than the revenue obtained by selling separately to a constant-factor larger number of bidders, then the welfare of the \core{} can be bounded by the revenue of a simple auction with constant-factor larger bidders, as desired.

\subsection{Our Proof When $n = 1$}
\label{sec:overviewsinglebidder}
Step~\ref{item:corebound2} of our two-step approach above follows from known results, and in this subsection, we show Step~\ref{item:corebound1}. We note that our proof for this part does not even require the welfare to come from the \core{} and it actually works for any distribution. In fact, it does not even require $n = 1$ and works for all $n$, but as we shall explain in \autoref{sec:overviewmultiplebidder}, it will only fit in our overall framework when $n=1$.

 The main idea, for the one item case, is captured in the following informal lemma. Using linearity of expectation and the fact that the bidders are additive, a similar lemma can also be shown for the multi-item case.
\begin{lemma}[Informal]
\label{lemma:boostcore} 
Let $\epsilon > 0$. Consider a single item and $n$ bidders each of whose value for the item are sampled from a distribution $D$. If $(1- \epsilon)$ times the welfare of the $n$ bidders is larger than the revenue generated from a second price auction with $\frac{20n}{\epsilon}$ bidders, then, the welfare with $n$ bidders is at most $\frac{1}{20}$ times the welfare with $\frac{20n}{\epsilon}$ bidders.
\end{lemma}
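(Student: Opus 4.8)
The plan is to reduce the claim to two elementary facts about order statistics, linked by a partitioning trick. Write $N = 20n/\epsilon$ and, for a positive integer $k$, let $W(k) := \E\!\left[\max_{i \in [k]} v_i\right]$ denote the expected welfare with $k$ i.i.d.\ bidders drawn from $D$, and $S(k)$ the expected revenue of the second price auction with $k$ such bidders, i.e.\ the expectation of the second largest of $k$ i.i.d.\ samples from $D$. First I would partition the $N = (20/\epsilon)\cdot n$ bidders into $k := 20/\epsilon$ groups of $n$ bidders each (assuming $20/\epsilon$ is an integer; otherwise round, which affects only constants). Letting $M_1,\dots,M_k$ be the within-group maxima, each $M_g$ is distributed as the maximum of $n$ i.i.d.\ samples from $D$, the $M_g$ are mutually independent, and the overall maximum of the $N$ values equals $\max_g M_g$. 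Hence $W(N) = \E\!\left[\max_{g \in [k]} M_g\right]$, while $\E[M_g] = W(n)=:A$ for each $g$.

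The key observation is that the hypothesis on $S(N)$ should be rephrased as a statement about the \emph{second largest} of the $M_g$. Order the group maxima as $M_{(1)} \ge M_{(2)} \ge \dots \ge M_{(k)}$. The largest value among all $N$ bidders is $M_{(1)}$, attained by some bidder in some group $g^\star$; every bidder outside $g^\star$ has value at most the overall runner-up value, and $M_{(2)}$ is the maximum of some group other than $g^\star$, hence at most that runner-up value. Taking expectations, $S(N) \ge \E[M_{(2)}]$. Combining with the hypothesis $(1-\epsilon)W(n) > S(N)$ gives $\E[M_{(2)}] < (1-\epsilon) A$.

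To finish, I would use the pointwise inequality $\sum_{g=1}^{k} M_{(g)} \le M_{(1)} + (k-1)M_{(2)}$, valid since $M_{(3)},\dots,M_{(k)} \le M_{(2)}$. Taking expectations and using $\E\!\left[\sum_g M_g\right] = kA$ and $\E[M_{(1)}] = W(N)$,
\[
kA \;\le\; W(N) + (k-1)\,\E[M_{(2)}] \;<\; W(N) + (k-1)(1-\epsilon)A ,
\]
so $W(N) > A\bigl(k - (k-1)(1-\epsilon)\bigr) = A\bigl(1 + (k-1)\epsilon\bigr) = A(21-\epsilon) \ge 20A = 20\,W(n)$, using $k = 20/\epsilon$ and $\epsilon \le 1$ (one may assume $\epsilon<1$, else the hypothesis is vacuous). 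This is exactly $W(n) \le \frac{1}{20}W(N)$.

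I do not expect a serious obstacle in this lemma once the partitioning viewpoint is adopted: the only real content is (i) recognizing that $S(N)$ lower-bounds the expected second order statistic of the group maxima $M_g$, which is what converts the second price hypothesis into a usable inequality, and (ii) the telescoping bound $\sum_g M_{(g)} \le M_{(1)} + (k-1)M_{(2)}$. The multi-item version asserted in the paper then follows by applying the argument coordinate-by-coordinate and summing, using additivity of the bidders and linearity of expectation; no property of the core cutoffs is needed here. The genuine difficulty, as the overview explains, lies not in this lemma but in arranging the overall proof so that the distribution to which it is applied is i.i.d.\ across bidders — which the core distribution fails to be when $n>1$.
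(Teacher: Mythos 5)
Your proof is correct and takes a genuinely different route from the paper's. The paper picks out the first $n$ of the $n'$ bidders and conditions on whether the overall arg\nobreakdash-max lies among them (an event of probability $n/n' = \epsilon/20$); when it does not, the welfare of the first $n$ is dominated by the overall second-highest value, and when it does, it equals the welfare with all $n'$. Removing the conditioning in the latter case requires the small but non-obvious fact that the maximum value is independent of the index achieving it (formalized as \autoref{lemma:maxind}). Your argument instead partitions the $n'$ bidders into $k = 20/\epsilon$ groups of size $n$, uses that the second-highest group maximum $M_{(2)}$ is pointwise at most the overall runner-up (so $S(n') \ge \E[M_{(2)}]$), and telescopes via $\sum_g M_{(g)} \le M_{(1)} + (k-1)M_{(2)}$. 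This sidesteps the independence lemma entirely, gives a marginally sharper constant ($21-\epsilon$ in place of $20$), and is arguably the more elementary calculation; the minor cost is an integrality assumption on $20/\epsilon$ that the paper's ratio $n/n'$ handles for free. Your closing remark is also accurate: the bound $S(n') \ge \E[M_{(2)}^{\Phi}]$ survives the passage to $\iu$-virtual values, since with $g^{**}$ the group of the top \emph{actual} bidder one has $M_{(2)}^{\Phi} \le \max_{g \ne g^{**}} M_g^{\Phi} \le \max_{g \ne g^{**}} M_g^{v} \le v_{(2)}$ using $\Phi \le v$, so the grouping argument would give an alternative proof of \autoref{lemma:step2} as well, even though the paper proceeds via the conditioning argument.
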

\begin{proof}
Define $n' = \frac{20n}{\epsilon}$ for convenience, and let $v_i$ for $i \in [n']$ be the value of the $i^{\text{th}}$ bidder for the item. As we assume our bidders to be independent and identically distributed, this is just an independent sample from $D$. 

Next, note that as welfare is just the maximum value of the item, we have that the expected welfare with $n$ bidders is just the expected value of $\max_{i \in [n]} v_i$ while the expected welfare with $n'$ bidders is just the expected value of $\max_{i \in [n']} v_i$. Now if the maximizer (we disregard all issues about tie-breaking in this informal lemma and assume that the maximizer is unique) on $v_i$ over all $n'$ bidders lies in the first $n$ bidders, an event (which we denote by $E$) that happens with probability $\frac{n}{n'} = \frac{\epsilon}{20}$ as the bidders are identically distributed, then the maximum value amongst the first $n$ bidders equals that amongst all the $n'$ bidders. Otherwise, the maximum value amongst the first $n$ is at most the second highest value amongst all the $n'$ bidders. We get that:
\[
\E\sq*{ \text{welfare with $n$} } \leq \frac{\epsilon}{20} \cdot \E\sq*{ \text{welfare with $n'$} \mid E } + \Pr\paren*{ \overline{E} } \cdot \E\sq*{ \text{$\snd$ highest value with $n'$} \mid \overline{E} } .
\]
Now, using basic conditional probability, we can upper bound the second term by the expected second highest value overall, which is just the revenue of a second price auction with $n'$ bidders. We get:
\[
\E\sq*{ \text{welfare with $n$} } \leq \frac{\epsilon}{20} \cdot \E\sq*{ \text{welfare with $n'$} \mid E } + \text{Revenue from $\snd$ price auction} .
\]
As we assume that $(1- \epsilon)$ times the welfare of the $n$ bidders is larger than the revenue generated from a second price auction, we get:
\[
\epsilon \cdot \E\sq*{ \text{welfare with $n$} } \leq \frac{\epsilon}{20} \cdot \E\sq*{ \text{welfare with $n'$} \mid E } .
\]
To finish, we remove the conditioning on $E$ noting that the expected value of the maximum is independent of where the maximizer is. This gives:
\[
\E\sq*{ \text{welfare with $n$} } \leq \frac{1}{20} \cdot \E\sq*{ \text{welfare with $n'$} } ,
\]
as claimed in the lemma.
\end{proof}

\subsection{The $n > 1$ Case}
\label{sec:overviewmultiplebidder}

As mentioned above, \autoref{lemma:boostcore} is very general and works for all distributions and all $n$. However, it does require that: \begin{inparaenum}[(i)]
\item \label{item:coreprop1} The distribution for different bidders are independent and identical, so that the probability that $E$ happens is $\frac{n}{n'}$.
\item \label{item:coreprop2} The distribution does not depend on the number $n'$ of bidders participating in the auction, so that the distribution with $n$ bidders is the same as the distribution of the first $n$ bidders when there are actually $n'$ bidders in total.
\end{inparaenum} 

Both these properties hold for the $n = 1$ case, when the distribution of the \core{} for all the bidders is determined by the same cutoff that is not a function of the total number of bidders in the auction. However, core-tail type arguments for the multiple bidder ($n > 1$) case often have a more involved cutoff, that may be different for each bidder, and may also depend on the second highest value (amongst all the bidders) of the item concerned \cite{Yao15, CaiDW16}. As the distribution of the second highest value depends on the number of bidders in the auction, the resulting distribution of the \core{} does not satisfy either of the two properties above. 

Due to these complications in the core-tail framework for the $n > 1$ case, we are forced to adopt a different approach that avoids the core-tail decomposition framework altogether! This becomes possible only because our proof of \autoref{lemma:boostcore} works for all distributions, and not just those that correspond to the core of some other distribution. The goal now is to apply \autoref{lemma:boostcore} on a carefully chosen notion of `virtual welfare'. This notion of virtual welfare must satisfy the following properties in addition to Properties~\ref{item:coreprop1} and~\ref{item:coreprop2} above: \begin{inparaenum}[(i)]\setcounter{enumi}{2}
\item \label{item:coreprop3} It must be an upper bound on the optimal revenue, so that upper bounding it using \autoref{lemma:boostcore} also gives a bound on the optimal revenue.
\item \label{item:coreprop4} It must be within a constant factor of the revenue of a (simple) auction so that, after applying \autoref{lemma:boostcore}, we can upper bound the virtual welfare with $n'$ bidders by the revenue of the simple auction. 
\item \label{item:coreprop5} The virtual values must be at most the corresponding values, so that the expected second highest virtual value is at most the revenue of a second price auction. 
\end{inparaenum}

\paragraph{Duality based virtual values.} To construct such a virtual value function, we use the duality framework of \cite{CaiDW16}. In fact, the work \cite{CaiDW16} itself defines a virtual value function that satisfies Properties~\ref{item:coreprop3},~\ref{item:coreprop4}, and~\ref{item:coreprop5} above but does not satisfy Properties~\ref{item:coreprop1} and~\ref{item:coreprop2}. As a result, we cannot use the results shown in \cite{CaiDW16} as a black box, but are able to suitably adapt them for our purposes.

More specifically, we define a new duality-based revenue benchmark that we call the {\em independent utilities}, or the $\iu$-benchmark. This (randomized) benchmark is parameterized by an integer $k$\footnote{In the actual proof, we set $k = n'$.} and is defined as follows: Consider a bidder $i \in [n]$ and suppose that bidder $i$ has value $v_{i,j}$ for item $j \in [m]$. For each such bidder, we sample valuations for $k-1$ ``ghost''-bidders and simulate a second price auction with bidder $i$ and the ghost bidders. If item $j$ gets him the highest (breaking ties lexicographically) non-negative utility in this auction, the virtual value of bidder $i$ for item $j$ is the Myerson's (ironed) virtual value corresponding to $v_{i,j}$. Otherwise, it is equal to $v_{i,j}$. 

As the ghost bidders are sampled independently and identically for each bidder (and also independently of $n$), our virtual value function satisfies Properties~\ref{item:coreprop1} and~\ref{item:coreprop2}. Moreover, as, like \cite{CaiDW16}, it is based on the utilities obtained in a second price auction, it is close enough to \cite{CaiDW16} to retain Properties~\ref{item:coreprop3},~\ref{item:coreprop4}, and~\ref{item:coreprop5}, and we can finish our proof of \autoref{thm:main}.

\paragraph{Proving \autoref{thm:mainpi}.} We show \autoref{thm:mainpi} using the same framework. The only change is that Property~\ref{item:coreprop4} needs to replaced by a prior-independent version. Namely, we want:\begin{inparaenum}[(i*)]\setcounter{enumi}{3}
\item \label{item:coreprop4star} The virtual welfare must be within a constant factor of the revenue of a (simple) prior independent auction.
\end{inparaenum} 
We show that the \iu{}-virtual welfare defined above also satisfies this property. For this, we take inspiration from \cite{GoldnerK16} and get prior-independence by using the bids of one of the bidders to get some estimate of the prior distribution. Adapting the proof in \cite{GoldnerK16} to \iu{}-virtual welfare is done in \autoref{lemma:step3alt}.

\subsection{Organization}
\label{sec:org}

For readers not familiar with duality or this line of work, we overview all the necessary definitions in \autoref{sec:prelim}. All our definitions and notations defined in \autoref{sec:prelim} are standard, so expert readers can jump to \autoref{sec:main} without losing continuity. It is in \autoref{sec:main} and \autoref{sec:cdw} that we prove our main result, and specifically, \autoref{lemma:step2} is the analogue of \autoref{lemma:boostcore} for \iu{}-virtual welfare. Finally, \autoref{app:standard} and \autoref{app:CDW} have the proofs of some standard lemmas that are used in \autoref{sec:main} and \autoref{sec:cdw}.


\section{Preliminaries} \label{sec:prelim}

We use $\reals$ to denote the set of real numbers and $\reals^{+}$ to denote the set of non-negative real numbers. For a real number $x$, we use $x^+$ to denote $\max(x,0)$. For $k, m > 0$ and $w \in \paren*{ \reals^m }^k$, we use $\max(w) \in \reals^m$ to denote the vector obtained by taking the coordinate wise maximum of $w$. That is, for all $j \in [m]$, the $j^{\text{th}}$ coordinate of $\max(w)$, which shall be denote by $\max(w)\vert_j$, is $\max(w)\vert_j = \max_{i \in [k]} w_{i,j}$.

\subsection{Probability Theory} 
\label{sec:prelimprob}

Let $\V$ be a finite set and $\D$ be a probability distribution over $\V$. Let $f : \V \to [0,1]$ denote the probability mass function of $\D$, {\em i.e.}, for all $x \in \V$, we have $f(x) = \Pr_{y \sim \D}(y = x)$. 

\paragraph{Expectation and Variance.} For a function $g : \V \to \reals$, the expectation of $g(\cdot)$ is defined as:
\[
\E_{x \sim \D}\sq*{ g(x) } = \sum_{x \in \V} f(x) \cdot g(x) .
\]
The variance of $g(\cdot)$ is:
\[
\Var_{x \sim \D}\paren*{ g(x) } = \E_{x \sim \D}\sq*{ \paren*{ g(x) - \E_{x \sim \D}\sq*{ g(x) } }^2 } = \E_{x \sim \D}\sq*{ \paren*{ g(x) }^2 } - \paren*{ \E_{x \sim \D}\sq*{ g(x) } }^2 .
\]
We omit $x \sim \D$ from the above notations when it is clear from the context. 

\paragraph{Independence.} We say two functions $g_1, g_2 : \V \to \reals$ are independent of each other if for all $a, b \in \reals$, we have
\[
\Pr_{x \sim \D}\paren*{ g_1(x) = a, g_2(x) = b } = \Pr_{x \sim \D}\paren*{ g_1(x) = a } \cdot \Pr_{x \sim \D}\paren*{ g_2(x) = b } .
\]
For $k > 0$, we say that $k$ functions $g_1, g_2, \cdots, g_k: \V \to \reals$ are (pairwise) independent if $g_i$ and $g_j$ are independent for all $i \neq j$. 

\paragraph{Standard lemmas.} The following are some standard facts and lemmas that we shall use:

\begin{fact}
\label{fact:variance} 
For any function $g(\cdot)$, we have:
\begin{enumerate}
\item \label{item:variancebound} Bounds on variance: 
\[
0 \leq \Var_{x \sim \D}\paren*{ g(x) } \leq \E_{x \sim \D}\sq*{ \paren*{ g(x) }^2 } .
\]
\item \label{item:chebyshev} Chebyshev's inequality: For all $a \geq 0$, we have
\[
\Pr_{x \sim \D}\paren*{ \abs*{ g(x) - \E\sq*{ g(x) } }  \geq a } \leq \frac{ \Var\paren*{ g(x) } }{ a^2 } .
\]
\item \label{item:varianceind} Linearity of variance assuming independence: For all $k > 0$ and all $g_1, g_2, \cdots, g_k: \V \to \reals$ that are pairwise independent, we have
\[
\Var\paren*{ \sum_{i = 1}^k g_i(x) } = \sum_{i = 1}^k \Var\paren*{ g_i(x) } .
\]
\end{enumerate}
\end{fact}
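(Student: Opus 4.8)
The plan is to verify each of the three parts directly from the definitions of expectation and variance given above, in the stated order. For Part~\ref{item:variancebound}, I would use the first expression for the variance, $\Var(g(x)) = \E[(g(x) - \E[g(x)])^2]$. Since $(g(x) - \E[g(x)])^2 \ge 0$ for every $x \in \V$ and the probability mass function $f$ is nonnegative, the defining sum is a sum of nonnegative terms, so $\Var(g(x)) \ge 0$. For the upper bound, switch to the second expression, $\Var(g(x)) = \E[(g(x))^2] - (\E[g(x)])^2$, and note that $(\E[g(x)])^2 \ge 0$; dropping this subtracted term only increases the value, giving $\Var(g(x)) \le \E[(g(x))^2]$.

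For Part~\ref{item:chebyshev}, the natural route is to first establish Markov's inequality for a nonnegative function $Y : \V \to \reals^{+}$ and any $c > 0$: restricting the sum defining $\E[Y]$ to the set $\{x : Y(x) \ge c\}$ and then bounding each surviving summand below by $c \cdot f(x)$ yields $\E[Y] \ge c \cdot \Pr(Y(x) \ge c)$. I would then apply this with $Y(x) = (g(x) - \E[g(x)])^2$ and $c = a^2$, which is valid for $a > 0$ (the case $a = 0$ is vacuous, since the right-hand side is then $+\infty$). Because the event $|g(x) - \E[g(x)]| \ge a$ coincides with $Y(x) \ge a^2$, this gives $\Pr(|g(x) - \E[g(x)]| \ge a) = \Pr(Y(x) \ge a^2) \le \E[Y]/a^2 = \Var(g(x))/a^2$.

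For Part~\ref{item:varianceind}, expand $\Var\left(\sum_{i=1}^k g_i\right) = \E\left[\left(\sum_i g_i\right)^2\right] - \left(\E\left[\sum_i g_i\right]\right)^2$ using linearity of expectation on both terms. Writing $\left(\sum_i g_i\right)^2 = \sum_i g_i^2 + \sum_{i \ne j} g_i g_j$ and $\left(\sum_i \E[g_i]\right)^2 = \sum_i (\E[g_i])^2 + \sum_{i \ne j} \E[g_i]\E[g_j]$ and subtracting term by term leaves $\sum_i \Var(g_i) + \sum_{i \ne j}\big(\E[g_i g_j] - \E[g_i]\E[g_j]\big)$. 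The one substantive step is to show each cross term vanishes: from the definition of independence, $\E[g_i g_j] = \sum_{a,b} ab\,\Pr(g_i=a, g_j=b) = \sum_{a,b} ab\,\Pr(g_i=a)\Pr(g_j=b) = \E[g_i]\E[g_j]$ whenever $i \ne j$.

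None of the three parts is genuinely difficult; the only points requiring attention are that Markov's inequality is not quoted earlier and so must be derived inline from the finite-support definition of expectation, and that in Part~\ref{item:varianceind} one should note that pairwise (rather than full mutual) independence already suffices, precisely because expanding the square produces only pairwise products $g_i g_j$ with $i \ne j$, which is exactly the range over which the hypothesis applies.
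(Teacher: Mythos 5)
Your proof is correct. The paper itself states \autoref{fact:variance} as a standard fact and gives no proof, so there is nothing to compare against directly; your derivation of all three parts from the finite-support definitions in \autoref{sec:prelimprob} --- nonnegativity and dropping $(\E[g])^2$ for Part~\ref{item:variancebound}, an inline Markov's inequality applied to $(g-\E[g])^2$ for Part~\ref{item:chebyshev}, and expanding the square with cross terms vanishing under pairwise independence for Part~\ref{item:varianceind} --- is exactly the textbook argument one would insert if a proof were included, and your remark that pairwise independence suffices (since only products $g_i g_j$ with $i \neq j$ appear) correctly matches the hypothesis as the paper phrases it.
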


\begin{lemma}[\cite{CaiDW16}, Lemma $37$\footnote{All theorem numbers from \cite{CaiDW16} are from the arXiv version \href{https://arxiv.org/abs/1812.01577}{\tt https://arxiv.org/abs/1812.01577 }}, {\em etc.}]
\label{lemma:variance-ub}
For all functions $g: \V \to \reals^{+}$, it holds that:
\[
\Var_{x \sim \D}\paren*{ g(x) } \leq \E_{x \sim \D}\sq*{ \paren*{ g(x) }^2 } \leq 2 \cdot \paren*{ \max_{x \in \V} g(x) \cdot \Pr_{y \sim \D}\paren*{ g(y) \geq g(x) } } \cdot \paren*{ \max_{x \in \V} g(x) } .
\]
\end{lemma}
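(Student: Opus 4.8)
The first inequality in the statement, $\Var_{x\sim\D}(g(x))\le\E_{x\sim\D}[g(x)^2]$, is just the first item of \autoref{fact:variance} (indeed $\Var(g)=\E[g^2]-(\E[g])^2\le\E[g^2]$), so the whole content is the second inequality, and the plan is to prove it via the ``layer-cake'' formula for the second moment.

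Write $M=\max_{x\in\V}g(x)$, which is finite since $\V$ is finite, and abbreviate $B=\max_{x\in\V}\paren*{g(x)\cdot\Pr_{y\sim\D}(g(y)\ge g(x))}$, so that the right-hand side of the lemma is exactly $2BM$. Using the identity $g(x)^2=2\int_0^\infty t\cdot\mathbf{1}[g(x)\ge t]\,dt$ and exchanging the (finite) expectation with the integral,
\[
\E_{x\sim\D}\sq*{g(x)^2}=2\int_0^\infty t\cdot\Pr_{y\sim\D}\paren*{g(y)\ge t}\,dt=2\int_0^{M} t\cdot\Pr_{y\sim\D}\paren*{g(y)\ge t}\,dt,
\]
the last step because $\Pr_{y\sim\D}(g(y)\ge t)=0$ for $t>M$. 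The crux is the pointwise estimate $t\cdot\Pr_{y\sim\D}(g(y)\ge t)\le B$ for every $t\in[0,M]$: since $g$ takes only finitely many values, $t\mapsto\Pr_{y\sim\D}(g(y)\ge t)$ is constant on the interval lying strictly between two consecutive values of $g$ (and equals $1$ below the smallest value), so $t\mapsto t\cdot\Pr_{y\sim\D}(g(y)\ge t)$ is non-decreasing on each such interval and its supremum over $[0,M]$ is attained at a point where $t$ equals some value $g(x)$ of $g$; that is, the supremum is exactly $B$. Substituting this bound into the displayed integral gives $\E_{x\sim\D}[g(x)^2]\le 2\int_0^M B\,dt=2BM$, as desired. (Equivalently one can avoid integrals entirely: listing the distinct values $0\le v_1<\cdots<v_r=M$ of $g$ and writing $\E[g^2]=\sum_{k=1}^r(v_k^2-v_{k-1}^2)\,\Pr_{y\sim\D}(g(y)\ge v_k)$ by Abel summation with $v_0:=0$, one uses $v_k^2-v_{k-1}^2\le 2v_k(v_k-v_{k-1})$ and $v_k\cdot\Pr_{y\sim\D}(g(y)\ge v_k)\le B$ and telescopes to $2BM$.)

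There is no genuine difficulty in this argument; the one step worth being careful about is identifying $\sup_{t\in[0,M]} t\cdot\Pr_{y\sim\D}(g(y)\ge t)$ with the maximum $B$ taken only over the finite value set of $g$. Bounding this supremum more crudely --- e.g.\ by $M\cdot\sup_{t}\Pr_{y\sim\D}(g(y)\ge t)\le M$ --- would only give the much weaker $2M^2$, which does not suffice for the intended applications. It is also harmless that the probabilities are written with ``$\ge$'' rather than ``$>$'', since $\Pr_{y\sim\D}(g(y)\ge t)$ and $\Pr_{y\sim\D}(g(y)>t)$ differ only at the finitely many atoms of $g$, hence on a set of $t$ of Lebesgue measure zero.
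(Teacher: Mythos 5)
Your proof is correct, and the key step is the same one that drives the paper's argument: the pointwise bound $t\cdot\Pr_{y\sim\D}(g(y)\ge t)\le B:=\max_{x\in\V}g(x)\cdot\Pr_{y\sim\D}(g(y)\ge g(x))$ for all $t\in[0,M]$, after which a factor of $M=\max_x g(x)$ is picked up by integrating (or, equivalently, telescoping). The difference is presentational. The paper works discretely: it expands $\E[g^2]$ as the Abel sum $a_0^2+\sum_i(a_i^2-a_{i-1}^2)\Pr(g\ge a_i)$ with an auxiliary negative $a_0\to 0^-$, applies $a_i^2-a_{i-1}^2\le 2a_i(a_i-a_{i-1})$, bounds $a_i\Pr(g\ge a_i)\le B$, and telescopes $\sum_i(a_i-a_{i-1})=a_m-a_0$. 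Your primary route via the layer-cake identity $\E[g^2]=2\int_0^M t\cdot\Pr(g\ge t)\,dt$ is a continuous rephrasing of the same computation; it is marginally cleaner in that it avoids the limiting auxiliary $a_0<0$, and your parenthetical discrete variant with $v_0:=0$ is the paper's proof modulo the same small simplification. You are also right to single out the one point that needs care --- that the supremum of $t\Pr(g\ge t)$ over the full continuum $[0,M]$, not merely over the finite value set of $g$, still equals $B$ --- and your justification via the piecewise-constant structure of $t\mapsto\Pr(g\ge t)$, so that $t\mapsto t\Pr(g\ge t)$ increases on each piece and peaks at a value of $g$, is the right one.
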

\begin{proof}
The first inequality follows from \autoref{fact:variance}, \autoref{item:variancebound}. For the second, let $a_1 < a_2 < \cdots < a_m$ be all the values of $g(x)$ when $x \in \V$ and let $a_0$ be a negative number arbitrarily close to~$0$. We have:
\begin{align*}
\E_{x \sim \D}\sq*{ \paren*{ g(x) }^2 } &= \sum_{x \in \V} f(x) \cdot \paren*{ g(x) }^2 \\
&= \sum_{i = 1}^m a_i^2 \cdot \Pr_{x \sim \D} \paren*{ g(x) = a_i } \\
&= \sum_{i = 1}^m a_i^2 \cdot \paren*{ \Pr_{x \sim \D} \paren*{ g(x) > a_{i-1} } - \Pr_{x \sim \D} \paren*{ g(x) > a_i } } \\
&= a_0^2 + \sum_{i = 1}^m \paren*{ a_i^2 - a_{i-1}^2 } \cdot \Pr_{x \sim \D} \paren*{ g(x) \geq a_i } .
\end{align*}
Using the identity $x^2 - y^2 = (x+y)(x-y)$, we get:
\begin{align*}
\E_{x \sim \D}\sq*{ \paren*{ g(x) }^2 } &\leq a_0^2 + \sum_{i = 1}^m \paren*{ a_i - a_{i-1} } \cdot 2 a_i \cdot \Pr_{x \sim \D} \paren*{ g(x) \geq a_i } \\
&\leq a_0^2 + 2 \cdot \paren*{ \max_{x \in \V} g(x) \cdot \Pr_{y \sim \D}\paren*{ g(y) \geq g(x) } } \cdot \sum_{i = 1}^m \paren*{ a_i - a_{i-1} } \\
&\leq a_0^2 + 2 \cdot \paren*{ \max_{x \in \V} g(x) \cdot \Pr_{y \sim \D}\paren*{ g(y) \geq g(x) } } \cdot \paren*{ \max_{x \in \V} g(x) - a_0 } .
\end{align*}
The lemma follows as $a_0 < 0$ was arbitrary.
\end{proof}

\subsection{Auction Design Theory} 
\label{sec:prelimauc}

The paper deals with Bayesian auction design for multiple items and multiple independent additive bidders.  Formally, this setting is defined by a tuple $\paren*{n, m, \{ \D_j \}_{j = 1}^m}$, where $n > 0$ denotes the number of bidders, $m > 0$ denotes the number of items, and $\D_j$, for $j \in [m]$, is a distribution with a finite support $\V_j \subseteq \reals^{+}$. We shall assume that, for all $j \in [m]$, all elements in $\V_j$ have non-zero probability under $\D_j$. This is without loss of generality as we can simply remove all elements that have zero probability.

We define $\D = \bigtimes_{j = 1}^m \D_j$, $\V = \bigtimes_{j = 1}^m \V_j$ and use $\D$ and $\{ \D_j \}_{j = 1}^m$ interchangeably. Bidder $i \in [n]$ has a private valuation $v_{i,j}$ for each item $j \in [m]$, that is sampled (independently for all bidders and items) from the distribution $\D_j$. We shall $v_i$ to denote the tuple $(v_{i,1}, \cdots, v_{i,m}) \in \V$, $v_{-i}$ to denote the tuple $(v_1, \cdots, v_{i-1}, v_{i+1}, \cdots v_n)$, and $v$ to denote the tuple $(v_1, \cdots, v_n) \in \V^n$. We sometimes use $(v_i, v_{-i})$ instead of $v$ if we want to emphasize the valuation of bidder $i$. 

We reserve $f_j(\cdot)$ to denote the probability mass function corresponding to $\D_j$ and $F_j(\cdot)$ to denote the cumulative mass function, {\em i.e.}, $F_j(x) = \Pr_{y \sim \D_j}(y \leq x)$. For all $k > 0$, we use $f^{(k)}$ and $F^{(k)}$ to denote the probability mass function and the cumulative mass function for the distribution $\D^k = \underbrace{\D \times \D \times \cdots \times \D}_{k \text{~times}}$. We omit $k$ when $k = 1$. We may also write $f^*$ and $F^*$ if $k$ is clear from context.

\subsubsection{Definition of an Auction}

Let $(n, m, \D)$ be an auction setting as above. For our purposes, owing to the revelation principle, it is enough to think of an auction as a pair of functions $\auc = (\pi, p)$ with the following types:
\begin{align*}
\pi &: \V^n \to \paren*{[0,1]^m}^n , \\
p &: \V^n \to \reals^n .
\end{align*}

Here, the function $\pi$ represents the `allocation function' of the auction $\auc$. It takes a tuple $v \in \V^n$ of `reported valuations' and outputs for all bidders $i \in [n]$ and items $j \in [m]$, the probability that bidder $i$ gets item $j$ when the reported types are $v$. As every item can be allocated at most once, we require for all $v \in \V^n$ and $j \in [m]$ (here, $\pi_{i,j}(v)$ denotes the $(i,j)^{\text{th}}$ coordinate of $\pi(v)$) that:
\begin{equation}
\label{eq:allocconstraint}
\sum_{i = 1}^n \pi_{i,j}(v) \leq 1 .
\end{equation}

The function $p$ denotes the `payment function' of the auction. It takes a tuple $v \in \V^n$ of reported valuations and outputs for all bidders $i \in [n]$, the amount bidder $i$ must pay the auctioneer. We shall use $p_i(v)$ to denote the $i^{\text{th}}$ coordinate of $p(v)$. 

\paragraph{The functions $\barpi(\cdot)$ and $\barp(\cdot)$.} For a bidder $i \in [n]$, we define the functions $\barpi_i : \V \to [0,1]^m$ and $\barp_i : \V \to \reals$ to be the expectation over the other bidders' valuations of the functions $\pi$ and $p$ respectively. Formally, for $v_i \in \V$, we have:
\begin{equation}
\label{eq:barpip}
\barpi_i(v_i) = \E_{v_{-i} \sim \D^{n-1}} \sq*{ \pi_i(v_i, v_{-i}) } \hspace{1cm}\text{and}\hspace{1cm} \barp_i(v_i) = \E_{v_{-i} \sim \D^{n-1}} \sq*{ p_i(v_i, v_{-i}) } .
\end{equation}

\paragraph{(Bayesian) truthfulness.} Roughly speaking, an auction is said to be truthful if the `utility' of any bidder $i \in [n]$ is maximized (and non-negative) when they report their true valuation. As the utility of bidder $i$ is defined simply as the value of player $i$ for all items allocated to them minus the payment made by player $i$, we have that an auction is truthful if for all $i \in [n], v_i, v'_i \in \V$, we have:
\begin{equation}
\label{eq:truthful}
\begin{split}
\sum_{j = 1}^m \barpi_{i,j}(v_i) \cdot v_{i,j} - \barp_i(v_i) &\geq \sum_{j = 1}^m \barpi_{i,j}(v'_i) \cdot v_{i,j} - \barp_i(v'_i) . \\
\sum_{j = 1}^m \barpi_{i,j}(v_i) \cdot v_{i,j} - \barp_i(v_i) &\geq 0 .
\end{split}
\end{equation}

Throughout this work, we restrict attention to auctions that are truthful.

\paragraph{Revenue.}
\autoref{eq:truthful} implies that one should expect the bidders in a truthful auction to report their true valuations.  When this happens, we can calculate the revenue generated by the auction $\auc$ as follows:
\begin{equation}
\label{eq:revenue}
\rev(\auc, \D, n) = \E_{v \sim \D^n} \sq*{ \sum_{i = 1}^n p_i(v) } = \sum_{i = 1}^n \E_{v_i \sim \D} \sq*{ \barp_i(v_i) } .
\end{equation}

We use $\rev(\D, n)$ to denote the maximum possible revenue of a (truthful) auction in the setting $(n, m, \D)$, {\em i.e.}, $\rev(\D, n) = \max_{\auc} \rev(\auc, \D, n)$.

\subsubsection{`Simple' Auctions}
The following well-known (truthful) auctions will be referenced throughout the proof.

\paragraph{VCG.} The VCG auction is a truthful auction that runs as follows: First, all bidders $i \in [n]$ report their valuation function $v_i$ to the auctioneer. Then, for all $j \in [m]$, item $j$ is given to the bidder with the highest (ties broken lexicographically) bid for item $j$ at a price equal to the second highest bid for item $j$. Formally, we have that for all $v \in \V^n$, $i \in [n]$, and $j \in [m]$ that:
\[
\pi_{i,j}(v) = \begin{cases}
1, &i = \argmax_{i'} v_{i',j} \\
0, &i \neq \argmax_{i'} v_{i',j} \\
\end{cases} \hspace{1cm}\text{and}\hspace{1cm}
p_i(v) = \sum_{j' : \pi_{i,j'}(v) = 1} \max_{i' \neq i} v_{i',j'} .
\]

We define $\vcg(\D, n)$ to be the revenue generated by the VCG auction.

\paragraph{Selling Separately.}  We say that an auction sells the items separately if it can be seen as $m$ separate auctions, one for each item $j$, such that the auction for item $j$ depends only on the values $\set*{ v_{i,j} }_{i \in [n]}$ the bidders have for item $j$, and the payments of the bidders is just the sum of their payments in each of the $m$ auctions. Formally, for all $j \in [m]$, there exists truthful auctions $\auc^{(j)} = (\pi^{(j)}: \V_j^n \to [0,1]^n, p^{(j)}: \V_j^n \to \reals^n)$ such that for all $v \in \V^n$, $i \in [n]$, and $j \in [m]$:
\[
\pi_{i,j}(v) = \pi_i^{(j)}\paren*{ \set*{ v_{i',j} }_{i' \in [n]} } \hspace{1cm}\text{and}\hspace{1cm} p_i(v) = \sum_{j' = 1}^m p_i^{(j')}\paren*{ \set*{ v_{i',j'} }_{i' \in [n]} } .
\]

Note the VCG auction sells the items separately. We define $\srev_j(\D_j, n)$ to be the Myerson optimal revenue for selling item $j$ to $n$ bidders. It follows that $\srev(\D, n) = \sum_{j = 1}^m \srev_j(\D_j, n)$ is the maximum revenue generated by any auction that sells the items separately

\paragraph{BVCG.} A BVCG auction is defined by a number $0 \leq k \leq n$ and\footnote{We will have $k = 0$ in the proof of \autoref{thm:main} and $k \leq 1$ in the proof of \autoref{thm:mainpi}.} and a set of non-negative numbers $\fee_{i, v_{-i}}$, for all $i \in [n-k]$ and $v_{-i} \in \V^{n-1}$. In this auction, the last $k$ bidders are treated as special and do not receive any items or pay anything. The first $n-k$ bidders participate in a VCG auction but bidder $i \in [n-k]$ only gets access to the items allocated to him in the VCG auction if he pays an entry fee $\fee_{i, v_{-i}}$ that depends on the bids $v_{-i} \in \V^{n-1}$ of all the other players in addition to the prices charged by the VCG auction.

Formally, for all $v \in \V^n$ and $j \in [m]$, we have for all $n-k < i \leq n$ that $\pi_{i,j}(v) = 0$ and $p_i(v) = 0$, and for all $i \in [n-k]$, we have:
\begin{align*}
\pi_{i,j}(v) &= \begin{cases}
1, &i = \argmax_{i' \neq i \in [n-k]} v_{i',j} \wedge \sum_{j' = 1}^m \max\paren*{ v_{i,j'} - \max_{i' \neq i \in [n-k]} v_{i',j'}, 0 } \geq \fee_{i, v_{-i}} \\
0, &\text{otherwise} \\
\end{cases} \\
p_i(v) &= \begin{cases}
\fee_{i, v_{-i}} + \sum_{j' : \pi_{i,j'}(v) = 1} \max_{i' \neq i \in [n-k]} v_{i',j'}, &\text{~if~} \exists j' : \pi_{i,j'}(v) = 1 \\
0, &\text{otherwise} \\
\end{cases}
\end{align*}

We define $\bvcg(\D, n)$ to be the maximum revenue of a BVCG auction with $n$ bidders. Also, define $\pibvcg(\D, n)$ to be the maximum revenue of a prior-independent BVCG auction with $n$ bidders, {\em i.e,} where the values $\fee_{i, v_{-i}}$ for all $i$ and $v_{-i}$ are not a function of the distribution $\D$.

\subsubsection{Myerson's Virtual Values}
We define the virtual value function following \cite{CaiDW16}. Throughout this subsection, we fix our attention on a single item $j \in [m]$ in an auction setting. The notations $\D_j$, $\V_j$, $f_j$, $F_j$ will be the same as above. Recall our assumption that $f_j(x) > 0$ for all $x \in \V_j$. 

\begin{definition}
\label{def:vv}
The virtual value function $\vv_j : \V_j \to \reals$ is defined to be:
\[
\vv_j(x) = \begin{cases}
x, &\text{~if~} x = \max(\V_j) \\
x - \frac{ \paren*{ x' - x } \cdot \paren*{ 1 - F_j(x) } }{f_j(x)}, &\text{~if~} x \neq \max(\V_j) 
\end{cases} .
\]
Here, $x'$ denotes the smallest element $>x$ in $\V_j$ and is well defined for all $x \neq \max(\V_j)$.
\end{definition}

Using the function $\vv_j(\cdot)$, one can compute the ironed virtual value function as described in \autoref{algo:ivv}.

\begin{algorithm}
\caption{Computing the ironed virtual value function $\ivv_j : \V_j \to \reals$.}
\label{algo:ivv}
\begin{algorithmic}[1]

\State $x \gets \max(\V_j)$. 

\While{$\mathtt{True}$}

\State For all $y \in \V_j, y \leq x$, set:
\[
a(y) = \frac{ \sum_{y' \in [y, x] \cap \V_j} f_j(y') \cdot \vv_j(y') }{ \sum_{y' \in [y, x] \cap \V_j} f_j(y') } . \label{line:seta}
\]

\State $y^* \gets \argmax_{y \in \V_j, y \leq x} a(y)$ breaking ties in favor of larger values. \label{line:setystar}

\State For all $y' \in [y^*, x] \cap \V_j$, set $\ivv_j(y') \gets a(y^*)$.  \label{line:setivv}

\State If $y^* = \min(\V_j)$, abort. Else, $x \gets $ the largest element $<y^*$ in $\V_j$. \label{line:abort}

\EndWhile

\end{algorithmic}
\end{algorithm}

The ironed virtual value function $\ivv_j(\cdot)$ has several nice properties some of which recall below. The lemmas are adapted from \cite{CaiDW16} which also has a more in depth discussion.

\begin{lemma}
\label{lemma:ivvmonotone}
For all $x, x' \in \V_j$ such that $x \leq x'$, we have $\ivv_j(x) \leq \ivv_j(x')$. 
\end{lemma}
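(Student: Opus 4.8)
The plan is to show that each value $\ivv_j(y')$ assigned by Line~\ref{line:setivv} of \autoref{algo:ivv} during the various iterations of the while loop is non-increasing as we move to later iterations, and that within a single iteration all the values assigned are equal. Together these give monotonicity, since the algorithm processes the elements of $\V_j$ from largest to smallest: the first iteration assigns the common value $a(y^*)$ to a top interval $[y^*_1, \max(\V_j)] \cap \V_j$, the next iteration assigns $a$ of a new maximizer to the interval immediately below it, and so on until $\min(\V_j)$ is reached. So it suffices to prove that the sequence of assigned averages is monotone non-increasing across iterations.

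First I would set up notation: let the successive values of the variable $x$ at the top of the while loop be $x_0 = \max(\V_j) > x_1 > x_2 > \cdots$, let $y^*_t$ be the maximizer chosen in iteration $t$ (so the interval $[y^*_t, x_t] \cap \V_j$ receives the value $a_t := a(y^*_t)$ computed in that iteration), and note $x_{t+1}$ is the largest element of $\V_j$ strictly below $y^*_t$. The key step is the claim that $a_t \geq a_{t+1}$. The clean way to see this is: in iteration $t$, $y^*_t$ maximizes $a(\cdot)$ over all $y \leq x_t$; in particular, for the ``leftover'' segment $S := [x_{t+1}', y^*_t) \cap \V_j$ just below nothing — wait, more carefully — consider that in iteration $t$ one could also have evaluated $a(y)$ for $y$ equal to (the smallest element of) the union $[y^*_{t+1}, x_t]$, i.e., the average over the whole block $[y^*_{t+1}, x_t]\cap\V_j$. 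This block is the disjoint union of $[y^*_{t+1}, x_{t+1}]\cap\V_j$ (average $a_{t+1}$) and $[y^*_t, x_t]\cap\V_j$ (average $a_t$); hence its average is a weighted average of $a_t$ and $a_{t+1}$. Since $y^*_t$ was the maximizer in iteration $t$, this block-average is $\le a_t$, which forces $a_{t+1} \le a_t$. (One must check the edge details: that $y^*_{t+1} \le x_{t+1} < y^*_t \le x_t$, that all these intervals are nonempty, and that $y = (\min$ of the block$)$ was indeed a legal candidate in iteration $t$, i.e. $\le x_t$ — all immediate from the update rule on Line~\ref{line:abort}.)

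The main obstacle I anticipate is purely bookkeeping: handling the tie-breaking rule on Line~\ref{line:setystar} ($y^*$ is the \emph{largest} maximizer) and making sure the ``weighted average'' argument is applied to the right intervals, so that one does not accidentally compare averages over overlapping or misaligned blocks. A secondary subtlety is confirming that every element of $\V_j$ is assigned a value exactly once (so that $\ivv_j$ is well-defined on all of $\V_j$) — this follows because consecutive intervals $[y^*_{t+1},x_{t+1}]$ and $[y^*_t, x_t]$ abut (as $x_{t+1}$ is the largest element below $y^*_t$) and the loop terminates exactly when $y^*_t = \min(\V_j)$. Once the cross-iteration inequality $a_{t+1}\le a_t$ is established and the partition is verified, monotonicity of $\ivv_j$ over all of $\V_j$ is immediate: for $x \le x'$, either they lie in the same interval (equal values) or $x$ lies in a later interval than $x'$ (value $\le$).
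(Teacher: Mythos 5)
Your proposal is correct and takes essentially the same route as the paper: you show $a_{t+1} \le a_t$ by observing that $a(y^*_{t+1})$ evaluated in iteration $t$ is a weighted average of $a_t$ and $a_{t+1}$ and is at most $a_t$ by maximality of $y^*_t$, which is exactly the convex-combination identity the paper uses. The tie-breaking rule you flag as a potential subtlety plays no role in the inequality itself (it only ensures the loop makes progress), so there is nothing further to check there.
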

\begin{proof}
As $x \leq x'$, \autoref{algo:ivv} did not set the value of $\ivv_j(x')$ after setting the value of $\ivv_j(x)$. Using this and \autoref{line:setivv} of \autoref{algo:ivv}, we get that it is sufficient to show that the value $a(y^*)$ cannot increase between two consecutive iterations of the While loop. To this end, consider two consecutive iterations and let $x_1, y^*_1, a_1(\cdot)$ and $x_2, y^*_2, a_2(\cdot)$ be the values of the corresponding variables in the first and the second iteration respectively and note that $y^*_2 \leq x_2 < y^*_1 \leq x_1$. 

By our choice of $y^*_1$ in \autoref{line:setystar} in the first iteration, we have that $a_1(y^*_1) \geq a_1(y^*_2)$. Extending using \autoref{line:seta}, we get:
\begin{align*}
a_1(y^*_1) \geq a_1(y^*_2) &= \frac{ \sum_{y' \in [y^*_2, x_1] \cap \V_j} f_j(y') \cdot \vv_j(y') }{ \sum_{y' \in [y^*_2, x_1] \cap \V_j} f_j(y') } \\
&= \frac{ \sum_{y' \in [y^*_2, x_2] \cap \V_j} f_j(y') }{ \sum_{y' \in [y^*_2, x_1] \cap \V_j} f_j(y') } \cdot a_2(y^*_2) + \frac{ \sum_{y' \in [y^*_1, x_1] \cap \V_j} f_j(y') }{ \sum_{y' \in [y^*_2, x_1] \cap \V_j} f_j(y') } \cdot a_1(y^*_1) .
\end{align*}
It follows that $a_1(y^*_1) \geq a_2(y^*_2)$, as desired.

\end{proof}

\begin{lemma}
\label{lemma:ivv-ub}
For all $x \in \V_j$, we have $\ivv_j(x) \leq x$. 
\end{lemma}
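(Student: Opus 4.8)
The plan is to unwind the definition of $\ivv_j$ in \autoref{algo:ivv} and reduce the claim to a statement about weighted averages of the unironed virtual values $\vv_j$. By \autoref{line:setivv}, each value $\ivv_j(x)$ is set to $a(y^*)$ for some interval $[y^*, x_{\text{iter}}] \cap \V_j$ containing $x$, where $a(y^*)$ is the $f_j$-weighted average of $\vv_j(y')$ over $y' \in [y^*, x_{\text{iter}}] \cap \V_j$. So it suffices to show: for every "ironing interval" $[\ell, r] \cap \V_j$ produced by the algorithm and every $x$ in it, the weighted average $a(\ell) = \frac{\sum_{y' \in [\ell,r]\cap \V_j} f_j(y')\vv_j(y')}{\sum_{y' \in [\ell,r]\cap \V_j} f_j(y')}$ is at most $x$. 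Since $x \geq \ell$, it is enough to prove the stronger-looking but cleaner statement that $a(\ell) \leq \ell$, i.e. that the weighted average of $\vv_j$ over any right-closed interval $[\ell, r] \cap \V_j$ with $r \le \max(\V_j)$ is at most its left endpoint $\ell$.

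The key step is therefore the following telescoping identity for the unironed virtual values. Writing $\V_j \cap [\ell, r] = \{z_0 < z_1 < \cdots < z_t\}$ with $z_0 = \ell$ and $z_t = r$, and recalling from \autoref{def:vv} that $f_j(z_i)\vv_j(z_i) = f_j(z_i) z_i - (z_{i+1} - z_i)(1 - F_j(z_i))$ when $z_i \neq \max(\V_j)$ (and $f_j(z_i)\vv_j(z_i) = f_j(z_i) z_i$ when $z_i = \max(\V_j)$), one sums over $i = 0, \ldots, t$. The standard Abel-summation / telescoping computation gives
\[
\sum_{i=0}^{t} f_j(z_i)\vv_j(z_i) = \ell \cdot \big(F_j(r) - F_j(z_0^-)\big) - (1 - F_j(r)) \cdot (z_t' - r) \cdot \mathds{1}[r \neq \max(\V_j)] ,
\]
or more simply $\sum_{i=0}^{t} f_j(z_i)\vv_j(z_i) \leq \ell \cdot \sum_{i=0}^{t} f_j(z_i)$, because the telescoping of the $-(z_{i+1}-z_i)(1-F_j(z_i))$ terms against the $f_j(z_i) z_i$ terms collapses to something bounded above by $\ell \cdot \Pr(\ell \le x \le r)$. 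Dividing by $\sum_{i=0}^{t} f_j(z_i) > 0$ yields $a(\ell) \leq \ell \leq x$, which is exactly what is needed.

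The main obstacle I expect is getting the telescoping bookkeeping exactly right, particularly the boundary term at $r$: one must handle separately the case $r = \max(\V_j)$ (where the $\vv_j$ formula has no correction term and the telescoping is clean) versus $r < \max(\V_j)$ (where there is a leftover term $-(1 - F_j(r))(r' - r) \le 0$ that only helps the inequality). A secondary point of care is verifying that the intervals output in \autoref{line:setivv} of \autoref{algo:ivv} are always of the form $[y^*, x_{\text{iter}}] \cap \V_j$ with right endpoint $x_{\text{iter}} \le \max(\V_j)$ — this is immediate from the initialization $x \gets \max(\V_j)$ and the update in \autoref{line:abort} — so that the telescoping lemma is always applicable. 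Once the interval-average inequality is established, the lemma follows in one line from \autoref{line:setivv}.
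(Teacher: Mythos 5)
Your plan is correct, and it takes a genuinely different route from the paper's own proof. The paper argues locally: it distinguishes whether $x$ is the right endpoint of its ironing interval ($x = x_1$) or an interior point ($x < x_1$); in the latter case it invokes the optimality of $y^*_1$ (namely $a_1(y^*_1) \geq a_1(x')$ for $x'$ the successor of $x$) to show that $a_1(y^*_1)$ is at most the $f_j$-weighted average of $\vv_j$ over the truncated interval $[y^*_1, x]$, and then finishes with the pointwise bound $\vv_j(y') \leq y' \leq x$ from \autoref{def:vv}. You instead prove a global identity about $\vv_j$ that never references the $\argmax$ in \autoref{line:setystar}: for any interval $[\ell, r] \cap \V_j$ with $r \leq \max(\V_j)$, the $f_j$-weighted average of $\vv_j$ is at most $\ell$ (with equality when $r = \max(\V_j)$), which is a slightly stronger intermediate fact than the paper establishes. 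I checked the Abel summation: writing $G_i = 1 - F_j(z_i)$ and $G_{-1} = 1 - F_j(z_0^-)$, the sum collapses to $G_{-1}\,z_0 - G_t\,z_t$ when $r = \max(\V_j)$ and to $G_{-1}\,z_0 - G_t\,r'$ when $r < \max(\V_j)$ (where $r'$ is the successor of $r$), both of which are $\leq z_0 \cdot \sum_{i=0}^t f_j(z_i)$. Your displayed closed form has a minor bookkeeping slip in the boundary term (you wrote $-(1-F_j(r))(r'-r)$ where the exact value is $-(1-F_j(r))(r'-z_0)$ after subtracting $\ell \sum f_j(z_i)$), but since the true value is even smaller your inequality is only helped, so the argument is sound. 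Your route buys a cleaner statement about $\vv_j$ that is independent of the ironing procedure and shows the ironed value attains $\ell$ exactly on the topmost interval; the paper's route buys a shorter, more local computation that only needs the pointwise fact $\vv_j(y') \leq y'$.
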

\begin{proof}
Let $x_1, y^*_1, a_1(\cdot)$ be the values of the corresponding variables in the iteration when the value of $\ivv_j(x)$ is set. Observe that $y^*_1 \leq x \leq x_1$. If $x = x_1$, we simply have:
\[
\ivv_j(x) = a_1(y^*_1) = \frac{ \sum_{y' \in [y^*_1, x_1] \cap \V_j} f_j(y') \cdot \vv_j(y') }{ \sum_{y' \in [y^*_1, x_1] \cap \V_j} f_j(y') } \leq \frac{ \sum_{y' \in [y^*_1, x_1] \cap \V_j} f_j(y') \cdot x }{ \sum_{y' \in [y^*_1, x_1] \cap \V_j} f_j(y') } = x ,
\]
where the penultimate step uses $\vv(y') \leq y' \leq x_1 = x$ by \autoref{def:vv}. Otherwise, we have $x < x_1$. Define $x' \in \V_j$ to be the smallest such that $x < x'$ and observe that $x' \leq x_1$. By our choice of $y^*_1$ in \autoref{line:setystar}, we have:
\begin{align*}
a_1(y^*_1) &\geq a_1(x') \\
&= \frac{ \sum_{y' \in [x', x_1] \cap \V_j} f_j(y') \cdot \vv_j(y') }{ \sum_{y' \in [x', x_1] \cap \V_j} f_j(y') } \\
&= a_1(y^*_1) \cdot \frac{ \sum_{y' \in [y^*_1, x_1] \cap \V_j} f_j(y') }{ \sum_{y' \in [x', x_1] \cap \V_j} f_j(y') } - \frac{ \sum_{y' \in [y^*_1, x] \cap \V_j} f_j(y') \cdot \vv_j(y') }{ \sum_{y' \in [y^*_1, x] \cap \V_j} f_j(y') } \cdot \frac{ \sum_{y' \in [y^*_1, x] \cap \V_j} f_j(y') }{ \sum_{y' \in [x', x_1] \cap \V_j} f_j(y') } .
\end{align*}
Rearranging, we get:
\[
\ivv_j(x) = a_1(y^*_1) \leq \frac{ \sum_{y' \in [y^*_1, x] \cap \V_j} f_j(y') \cdot \vv_j(y') }{ \sum_{y' \in [y^*_1, x] \cap \V_j} f_j(y') } \leq \frac{ \sum_{y' \in [y^*_1, x] \cap \V_j} f_j(y') \cdot x }{ \sum_{y' \in [y^*_1, x] \cap \V_j} f_j(y') } = x ,
\]
using $\vv(y') \leq y' \leq x$ by \autoref{def:vv} in the penultimate step.
\end{proof}

Myerson \cite{Myerson81} proved that when there is only one item and $\D$ is continuous, the optimal revenue is equal to the expected (Myerson's) virtual welfare. \cite{CaiDW16} shows that Myerson's lemma also applies to $\D$ that are discrete. For the $m$ item setting considered in this paper, we get:

\begin{proposition}[Myerson's Lemma~\cite{Myerson81, CaiDW16}]
\label{prop:myerson}
It holds that:
\[
\srev(\D, n) = \sum_{j=1}^m \srev_j(\D_j, n) = \sum_{j=1}^m \E_{v \sim \D^n} \sq*{\max_{i\in[n]} \paren*{\ivv_{j}(v_{i, j})}^+} .
\]
\end{proposition}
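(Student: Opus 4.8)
The plan is to reduce the $m$-item claim to the single-item claim, which is the content of Myerson's original theorem, and then invoke the (discrete) version of it that $\cite{CaiDW16}$ establishes. First I would observe that the right-hand side is already written as a sum over items $j \in [m]$, and that each summand $\srev_j(\D_j,n)$ is, by definition, the Myerson-optimal revenue for selling item $j$ alone to the $n$ bidders whose values for item $j$ are i.i.d.\ from $\D_j$. Hence it suffices to prove, for each fixed $j$, the single-item identity
\[
\srev_j(\D_j, n) = \E_{v \sim \D^n}\sq*{ \max_{i \in [n]} \paren*{ \ivv_j(v_{i,j}) }^+ } .
\]
Since the expression inside the expectation depends on $v$ only through the coordinates $\set*{ v_{i,j} }_{i \in [n]}$, the expectation over $\D^n$ collapses to an expectation over $\D_j^n$, so this is genuinely a statement about the single-item setting $(n, 1, \D_j)$.

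The key steps, in order, are: (1) state the single-item claim as above and note the collapse of the expectation to $\D_j^n$; (2) recall Myerson's characterization \cite{Myerson81}: for a single item, any truthful auction's expected revenue equals its expected (ironed) virtual welfare $\E[\sum_i \ivv_j(v_{i,j}) \cdot \pi_i(v)]$, and this is maximized by allocating the item to the bidder with the largest ironed virtual value provided that value is non-negative (and not allocating otherwise) — which yields exactly $\E[\max_i (\ivv_j(v_{i,j}))^+]$; (3) note the two technical points that make this legitimate in our discrete setting, namely that $\ivv_j$ as computed by \autoref{algo:ivv} is the correct ironed virtual value (monotone by \autoref{lemma:ivvmonotone}, so the greedy allocation is implementable by a monotone, hence truthful, auction) and that the ironing procedure does not change the expected virtual welfare of any monotone allocation rule — this is precisely the content of the lemmas adapted from \cite{CaiDW16}; (4) sum the single-item identity over $j \in [m]$ and use $\srev(\D,n) = \sum_j \srev_j(\D_j,n)$ (established when ``Selling Separately'' was defined) together with linearity of expectation to conclude.

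The only real content — and the step I would flag as the ``main obstacle'' if one insisted on a self-contained argument — is step (3): verifying that the discrete ironing algorithm \autoref{algo:ivv} reproduces the classical continuous ironing, i.e.\ that replacing $\vv_j$ by $\ivv_j$ leaves $\E[\sum_i \ivv_j(v_{i,j})\pi_i(v)]$ unchanged for every monotone $\pi$, and that the pointwise-maximum allocation is monotone and truthful. However, this is exactly what \cite{CaiDW16} proves in full, and the paper explicitly cites that reference for it, so in the write-up I would simply cite \cite{Myerson81, CaiDW16} for the single-item identity and spend no effort re-deriving it. In short, the proposition is essentially a bookkeeping corollary: single-item Myerson (discrete version) applied item-by-item, then summed.
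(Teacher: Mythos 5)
Your proposal is correct and matches what the paper does: the paper states this proposition without a proof, simply citing \cite{Myerson81, CaiDW16}, with the first equality being the definition of $\srev$ and the second being single-item (discrete) Myerson applied item-by-item. Your reduction and the technical points you flag (monotonicity of the ironed virtual value, ironing preserving expected virtual welfare for monotone allocations) are exactly the content being delegated to those references.
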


\begin{definition}[Regular Distributions]
\label{def:regular} 
The distribution $\D_j$ is called regular when $\vv_j(\cdot) = \ivv_j(\cdot)$, or equivalently, when $\vv_j(\cdot)$ is monotone increasing. Namely, for all $x, x' \in \V_j$ such that $x \leq x'$, we have $\vv_j(x) \leq \vv_j(x')$.
\end{definition}
In the seminal paper~\cite{BulowK96}, Bulow and Klemperer show that the maximum possible revenue from an auction selling one item to any number $n$ of bidders whose valuations for the item are sampled from the same regular distribution is at most the revenue of the (simple and prior-independent) VCG auction with $n+1$ bidders. It follows that, when there are multiple items, we have:

\begin{proposition}[Classic Bulow-Klemperer \cite{BulowK96}] 
\label{prop:bk}
If $\D$ is a product of regular distributions, $\srev(\D, n) \leq \vcg(\D, n + 1)$.
\end{proposition}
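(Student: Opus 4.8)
The plan is to reduce the $m$-item claim to the classical single-item Bulow--Klemperer result via the separability of both sides. Since \autoref{prop:myerson}'s setup already records that $\srev(\D, n) = \sum_{j=1}^m \srev_j(\D_j, n)$, it suffices to bound each term $\srev_j(\D_j, n)$ and then reassemble. The key observations are: (i) for a fixed item $j$, selling item $j$ alone to $n$ bidders whose values are i.i.d.\ from the regular distribution $\D_j$ is exactly the single-item setting to which the classical Bulow--Klemperer theorem applies, so $\srev_j(\D_j, n)$ is at most the revenue of the second-price auction for item $j$ with $n+1$ bidders; and (ii) because the VCG auction sells the items separately (as noted right after its definition) and the item distributions $\D_j$ are independent, a draw of $n+1$ valuation vectors from $\D^{n+1}$ induces, coordinatewise, $n+1$ i.i.d.\ samples from each $\D_j$, and the VCG payment splits as a sum over $j$ of the second-highest bid for item $j$.

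The steps, in order, are as follows. First I would fix $j \in [m]$ and instantiate the classical Bulow--Klemperer statement (the one quoted just before \autoref{prop:bk}) on the single-item setting $(n, 1, \D_j)$, using that $\D_j$ is regular, to get
\[
\srev_j(\D_j, n) \;\leq\; \E_{w \sim \D_j^{n+1}}\sq*{ \text{second-highest coordinate of } w } ,
\]
the right-hand side being the revenue of the $(n+1)$-bidder second-price auction for item $j$. Second, I would sum this inequality over all $j \in [m]$, using $\srev(\D, n) = \sum_j \srev_j(\D_j, n)$ from the preliminaries. Third, I would identify the resulting sum $\sum_{j=1}^m \E_{w \sim \D_j^{n+1}}\sq*{ \text{2nd-highest coordinate of } w }$ with $\vcg(\D, n+1)$: by the definition of the VCG auction, under truthful reports each bidder $i$ pays $\sum_{j : \pi_{i,j}(v)=1} \max_{i' \neq i} v_{i',j}$, so $\sum_i p_i(v) = \sum_{j=1}^m \max_{i' \neq \argmax_i v_{i,j}} v_{i',j}$, i.e.\ the per-item second-highest bids; taking expectations over $v \sim \D^{n+1}$ and using independence of the coordinates (each item $j$ sees $n+1$ i.i.d.\ draws from $\D_j$) gives exactly the claimed sum. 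Chaining the three steps yields $\srev(\D, n) \leq \vcg(\D, n+1)$.

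There is no real obstacle here: the argument is essentially bookkeeping once one has the single-item Bulow--Klemperer theorem (which the excerpt supplies as a black box) and the structural fact that VCG decomposes item-by-item into second-price auctions. The only points requiring a line of care are that the independence of the $\D_j$ is what lets a single $\D^{n+1}$-draw serve as $n+1$ i.i.d.\ samples for every item simultaneously, and that lexicographic tie-breaking in VCG does not affect the payment (the second-highest bid is well defined regardless of which top bidder wins). If one wanted to be fully self-contained rather than invoking single-item Bulow--Klemperer as a black box, one could instead prove it via \autoref{prop:myerson}: $\srev_j(\D_j, n) = \E_{v \sim \D_j^n}[\max_i (\ivv_j(v_{i,j}))^+] = \E[\max_i \vv_j(v_{i,j})^+]$ by regularity, and then a standard coupling argument (adding one extra bidder and comparing $\max$ of $n+1$ virtual values to the second-highest of $n+1$ real values) gives the bound; but given the excerpt already states the classical result, the short reduction above is the cleanest route.
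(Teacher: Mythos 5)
Your proposal is correct and matches the paper's (implicit) reasoning exactly: the paper presents \autoref{prop:bk} as an immediate corollary of the single-item Bulow--Klemperer theorem, relying on the same item-by-item decomposition of both $\srev$ and VCG that you spell out. You have simply made explicit the bookkeeping that the paper compresses into the phrase ``It follows that, when there are multiple items, we have.''
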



\section{Proof of Main Result} \label{sec:main} 

This section formally states and proves our main results \autoref{thm:main} and \autoref{thm:mainpi}. Using the notation developed in \autoref{sec:prelim}, we can rewrite \autoref{thm:main} as:
\begin{theorem}[Formal statement of \autoref{thm:main} and \autoref{thm:mainpi}]
\label{thm:mainformal}
Let $(n, m, \D)$ be an auction setting as in \autoref{sec:prelimauc}. Let $\epsilon > 0$ and define $n' = 20n/\epsilon$. If $\paren*{ 1- \epsilon } \cdot \rev(\D, n) > \vcg(\D, n')$, we have that:
\[
\rev(\D, n) \leq \max\paren*{ \bvcg(\D, n'), \srev(\D, n') } .
\]
Furthermore, if $\D$ is a product of regular distributions, the same assumption also implies $\rev(\D, n) \leq \pibvcg(\D, n' + 1)$.
\end{theorem}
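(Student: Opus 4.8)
The plan is to run the duality-based argument of \autoref{sec:overview}. I would construct a randomized benchmark $\ub(\D, \ell)$, the $\iu{}$-virtual welfare with $\ell$ bidders, with four properties: \emph{(iii)} $\rev(\D, \ell) \le \ub(\D, \ell)$; \emph{(v)} the $\iu{}$-virtual value of a bidder for an item never exceeds that bidder's true value for the item; \emph{(i)--(ii)} $\ub(\D, \ell) = \E\sq*{\sum_{j \in [m]} \max_{i \in [\ell]} X_{i,j}}$ where, for each fixed $j$, the variables $\set*{X_{i,j}}_{i}$ are i.i.d.\ with a common law not depending on $\ell$; and \emph{(iv)} $\ub(\D, n') \le 20 \cdot \max\paren*{\bvcg(\D, n'), \srev(\D, n')}$, and, when $\D$ is a product of regular distributions, $\ub(\D, n') \le 20 \cdot \pibvcg(\D, n' + 1)$. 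Granting these, the theorem follows by chaining \emph{(iii)}, the boosting step below, and \emph{(iv)}: $\rev(\D, n) \le \ub(\D, n) \le \tfrac{1}{20}\ub(\D, n') \le \max\paren*{\bvcg(\D, n'), \srev(\D, n')}$, and likewise with $\pibvcg(\D, n' + 1)$.

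The benchmark is built as in \autoref{sec:overviewmultiplebidder}. Fix the ghost count $k = n'$. For a bidder $i$ with valuation $v_i$, draw $k - 1$ fresh ghost valuations i.i.d.\ from $\D$, simulate the VCG auction among bidder $i$ and the ghosts, and let $j^\star(v_i)$ be the item (if any) granting bidder $i$ his highest non-negative utility there (lexicographic tie-breaking); set $X_{i,j} = \paren*{\ivv_j(v_{i,j})}^+$ if $j = j^\star(v_i)$ and $X_{i,j} = v_{i,j}$ otherwise, and let $\ub(\D, \ell) = \E\sq*{\sum_{j \in [m]} \max_{i \in [\ell]} X_{i,j}}$, the expectation over the $\ell$ valuations and all their ghosts. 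Property \emph{(v)} is then immediate from \autoref{lemma:ivv-ub} ($\ivv_j(x) \le x$, together with $v_{i,j} \ge 0$), and \emph{(i)--(ii)} hold because the ghosts of each bidder are drawn in the same way and independently of $\ell$. Property \emph{(iii)} is the substantive one: I would verify that this virtual-value assignment is a feasible choice of dual variables in the \cite{CaiDW16} LP-duality framework, so that weak duality yields $\rev(\D, \ell) \le \ub(\D, \ell)$; since the $\iu{}$-virtual value differs from \cite{CaiDW16}'s only in which item is designated the ``favorite'' (chosen here by a second-price comparison against fresh ghosts rather than against the real opponents), the flow-feasibility argument there should transfer with only cosmetic changes.

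\textbf{Boosting step (analogue of \autoref{lemma:boostcore}).} Fix an item $j$ and write $X_1, \dots, X_{n'}$ for the i.i.d.\ copies of $X_{i,j}$. Conditioning on the event $E$ that $\argmax_{i \in [n']} X_i$ lies in $[n]$ --- which has probability $n / n' = \epsilon / 20$ and, by exchangeability (with suitable tie-breaking), is independent of the value $\max_{i \in [n']} X_i$ --- and noting that on $\overline{E}$ the maximum over $[n]$ is at most the second-highest over $[n']$, we get
\[
\E\sq*{\max_{i \in [n]} X_i} \le \tfrac{\epsilon}{20} \cdot \E\sq*{\max_{i \in [n']} X_i} + \E\sq*{\snd\text{-highest among } X_1, \dots, X_{n'}}.
\]
Summing over $j \in [m]$ and using \emph{(v)} to bound the last term by $\sum_j \E\sq*{\snd\text{-highest value for item } j \text{ among } n' \text{ bidders}} = \vcg(\D, n')$ gives $\ub(\D, n) \le \tfrac{\epsilon}{20}\ub(\D, n') + \vcg(\D, n')$. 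The hypothesis $(1 - \epsilon)\rev(\D, n) > \vcg(\D, n')$ together with \emph{(iii)} gives $\vcg(\D, n') < (1 - \epsilon)\ub(\D, n)$, and substituting yields $\epsilon \cdot \ub(\D, n) < \tfrac{\epsilon}{20}\ub(\D, n')$, i.e.\ $\ub(\D, n) < \tfrac{1}{20}\ub(\D, n')$, as the chaining above requires.

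\textbf{Property \emph{(iv)}, and the main obstacle.} To finish I would re-run the core/tail accounting of \cite{CaiDW16} on $\ub(\D, n')$ in place of the plain revenue --- possible because, modulo the ghost-sampling bookkeeping, the $\iu{}$-virtual welfare has the same shape as \cite{CaiDW16}'s virtual welfare --- to obtain $\ub(\D, n') \le 20\max\paren*{\bvcg(\D, n'), \srev(\D, n')}$; and for regular $\D$ I would instead follow \cite{GoldnerK16}, using one bidder's report to calibrate a prior-independent entry fee and checking (this is \autoref{lemma:step3alt}) that a constant fraction of $\ub(\D, n')$ survives, giving $\ub(\D, n') \le 20 \cdot \pibvcg(\D, n' + 1)$. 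The hard part is precisely \emph{(iii)} and \emph{(iv)} for the $\iu{}$-variant: \cite{CaiDW16}'s virtual value is engineered so that its accounting bounds hold, but its ``favorite-item'' rule inspects the other real bidders and so violates the across-bidders-i.i.d.\ and independent-of-$\ell$ structure that the boosting step needs; swapping in the ghost-based rule restores that structure but forces one to recheck that the duality upper bound \emph{(iii)} and the constant-factor-to-simple-auctions bound \emph{(iv)} are preserved. Beyond that, everything is a faithful re-run of \autoref{lemma:boostcore} and of the standard \cite{CaiDW16} / \cite{GoldnerK16} estimates.
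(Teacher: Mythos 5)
Your plan reproduces the paper's actual proof architecture: it introduces the ghost‑based $\iu$‑benchmark, proves (iii) via the \cite{CaiDW16} flow feasibility (the paper's \autoref{lemma:step1}, proved through \autoref{lemma:cor27}), proves the boosting step via exchangeability and the $\vcg$ bound on the second‑highest value (the paper's \autoref{lemma:step2}, with the tie‑breaking made rigorous by \autoref{lemma:maxind}), and proves (iv) by re‑running the \cite{CaiDW16} core/tail accounting and its \cite{GoldnerK16} prior‑independent variant (the paper's \autoref{lemma:step3} and \autoref{lemma:step3alt}). This is essentially the same approach.

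One small technical difference worth flagging: you define the benchmark as a ``pointwise'' quantity $\E\bigl[\sum_j \max_i X_{i,j}\bigr]$, taking the max \emph{after} realizing the ghosts, whereas the paper first averages over ghosts to form $\fv_j(v_i) = v_{i,j}(1-\pregion_j(v_i)) + \ivv_j(v_{i,j})^+\pregion_j(v_i)$ and then sets $\ub = \sum_j\E_v[\max_i \fv_j(v_i)]$. By Jensen your benchmark dominates the paper's. Both are valid duality upper bounds on $\rev$, both satisfy the i.i.d./$\ell$‑independence structure needed for the boosting step, and both are dominated by $v_{i,j}$ so the second‑highest‑value argument carries through; but the paper's averaged form is what actually makes the decomposition into \single, \under, \over, \tail, \core clean (the $\pregion_j$ factors are probabilities to be split, not indicators), so if you were to work your version out in full you would land on the paper's form anyway once you write out the sums over ghost realizations.
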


The proof of \autoref{thm:mainformal} spans the rest of this section. We fix an auction setting $(n, m, \D)$ and $\epsilon > 0$. To simplify notation, we drop $m, \D$ from the arguments but retain $n$ when we want to emphasize the number of bidders. Our proof has three main steps: First, we use the duality framework of \cite{CaiDW16} to get a suitable upper bound on $\rev(n)$. As explained in \autoref{sec:overviewmultiplebidder}, the ``standard'' duality framework seems to be insufficient for our needs, and out first step is to show a new duality benchmark, called the independent utilities, or the $\iu(n')$-benchmark.

To define our benchmark, we first define the $\iu(n')$-virtual value of bidder $i \in [n]$. Let $v_i$ be the valuation (or the type) of bidder $i$ and let $w_{-i}$ be the valuations of $n' - 1$ ``ghost'' bidders. We first partition the set $\V$ of all possible valuations for bidder $i$ into $m + 1$ regions based on $w_{-i}$. Namely, we define a region $\region^{(n')}_j(w_{-i})$ for each item $j \in [m]$ and also a region $\region^{(n')}_0(w_{-i}) = \V \setminus \bigcup_{j \in [m]} \region^{(n')}_j(w_{-i})$ of all elements of $\V$ that are not in any of the $\region^{(n')}_j(w_{-i})$. For $j \in [m]$, we say that $v_i \in \region^{(n')}_j(w_{-i})$ if:
\begin{equation}
\label{eq:region}
\region^{(n')}_j(w_{-i}) = \set*{ v_i \in \V ~\middle\vert~ v_{i,j} \geq \max(w_{-i})\vert_j \wedge j \text{~smallest in~} \argmax_{j'} \paren*{ v_{i,j'} - \max(w_{-i})\vert_{j'} } } .
\end{equation}

Having defined these regions for each $w_{-i}$, we next consider the probability, for all $j \in [m]$, that the valuation $v_i \in \region^{(n')}_j(w_{-i})$ where the probability is over choice of the types $w_{-i}$ of the ghost bidders. Formally, 
\begin{equation}
\label{eq:pregion}
\pregion^{(n')}_j(v_i) = \Pr_{w_{-i} \sim \D^{n'-1}}\paren*{ v_i \in \region^{(n')}_j(w_{-i}) } .
\end{equation}
Next, for $j \in [m]$, define the $\iu(n')$-virtual value of bidder $i$ for item $j$ as:
\begin{equation}
\label{eq:iuvv}
\fv^{(n')}_j(v_i) = v_{i,j} \cdot \paren*{ 1 - \pregion^{(n')}_j(v_i) } + \ivv_j(v_{i,j})^+ \cdot \pregion^{(n')}_j(v_i) .
\end{equation}
We drop the superscript $n'$ from all the above notations when it is clear from context. We now define the $\iu(n')$-benchmark as:
\begin{equation}
\label{eq:iu}
\ub(n, n') = \sum_{j = 1}^m \E_{v \sim \D^n} \sq*{ \max_{i \in [n]} \set*{ \fv^{(n')}_j(v_i) } } .
\end{equation}
The first step in our proof is to show that the $\iu(n')$-benchmark indeed upper bounds $\rev(n)$. 
\begin{lemma}
\label{lemma:step1}
It holds that:
\[
\rev(n) \leq \ub(n, n') .
\]
\end{lemma}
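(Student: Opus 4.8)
The plan is to follow the duality framework of \cite{CaiDW16}, but to instantiate the Lagrangian dual variables with a flow that is derived from the ``ghost bidder'' second-price auction rather than from the second-price auction among the $n$ real bidders. Recall that the \cite{CaiDW16} approach writes $\rev(n)$ as the value of a linear program in the interim allocation/payment variables, takes its partial Lagrangian relaxing the BIC constraints, and shows that for any choice of dual multipliers $\lambda_i(v_i, v_i')$ forming a valid ``flow'' on the type graph of each bidder, the optimal revenue is upper bounded by the expected virtual welfare $\sum_j \E_{v\sim\D^n}[\max_i \tilde\Phi_j(v_i)]$, where $\tilde\Phi_j$ is the virtual value induced by the flow. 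The point of choosing a good flow is to make these virtual values as small as possible while keeping the bound valid. I would choose, for each bidder $i$, a flow that routes type $v_i$ toward the type obtained by lowering coordinate $j$ to the next value below, with weight proportional to the probability (over the ghost bidders $w_{-i}\sim\D^{n'-1}$) that $v_i\in\region^{(n')}_j(w_{-i})$ — i.e.\ the probability that item $j$ is the favorite item of bidder $i$ (with nonnegative utility) in the ghost auction. This is exactly the flow that, plugged into the \cite{CaiDW16} virtual-value formula, yields $\fv^{(n')}_j$ as in \eqref{eq:iuvv}: on the event that $j$ is the ghost-favorite item, the coordinate contributes its Myerson ironed virtual value $\ivv_j(v_{i,j})^+$; otherwise it contributes the value $v_{i,j}$ itself.

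Concretely, the key steps in order are: (1) State the \cite{CaiDW16} duality/virtual-welfare upper bound (their Lemma on partial Lagrangian duality) specialized to additive bidders and arbitrary flows, giving $\rev(n)\le \sum_j\E_{v\sim\D^n}[\max_i \tilde\Phi_j(v_i,\lambda)]$ for every valid flow $\lambda$. (2) Exhibit the specific flow $\lambda^{(n')}$ described above and verify it is a valid flow for each bidder — the weights are nonnegative and, summed over the outgoing/incoming edges, they form a legitimate unit of ``mass'' that can be pushed from higher to lower types; this just needs that the $\region^{(n')}_j(w_{-i})$ partition $\{v_i : \exists j,\ v_{i,j}\ge\max(w_{-i})|_j\}$ together with $\region^{(n')}_0$, which is immediate from \eqref{eq:region}. (3) Compute $\tilde\Phi_j(v_i,\lambda^{(n')})$ for this flow and show it equals $\fv^{(n')}_j(v_i)$ as defined in \eqref{eq:iuvv}; this is the bookkeeping step where the flow contribution on item $j$ turns $v_{i,j}$ into $\ivv_j(v_{i,j})^+$ with probability $\pregion^{(n')}_j(v_i)$ and leaves it as $v_{i,j}$ otherwise. (4) Plug into (1) and recognize the right-hand side as $\ub(n,n')$ from \eqref{eq:iu}.

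I expect the main obstacle to be step (3): carefully matching the \cite{CaiDW16} virtual-value bookkeeping — in particular handling the ironing (the flow must be constant on ironed intervals, which is why $\ivv_j$ rather than $\vv_j$ appears) and the $(\cdot)^+$ truncation that comes from the individual-rationality / ``favorite item has nonnegative utility'' condition built into \eqref{eq:region} — with the region-based definition used here. One has to argue that restricting the flow to types where some coordinate beats the ghost max is exactly what produces the $\ivv_j(v_{i,j})^+$ term (the truncation at $0$ corresponding to the IR constraint), and that on the complementary region $\region^{(n')}_0$ no mass is pushed so the virtual value is just the value. The remaining steps are essentially a transcription of \cite{CaiDW16}: once the flow is fixed and shown valid, the upper bound is automatic, and the only real content is that this particular flow is natural enough (being ghost-bidder based, hence i.i.d.\ across bidders and independent of $n$) to later satisfy the extra properties needed in \autoref{sec:overviewmultiplebidder}, which is not needed for this lemma itself.
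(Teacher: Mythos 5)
Your plan is essentially the paper's proof: the paper invokes Corollary~$27$ of \cite{CaiDW16} (the flow-based duality upper bound, reproved as \autoref{lemma:cor27} in \autoref{app:CDW}) once for each fixed ghost profile $w_{-i}$, averages the resulting inequality over $w_{-i}\sim\D^{n'-1}$ to produce the probabilities $\pregion^{(n')}_j(v_i)$, and then uses $\sum_i\pi_{i,j}(v)\le1$ to pass from the per-bidder sum to a max --- which is equivalent, by linearity of the induced virtual value in the dual multipliers, to building the averaged ghost-bidder flow directly as you propose. The one small inaccuracy is step~(3): the \cite{CaiDW16} flow on a fixed region produces a \emph{restricted} ironing $\vv_j^{v_{i,-j}}(\cdot)\le\ivv_j(\cdot)^+$, not $\ivv_j(\cdot)^+$ itself (mass cannot cross the region boundary, so the flow cannot reproduce the global ironing), so the virtual value is ``at most'' $\fv^{(n')}_j(v_i)$ rather than ``equals''; since only an upper bound is needed, this does not affect the conclusion.
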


 The next step of the proof is to show that, under the assumption of the theorem, our $\iu(n')$-benchmark increases (significantly) as the number of bidders increases. We have:
\begin{lemma}
\label{lemma:step2} 
Assume that $\paren*{ 1- \epsilon } \cdot \rev(n) > \vcg(n')$. We have:
\[ 
\ub(n, n') \leq \frac{1}{20} \cdot \ub(n', n')
\] 
\end{lemma}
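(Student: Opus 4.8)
The plan is to mimic the proof of \autoref{lemma:boostcore}, but now applied coordinate-by-coordinate to the $\iu(n')$-virtual welfare rather than to raw welfare, exploiting the fact that (by construction) the ghost bidders used to define $\fv^{(n')}_j$ are sampled i.i.d.\ and independently of the bidder count. Fix an item $j \in [m]$. Define $X_i = \fv^{(n')}_j(v_i)$ for $i \in [n']$, where $v_1,\dots,v_{n'} \sim \D^{n'}$ are independent. The key observation, parallel to \autoref{lemma:boostcore}, is: $\E[\max_{i\in[n]} X_i]$ is the per-item contribution to $\ub(n,n')$ (since the distribution of $X_i$ does not depend on how many real bidders there are --- Property~\ref{item:coreprop2}), and $\E[\max_{i\in[n']} X_i]$ is the per-item contribution to $\ub(n',n')$. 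Let $E_j$ be the event that the lexicographically-first maximizer of $X_i$ over $i \in [n']$ lies in $[n]$; since the $X_i$ are i.i.d., $\Pr(E_j) \le n/n' = \epsilon/20$ (with equality up to tie-breaking). On $E_j$, $\max_{i\in[n]} X_i = \max_{i\in[n']} X_i$; on $\overline{E_j}$, $\max_{i\in[n]} X_i \le \snd_{i\in[n']} X_i$, the second-highest of the $X_i$'s. Thus
\[
\E\sq*{\max_{i\in[n]} X_i} \le \Pr(E_j)\cdot\E\sq*{\max_{i\in[n']} X_i \mid E_j} + \E\sq*{\snd_{i\in[n']} X_i}.
\]

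Next I would sum over $j \in [m]$. The first terms sum to at most $\frac{\epsilon}{20}\,\E[\max_{i\in[n']}X_i \mid E_j]$ per item, and because the value of the maximum is independent of the identity of the maximizer, $\E[\max_{i\in[n']}X_i \mid E_j] = \E[\max_{i\in[n']}X_i]$, so these terms sum to $\frac{\epsilon}{20}\,\ub(n',n')$. For the second terms, I need $\sum_j \E[\snd_{i\in[n']}\fv^{(n')}_j(v_i)] \le \vcg(n')$. This is exactly where Property~\ref{item:coreprop5} enters: by \autoref{lemma:ivv-ub} we have $\ivv_j(v_{i,j})^+ \le \max(\ivv_j(v_{i,j}),0)$, and more to the point $\ivv_j(v_{i,j})^+ \le v_{i,j}$ when $v_{i,j}\ge 0$ --- actually $\ivv_j(x)\le x$ always and $x\ge 0$ so $\ivv_j(x)^+\le x$ --- hence from \eqref{eq:iuvv}, $\fv^{(n')}_j(v_i)$ is a convex combination of $v_{i,j}$ and $\ivv_j(v_{i,j})^+ \le v_{i,j}$, giving $\fv^{(n')}_j(v_i) \le v_{i,j}$. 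Therefore $\snd_{i\in[n']}\fv^{(n')}_j(v_i) \le \snd_{i\in[n']} v_{i,j}$, and summing over $j$, $\sum_j \E[\snd_{i\in[n']} v_{i,j}] = \vcg(n')$ since the VCG auction with $n'$ bidders charges, per item, the second-highest bid. This yields
\[
\ub(n,n') \le \frac{\epsilon}{20}\,\ub(n',n') + \vcg(n').
\]

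Finally I would invoke the hypothesis. We have $\vcg(n') < (1-\epsilon)\rev(n) \le (1-\epsilon)\,\ub(n,n')$ by \autoref{lemma:step1}. Substituting,
\[
\ub(n,n') < \frac{\epsilon}{20}\,\ub(n',n') + (1-\epsilon)\,\ub(n,n'),
\]
so $\epsilon\,\ub(n,n') < \frac{\epsilon}{20}\,\ub(n',n')$, i.e.\ $\ub(n,n') < \frac{1}{20}\,\ub(n',n')$ (the weak inequality claimed in the lemma follows a fortiori).

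The main obstacle I anticipate is the careful handling of tie-breaking in the definition of the event $E_j$ and in the $\argmax$'s hidden in the $\iu(n')$-virtual value itself: the informal \autoref{lemma:boostcore} explicitly "disregards all issues about tie-breaking," but in the formal proof one must confirm that $\Pr(E_j)\le n/n'$ still holds (a symmetrization argument over which of the $n'$ i.i.d.\ indices attains the lexicographically-first maximum, which needs that ties among the $X_i$ are broken in a way symmetric across bidders --- true since they are i.i.d.\ and one can break remaining ties by bidder index), and that the bound $\max_{i\in[n]}X_i \le \snd_{i\in[n']}X_i$ on $\overline{E_j}$ is valid when there are ties. A secondary subtlety is making sure $\E[\max_{i\in[n']}X_i\mid E_j]$ really equals the unconditional expectation: this requires that the \emph{value} of the maximum is independent of the event "the maximizing index lies in $[n]$", which holds by exchangeability of the i.i.d.\ sequence $(X_1,\dots,X_{n'})$. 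Both points are standard but should be spelled out; everything else is the linearity-of-expectation bookkeeping sketched above.
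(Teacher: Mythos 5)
Your overall strategy is the paper's: decompose each item's contribution $\E[\max_{i\in[n]}\fv^{(n')}_j(v_i)]$ by conditioning on whether the maximizer over $[n']$ lands in $[n]$, bound the off-event by the second-highest value (and hence, via $\fv^{(n')}_j(v_i)\le v_{i,j}$ from \autoref{lemma:ivv-ub} and \autoref{eq:iuvv}, by $\vcg(n')$), remove the conditioning on the on-event, and close with the hypothesis plus \autoref{lemma:step1}. The one place you get into trouble is exactly the tie-breaking step you flag as the "main obstacle," and your proposed resolution does not work. With the lexicographic (or bidder-index) tie-breaking you describe, $E_j$ is the event that the \emph{smallest} index attaining the maximum lies in $[n]$; since $[n]$ is precisely the set of small indices, ties bias this event \emph{toward} happening, so $\Pr(E_j)\ge n/n'$, not $\le n/n'$ — for a point-mass distribution $\Pr(E_j)=1$. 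Your parenthetical also contradicts itself: "ties broken symmetrically across bidders" and "break remaining ties by bidder index" are mutually exclusive; the latter is exactly the non-symmetric rule that breaks the bound. The paper sidesteps both of your flagged concerns at once by replacing the deterministic maximizer with the uniformly random tie-breaking rule $\maxind$ of \autoref{eq:maxind}: under this rule $\Pr(\maxind(\fv^{(n')}_j(v))\in[n])=n/n'$ exactly by symmetry, and \autoref{lemma:maxind} shows the value of the maximum is independent of $\maxind$'s output, which is precisely the "exchangeability" fact you invoke informally to drop the conditioning. With that substitution (and noting, as you do, that $\snd_{i\in[n']}v_{i,j}\ge 0$ so the $\Pr(\overline{\mathcal{E}})$ factor can be dropped) your bookkeeping goes through and matches the paper line by line.
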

We mention that the choice of the constant $20$ in \autoref{lemma:step2} is arbitrary and it can be replaced by any other value as long as the value of $n'$ is changed accordingly. In fact, it can even be function of $\epsilon$. As the last step of the proof, we show the following upper bounds on $\ub(n', n')$.

\begin{lemma}
\label{lemma:step3}
For all $n'' \leq n'$, we have:
\[
\ub(n'', n') \leq 2 \cdot \bvcg(n') + 6 \cdot \srev(n') .
\]
\end{lemma}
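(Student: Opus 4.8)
Let me think about what needs to be shown. We have $\ub(n'', n') = \sum_j \E_{v\sim\D^{n''}}[\max_{i\in[n'']} \fv_j^{(n')}(v_i)]$, where the virtual value $\fv_j^{(n')}(v_i) = v_{i,j}(1 - \pregion_j^{(n')}(v_i)) + \ivv_j(v_{i,j})^+ \pregion_j^{(n')}(v_i)$ is a convex combination of the raw value and the (nonneg. ironed) Myerson virtual value, with weight $\pregion_j^{(n')}(v_i)$ being the probability that, against $n'-1$ ghost bidders, bidder $i$'s favorite surplus item is $j$. This is exactly the CDW-style duality benchmark, so the plan is to follow the CDW analysis essentially line-by-line, being careful that our $\pregion_j$ is defined via ghost bidders rather than the actual $n''-1$ opponents.

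The approach: Fix the realized valuation profile $v \in \V^{n''}$ and, for each item $j$, let $i^*(j) = \argmax_{i\in[n'']} \fv_j^{(n')}(v_i)$ be the bidder attaining the max virtual value for item $j$. Split the contribution $\fv_j^{(n')}(v_{i^*(j)})$ into its two pieces according to the convex combination. The "ironed virtual value" piece, $\sum_j \E[\max_i \ivv_j(v_{i,j})^+ \cdot (\text{something} \le 1)]$, is bounded above by $\sum_j \E[\max_i \ivv_j(v_{i,j})^+] = \srev(n')$ by Myerson's Lemma (\autoref{prop:myerson}) applied with $n'$ bidders — here we use $n'' \le n'$ so that $\E_{v\sim\D^{n''}}[\max_{i\in[n'']}\ivv_j(v_{i,j})^+] \le \E_{v\sim\D^{n'}}[\max_{i\in[n']}\ivv_j(v_{i,j})^+]$ since adding bidders only increases the max of nonnegative quantities. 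The harder "raw value" piece, $\sum_j \E[\max_i v_{i^*(j),j}(1 - \pregion_j^{(n')}(v_{i^*(j)}))]$ — or more precisely the part of it where the weight $1-\pregion_j$ is "large" — is what must be charged to $\bvcg(n')$.

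The key idea for the raw-value piece (this is the main obstacle, and it is where the CDW machinery does its work) is: design a BVCG auction on $n'$ bidders whose entry fee for bidder $i$, given opponents' bids $v_{-i}$, is tuned so that bidder $i$ pays roughly his surplus $\sum_j \max(v_{i,j} - \max_{i'\ne i} v_{i',j}, 0)$ when that surplus is concentrated, and so that the event defining $\region_j^{(n')}(w_{-i})$ lines up with "item $j$ is bidder $i$'s favorite surplus item against his opponents." Since $\pregion_j^{(n')}(v_i)$ is precisely the probability (over ghost bidders drawn from $\D^{n'-1}$) that $v_i$ falls in region $j$, and since in a $\bvcg(n')$ auction bidder $i$'s actual opponents are also $n'-1$ i.i.d. draws from $\D$, the quantity $\sum_j v_{i,j}(1-\pregion_j^{(n')}(v_i))$ relates to the expected VCG-price-plus-entry-fee that bidder $i$ pays. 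One then needs a concentration / Markov-type argument (à la \autoref{lemma:variance-ub}, which is quoted precisely for this purpose) to argue that setting the entry fee at a constant fraction of the expected surplus still gets the bidder to pay, losing only a constant factor — this is where the constants $2$ and $6$ come from. I would structure this as: (a) write $\ub(n'',n') \le \srev(n') + \sum_j\E[\max_i v_{i,j}(1-\pregion_j^{(n')}(v_i))]$; (b) further split the second sum by whether bidder $i$'s total surplus is dominated by one item (the "single" case, charged to $\srev$) or spread out (charged to $\bvcg$ via the entry fee); (c) bound each using \autoref{lemma:variance-ub} and the definition of the regions.

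I expect the main obstacle to be part (b)–(c): correctly defining the BVCG entry fees in terms of the ghost-bidder regions and verifying that the resulting auction's revenue dominates the relevant piece of $\ub(n'',n')$ up to the stated constants, handling the tie-breaking in \eqref{eq:region} and the $(\cdot)^+$ truncations carefully. Everything else is bookkeeping: the $\srev$ bound is immediate from Myerson, and the monotonicity-in-number-of-bidders step using $n''\le n'$ is where that hypothesis gets used. Since the paper flags that the constant $20$ in \autoref{lemma:step2} (hence $n'$) is flexible, I would not optimize constants here and would aim only for \emph{some} absolute constants, matching $2$ and $6$ to whatever the entry-fee argument naturally yields. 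I would defer the most computational parts of (c) to \autoref{app:CDW}, which the excerpt says contains the proofs of standard CDW-type lemmas.
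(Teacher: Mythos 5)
Your proposal follows the paper's approach: peel off the ironed-virtual-welfare piece (bounded by $\srev(n')$ via Myerson, using $n'' \leq n'$ monotonicity), then decompose the raw-value piece in CDW style and charge the resulting $\core$ term to $\bvcg(n')$ via Chebyshev (using \autoref{lemma:variance-ub}) with an entry fee set slightly below the expected capped surplus. That matches the paper's two-step structure (\autoref{lemma:step3step1} followed by \autoref{lemma:bvcgcore}), and you have correctly singled out the Core/BVCG concentration as the one genuinely new piece. One correction to your step~(b): the paper's decomposition of the raw-value piece actually produces four terms, $\under$, $\over$, $\tail$, and $\core$, with each of the first three bounded by $\srev(n')$ using a \emph{different} simple auction (VCG with reserve for $\under$, sequential posted prices for $\over$, Ronen's auction for $\tail$), rather than a single ``one item dominates'' split; and the paper's $\single$ term is the ironed-virtual-welfare piece you already disposed of in~(a), not a surplus case. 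Also, the simple-auction lemmas you would invoke live in \autoref{app:standard}, not \autoref{app:CDW} (the latter contains only the duality flow argument for \autoref{lemma:step1}).
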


\begin{lemma}
\label{lemma:step3alt}
If $\D$ is a product of regular distributions, then, for all $n'' \leq n'$, we have:
\[
\ub(n'', n') \leq 17 \cdot \pibvcg(n' + 1). 
\]
\end{lemma}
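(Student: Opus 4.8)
The plan is to follow the proof of \autoref{lemma:step3} --- which bounds $\ub(n'',n')$ by a constant combination of $\srev(\D,n')$ and a BVCG revenue --- step for step, but to route each of its two ingredients to a \emph{prior-independent} mechanism with one more bidder: the separate-selling ingredient through Bulow--Klemperer, and the BVCG ingredient through the single-sample device of \cite{GoldnerK16}. Throughout, since a maximum over more bidders can only grow we have $\ub(n'',n')\le\ub(n',n')$, so I may assume $n''=n'$.

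\emph{The separate-selling ingredient.} Writing out $\fv^{(n')}_j$ and using $\pregion^{(n')}_j(v_i)\le 1$ together with $\ivv_j(v_{i,j})^+\ge 0$, we get $\fv^{(n')}_j(v_i)\le v_{i,j}\bigl(1-\pregion^{(n')}_j(v_i)\bigr)+\ivv_j(v_{i,j})^+$; since the maximum over bidders of a sum is at most the sum of the maxima, $\ub(n',n')\le\srev(\D,n')+A$, where $A:=\sum_j\E_{v\sim\D^{n'}}\bigl[\max_i v_{i,j}\bigl(1-\pregion^{(n')}_j(v_i)\bigr)\bigr]$ and the $\ivv^+$ piece equals $\srev(\D,n')$ by Myerson's lemma (\autoref{prop:myerson}). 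For the $\srev$ term I would use the regularity hypothesis: Bulow--Klemperer (\autoref{prop:bk}) gives $\srev(\D,n')\le\vcg(\D,n'+1)$, and the VCG auction is the BVCG auction with no special bidders and all entry fees $0$, hence prior-independent, so $\srev(\D,n')\le\pibvcg(\D,n'+1)$. This is the only place regularity is needed in the outer argument, and the same step is reused each time the argument peels off another copy of $\srev(\D,n')$.

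\emph{The BVCG ingredient.} It remains to bound $A$. Following the core--tail and duality analysis of \cite{CaiDW16} (as in \autoref{lemma:step3}), I would bound $A$ by a constant multiple of $\srev(\D,n')$ plus the revenue of a specific BVCG auction on $n'$ bidders whose entry fee $\fee_{i,v_{-i}}$ is essentially the median of bidder $i$'s VCG surplus $\sum_j\bigl(v_{i,j}-\max_{i'\neq i}v_{i',j}\bigr)^+$ against the other bidders. To make this prior-independent, I would add an $(n'+1)$-st ``sample'' bidder --- a special bidder that receives and pays nothing --- and set bidder $i$'s entry fee to a constant times the VCG surplus that the sample's reported type obtains against $v_{-i}$; this depends only on $v_{-i}$ and is non-negative, hence a legal BVCG entry fee, and since the $n'-1$ ghosts defining $\pregion^{(n')}_j$ are i.i.d.\ with the $n'-1$ non-$i$ real bidders, the relevant surplus distributions match up. Because the sample's type is i.i.d.\ with bidder $i$'s type, bidder $i$'s true surplus exceeds its entry fee with probability at least $\tfrac{1}{2}$, so the revenue from the sample-based fees is within a constant factor of the revenue from the median fees --- provided the surplus is not dominated by a rare, huge realization. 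To secure that, I would first peel off a ``tail'' piece of each per-item surplus $\bigl(v_{i,j}-\max_{i'\neq i}v_{i',j}\bigr)^+$ (its defining event has small probability, so it costs at most another $\srev(\D,n')$, handled as above), and then exploit that the remaining bounded ``core'' piece is concentrated: its second moment is controlled via \autoref{lemma:variance-ub}, so its median is within a constant of its mean and a single sample is a faithful estimate. Collecting the pieces gives $A\le c\cdot\pibvcg(\D,n'+1)$ for an absolute constant $c$, and with the previous paragraph this yields $\ub(n'',n')\le 17\cdot\pibvcg(\D,n'+1)$ after accounting for the constants.

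The main obstacle is exactly this last maneuver --- certifying the surplus term by a prior-independent BVCG auction that spends only one extra bidder. Two points are delicate. First, a single sample must estimate a bidder's ``typical'' surplus even though that surplus is a sum over up to $m$ items; this is what forces the core/tail split together with the second-moment bound of \autoref{lemma:variance-ub}, and it is also where regularity may re-enter through the \cite{GoldnerK16}-style analysis. Second, the sample-based entry fees are correlated across bidders through the common sample, so one must re-verify that the allocation analysis of \cite{CaiDW16} underlying \autoref{lemma:step3} still goes through when the entry fees are computed from the sample rather than from $\D$ --- that is, transplant the argument of \cite{GoldnerK16} onto the virtual value $\fv^{(n')}_j$ in place of the CDW virtual value.
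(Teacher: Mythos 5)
Your plan is essentially the paper's own proof: start from the same decomposition of $\ub(n',n')$ into $\srev(\D,n')$-type pieces plus a core/surplus piece, route each $\srev(\D,n')$ through Bulow--Klemperer (using regularity) to $\vcg(\D,n'+1)\le\pibvcg(\D,n'+1)$, and certify the core piece by a BVCG auction whose entry fee for bidder $i$ is the surplus $\U_{i,w_{-i}}(s)$ of an $(n'{+}1)$-st ``sample'' bidder $s$ against $w_{-i}$, exactly the \cite{GoldnerK16}-style single-sample device the paper implements in \autoref{lemma:bvcgcorepi}; the concentration step you describe (cap the per-item surplus, second-moment bound via \autoref{lemma:variance-ub}, median within a constant of the mean) is the paper's $\capU$/\autoref{lemma:variancecaputil} argument. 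The one place you are hand-wavy is the accounting: the paper makes the i.i.d.\ exchangeability argument precise by conditioning on both $s$ and $w_i$ lying in the ``good'' set $\N_{i,w_{-i}}$ (paying a $\Pr(\N_{i,w_{-i}})^2$ factor), splits pairs $(i,w_{-i})$ into high/low according to whether $\E[\capU_{i,w_{-i}}]\ge 6\,r^{(i)}_{\ron}(w_{-i})$, and only then assembles the explicit $4+10=14$ copies of $\srev$ and $7$ of $\pibvcg$ that sum to $17\cdot\pibvcg(\D,n'+1)$; your ``collecting the pieces'' would need that bookkeeping to reach the stated constant.
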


Once we have these lemmas, \autoref{thm:mainformal} is almost direct. We include the two-line proof below.

\begin{proof}[Proof of \autoref{thm:mainformal}]
By \autoref{lemma:step1} and \autoref{lemma:step2}, we have that
\[
\rev(n) \leq \ub(n, n') \leq \frac{1}{20} \cdot \ub(n', n') .
\]
Next, using \autoref{lemma:step3} and also \autoref{lemma:step3alt} for the ``furthermore'' part, we have:
\[
\rev(n) \leq \max\paren*{ \bvcg(\D, n'), \srev(\D, n') } \hspace{1cm}\text{and}\hspace{1cm} \rev(n) \leq \pibvcg(\D, n' + 1) .
\]
\end{proof}

\subsection{Proof of \autoref{lemma:step1}}

\begin{proof}
Let $\auc$ be the auction that maximizes revenue amongst all (Bayesian) truthful auctions, and let $(\barpi, \barp)$ be as defined \autoref{sec:prelimauc}. By definition, we have that $\rev(n) = \sum_{i = 1}^n \E_{v_i \sim \D} \sq*{ \barp_i(v_i) } $.

As the partition of $\V$ defined by $\set{ \region^{(n')}_j(w_{-i}) }_{j \in \set*{0} \cup [m]}$ is ``upwards closed''\footnote{A partition $\set{ R_j }_{j \in \set*{0} \cup [m]}$ is said to be upwards closed if for all $j \in [m]$, $v \in R_j$, and all $\alpha > 0$, if $v + \alpha \cdot e_j \in \V$, then $v_i + \alpha \cdot e_j \in R_j$ (Here, $e_j$ means the $j^{\text{th}}$ standard basis of $\reals^m$). } for all $w_{-i} \in \V^{n' - 1}$, we have from Corollary $27$ of \cite{CaiDW16} that\footnote{To make this paper self-contained, we recall their proof of Corollary $27$ as \autoref{lemma:cor27} in \autoref{app:CDW}.}, for all $w_{-i}$:
\begin{align*}
\rev(n) &= \sum_{i = 1}^n \E_{v_i \sim \D} \sq*{ \barp_i(v_i) } \\
&\leq \sum_{i = 1}^n \sum_{j = 1}^m \E_{v_i \sim \D} \sq*{ \barpi_{i,j}(v_i) \cdot \paren*{ v_{i,j} \cdot \mathds{1}\paren*{ v_i \notin \region^{(n')}_j(w_{-i}) } + \ivv_j(v_{i,j})^+ \cdot \mathds{1}\paren*{ v_i \in \region^{(n')}_j(w_{-i}) } } } .
\end{align*}

Therefore, by a weighted sum of the above equation over $w_{-i} \in \V^{n' - 1}$, 
\begin{align*}
\rev(n) &\leq \sum_{i = 1}^n \sum_{j = 1}^m \E_{v_i \sim \D} \sq*{ \barpi_{i,j}(v_i) \cdot \paren*{ v_{i,j} \cdot \paren*{ 1 - \pregion^{(n')}_j(v_i) } + \ivv_j(v_{i,j})^+ \cdot \pregion^{(n')}_j(v_i) } } \tag{\autoref{eq:pregion}} \\
&\leq \sum_{i = 1}^n \sum_{j = 1}^m \E_{v_i \sim \D} \sq*{ \barpi_{i,j}(v_i) \cdot \fv^{(n')}_j(v_i) } \tag{\autoref{eq:iuvv}} .
\end{align*}

Recall that the function $\barpi(v)$ is just the expected value (over the randomness in $v_{-i}$) of $\pi_i(v_i, v_{-i})$. Using this and the fact that $\sum_{i = 1}^n \pi_{i,j}(v) \leq 1$ (from \autoref{eq:allocconstraint}), we have that:
\begin{align*}
\rev(n) &\leq \sum_{i = 1}^n \sum_{j = 1}^m \sum_{v_i \in \V} f(v_i) \cdot \paren*{ \sum_{v_{-i} \in \V^{n-1}} f^*(v_{-i}) \cdot \pi_{i,j}(v_i, v_{-i}) } \cdot \fv^{(n')}_j(v_i) \\
&\leq \sum_{i = 1}^n \sum_{j = 1}^m \sum_{v \in \V^n} f^*(v) \cdot \pi_{i,j}(v_i, v_{-i}) \cdot \fv^{(n')}_j(v_i) \\
&\leq \sum_{j = 1}^m \sum_{v \in \V^n} f^*(v) \cdot \max_{i \in [n]}\set*{ \fv^{(n')}_j(v_i) } \tag{As $\sum_{i = 1}^n \pi_{i,j}(v) \leq 1$ and $\fv^{(n')}_j(v_i) \geq 0$} \\
&\leq \sum_{j = 1}^m \E_{v \sim \D^n} \sq*{ \max_{i \in [n]} \set*{ \fv^{(n')}_j(v_i) } } \\
&= \ub(n, n') \tag{\autoref{eq:iu}} .
\end{align*}
\end{proof}

\subsection{Proof of \autoref{lemma:step2}}

We start by showing a technical lemma saying that if $D$ is a distribution over a finite set $S \subseteq \reals$ and $X = x_1, \cdots, x_k$ is a $k$-length vector of independent samples from $D$, the maximum value $x_i$ is independent of the maximizer $i$ (with the ``correct'' tie-braking rule). The tie breaking rule we use is the randomized tie breaking rule $\maxind$ satisfying, for all $i \in [k]$, that:
\begin{equation}
\label{eq:maxind}
\Pr\paren*{ \maxind(X) = i } = \begin{cases}
0, &\text{~if~} i \notin \argmax_{i'} \set*{ x_{i'} } \\
\frac{1}{ \card*{ \argmax_{i'} \set*{ x_{i'} } } }, &\text{~if~} i \in \argmax_{i'} \set*{ x_{i'} } 
\end{cases}
\end{equation}
It holds that:
\begin{lemma}
\label{lemma:maxind}
Let $k > 0$, $S \subseteq \reals$ be a finite set, and $D$ be a distribution over $S$. For all $s \in S$ and $i \in [k]$, we have that
\[
\Pr_{ X \sim D^k }\paren*{ \max_{i'} x_{i'} = s \mid \maxind(X) = i } = \Pr_{ X \sim D^k }\paren*{ \max_{i'} x_{i'} = s } .
\]
\end{lemma}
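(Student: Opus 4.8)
The statement asserts that, for $k$ i.i.d.\ samples with the symmetric (uniform-over-ties) tie-breaking rule $\maxind$, the identity of the argmax is independent of the value of the max. The plan is to exploit the exchangeability of $x_1,\dots,x_k$ together with the symmetry built into $\maxind$. First I would observe that, by \autoref{eq:maxind}, the event $\set*{\maxind(X) = i}$ has probability $\Pr(\maxind(X)=i) = \tfrac1k$ for every $i \in [k]$: conditioning on the (unordered) multiset of values, the rule picks each maximizer with equal probability $1/|\argmax|$, and by exchangeability every index is equally likely to be a maximizer, so summing gives $1/k$. This is the symmetry fact that does all the work.

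Next I would argue the joint distribution of $(\max_{i'} x_{i'}, \maxind(X))$ is a product. Fix $s \in S$ and $i \in [k]$. I would write
\[
\Pr\paren*{ \max_{i'} x_{i'} = s \wedge \maxind(X) = i } = \sum_{X : \max = s} \Pr(X=x) \cdot \Pr(\maxind(X) = i \mid X = x),
\]
and note that $\Pr(\maxind(X)=i \mid X=x)$ depends on $x$ only through the set of coordinates achieving the maximum. Applying the permutation of coordinates that swaps $i$ and any other index leaves $D^k$ invariant and permutes the argmax set accordingly; summing over all $k!$ permutations and dividing shows that, for each fixed value $s$ of the max,
\[
\Pr\paren*{ \max_{i'} x_{i'} = s \wedge \maxind(X) = i } = \frac1k \cdot \Pr\paren*{ \max_{i'} x_{i'} = s },
\]
since the total probability of the events $\set*{\max = s \wedge \maxind = i'}$ over $i' \in [k]$ is $\Pr(\max = s)$ and each term is equal by symmetry. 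Dividing both sides by $\Pr(\maxind(X)=i) = \tfrac1k$ yields the claim (the conditioning event has positive probability since every value in the support has positive probability, so the max equals some element of $S$ with positive probability, and by symmetry $\maxind$ hits every index with probability $1/k$).

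The only mildly delicate point — and the step I would be most careful about — is the correct handling of ties: it is precisely the randomized, uniform-over-the-argmax nature of $\maxind$ that makes the symmetrization argument go through, whereas a deterministic rule (e.g.\ smallest-index) would break the independence. Everything else is bookkeeping with exchangeability. I would phrase the symmetrization cleanly by conditioning on the order statistics (equivalently, the unordered tuple) of $X$: given the order statistics, the vector $X$ is a uniformly random one of its distinct permutations, the value $\max_{i'} x_{i'}$ is determined, and by the defining property of $\maxind$ the index $\maxind(X)$ is uniform over $[k]$ independently of which permutation was realized. Taking expectations over the order statistics then gives the full independence, of which the displayed equation is an immediate consequence.
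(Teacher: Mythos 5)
Your proof is correct, and it takes a genuinely different route from the paper's. The paper proves the identity by a direct computation: it expands $\Pr(\max_{i'} x_{i'} = s \wedge \maxind(X) = i)$ as a sum over all possible argmax sets $S \ni i$, substitutes the tie-breaking weight $\tfrac{1}{|S|}$, reindexes to obtain $\sum_{k'=1}^k \tfrac{1}{k'}\binom{k-1}{k'-1}\Pr(x=s)^{k'}\Pr(x<s)^{k-k'}$, uses the identity $\tfrac{1}{k'}\binom{k-1}{k'-1} = \tfrac{1}{k}\binom{k}{k'}$, and closes with the binomial theorem. You instead argue by exchangeability: conditioning on the unordered multiset of values, the sample $X$ is a uniformly random permutation of that multiset, the max is determined, and --- precisely because $\maxind$ is uniform over the argmax set --- the index $\maxind(X)$ is uniform over $[k]$; taking expectations over the multiset gives full independence, of which the lemma is a special case. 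Your argument is shorter and more conceptual, sidestepping the binomial bookkeeping entirely, and it makes transparent why the randomized tie-break is essential (a deterministic rule such as smallest-index would destroy the conditional uniformity). The paper's computation is more self-contained but less illuminating about which feature of $\maxind$ is doing the work. Both arguments are valid.
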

\begin{proof}
We omit the subscript $X \sim D^k$ to keep the notation concise. Observe that we have $\Pr\paren*{ \maxind(X) = i } = \frac{1}{k}$ by symmetry and also that:
\[
\Pr\paren*{ \max_{i'} x_{i'} = s } = \paren*{ \Pr_{x \sim D}\paren*{ x \leq s } }^k - \paren*{ \Pr_{x \sim D}\paren*{ x < s } }^k .
\]
This means that it is sufficient to show that:
\[
\Pr\paren*{ \max_{i'} x_{i'} = s \wedge \maxind(X) = i } = \frac{1}{k} \cdot \paren*{ \paren*{ \Pr_{x \sim D}\paren*{ x \leq s } }^k - \paren*{ \Pr_{x \sim D}\paren*{ x < s } }^k } .
\]
We show this by considering all possible values of $\argmax_{i'} x_{i'}$. We have:
\begin{align*}
\Pr\paren*{ \max_{i'} x_{i'} = s \wedge \maxind(X) = i } &= \sum_{S \ni i} \Pr\paren*{ \max_{i'} x_{i'} = s \wedge \maxind(X) = i \wedge \argmax_{i'} x_{i'} = S } \\
&= \sum_{S \ni i} \frac{1}{\card*{S}} \cdot \Pr\paren*{ \max_{i'} x_{i'} = s \wedge \argmax_{i'} x_{i'} = S } \tag{\autoref{eq:maxind}} .
\end{align*}
We can calculate the term on the right:
\begin{align*}
\Pr\paren*{ \max_{i'} x_{i'} = s \wedge \maxind(X) = i } &= \sum_{S \ni i} \frac{1}{\card*{S}} \cdot \paren*{ \Pr_{x \sim D}\paren*{ x = s } }^{\card*{S}} \cdot \paren*{ \Pr_{x \sim D}\paren*{ x < s } }^{ k - \card*{S} } \\
&= \sum_{k' = 1}^k \frac{1}{k'} \cdot \binom{k - 1}{k' - 1} \cdot \paren*{ \Pr_{x \sim D}\paren*{ x = s } }^{k'} \cdot \paren*{ \Pr_{x \sim D}\paren*{ x < s } }^{ k - k' } \\
&= \sum_{k' = 1}^k \frac{1}{k} \cdot \binom{k}{k'} \cdot \paren*{ \Pr_{x \sim D}\paren*{ x = s } }^{k'} \cdot \paren*{ \Pr_{x \sim D}\paren*{ x < s } }^{ k - k' } .
\end{align*}
The Binomial theorem $(a+b)^n = \sum_{i = 0}^n \binom{n}{i} a^i b^{n-i}$ then gives:
\[
\Pr\paren*{ \max_{i'} x_{i'} = s \wedge \maxind(X) = i } = \frac{1}{k} \cdot \paren*{ \paren*{ \Pr_{x \sim D}\paren*{ x \leq s } }^k - \paren*{ \Pr_{x \sim D}\paren*{ x < s } }^k } .
\]

\end{proof}

We now prove \autoref{lemma:step2}.

\begin{proof}[Proof of \autoref{lemma:step2}]
Recall from the definition of $\ub(n, n')$ in \autoref{eq:iu} that $\ub(n, n') = \sum_{j = 1}^m \E_{v \sim \D^n} \sq*{ \max_{i \in [n]} \set*{ \fv^{(n')}_j(v_i) } }$. As we can always sample values for more bidders and not use them, we also have:
\[
\ub(n, n') = \sum_{j = 1}^m \E_{v \sim \D^{n'}} \sq*{ \max_{i \in [n]} \set*{ \fv^{(n')}_j(v_i) } } .
\]
For $v \in \D^{n'}$, we use $\fv^{(n')}_j(v)$ to denote the vector $\paren*{ \fv^{(n')}_j(v_1), \cdots, \fv^{(n')}_j(v_{n'}) }$. Over the random space defined by the distribution $v \sim D^{n'}$, define the event $\mathcal{E} := \maxind\paren*{ \fv^{(n')}_j(v) } \in [n]$. We get:
\begin{align*}
\ub(n, n') &= \sum_{j = 1}^m \Pr_{v \sim \D^{n'}}\paren*{ \mathcal{E} } \cdot \E_{v \sim \D^{n'}} \sq*{ \max_{i \in [n]} \set*{ \fv^{(n')}_j(v_i) } \mid \mathcal{E} } + \Pr_{v \sim \D^{n'}}\paren*{ \overline{\mathcal{E}} } \cdot \E_{v \sim \D^{n'}} \sq*{ \max_{i \in [n]} \set*{ \fv^{(n')}_j(v_i) } \mid \overline{\mathcal{E}} } \\
&= \sum_{j = 1}^m \frac{\epsilon}{20} \cdot \E_{v \sim \D^{n'}} \sq*{ \max_{i \in [n]} \set*{ \fv^{(n')}_j(v_i) } \mid \mathcal{E} } + \Pr_{v \sim \D^{n'}}\paren*{ \overline{\mathcal{E}} } \cdot \E_{v \sim \D^{n'}} \sq*{ \max_{i \in [n]} \set*{ \fv^{(n')}_j(v_i) } \mid \overline{\mathcal{E}} } \tag{As $\Pr_{v \sim \D^{n'}}\paren*{ \mathcal{E} } = \frac{n}{n'} = \frac{\epsilon}{20}$} .
\end{align*}
Next, note that the maximum over the first $n$ coordinates is at most the maximum over all the coordinates. Moreover, conditioned on $\overline{\mathcal{E}}$, it is at most the second highest value over all the coordinates. Using $\snd(\cdot)$ denote the second largest value in a set, we get that:
\[
\ub(n, n') \leq \sum_{j = 1}^m \frac{\epsilon}{20} \cdot \E_{v \sim \D^{n'}} \sq*{ \max_{i \in [n']} \set*{ \fv^{(n')}_j(v_i) } \mid \mathcal{E} } + \Pr_{v \sim \D^{n'}}\paren*{ \overline{\mathcal{E}} } \cdot \E_{v \sim \D^{n'}} \sq*{ \snd_{i \in [n']} \set*{ \fv^{(n')}_j(v_i) } \mid \overline{\mathcal{E}} } .
\]
From \autoref{lemma:ivv-ub}, we have $\ivv_j(v_{i,j}) \leq v_{i,j}$, we also have $\fv^{(n')}_j(v_i) \leq v_{i,j}$ for any $n'$. This means that the second highest value of $\set*{ \fv^{(n')}_j(v_i) }$ is at most the second highest value of $\set*{ v_{i,j} }$. We get:
\[
\ub(n, n') \leq \sum_{j = 1}^m \frac{\epsilon}{20} \cdot \E_{v \sim \D^{n'}} \sq*{ \max_{i \in [n']} \set*{ \fv^{(n')}_j(v_i) } \mid \mathcal{E} } + \Pr_{v \sim \D^{n'}}\paren*{ \overline{\mathcal{E}} } \cdot \E_{v \sim \D^{n'}} \sq*{ \snd_{i \in [n']} \set*{ v_{i,j} } \mid \overline{\mathcal{E}} } .
\]
To continue, note that all the second highest value of $\set*{ v_{i,j} }$ is non-negative. We get:
\[
\ub(n, n') \leq \sum_{j = 1}^m \frac{\epsilon}{20} \cdot \E_{v \sim \D^{n'}} \sq*{ \max_{i \in [n']} \set*{ \fv^{(n')}_j(v_i) } \mid \mathcal{E} } + \E_{v \sim \D^{n'}} \sq*{ \snd_{i \in [n']} \set*{ v_{i,j} } } .
\]
By definition, the second term is simply $\vcg(n')$. We remove the conditioning from the first term using \autoref{lemma:maxind}. We get:
\[
\ub(n, n') \leq \sum_{j = 1}^m \frac{\epsilon}{20} \cdot \E_{v \sim \D^{n'}} \sq*{ \max_{i \in [n']} \set*{ \fv^{(n')}_j(v_i) } } + \vcg(n') .
\]
From \autoref{eq:iu}, we conclude:
\[
\ub(n, n') \leq \frac{\epsilon}{20} \cdot \ub(n', n') + \vcg(n') .
\]
Under the assumption in \autoref{lemma:step2}, we have that $\paren*{ 1- \epsilon } \cdot \rev(n) > \vcg(n')$. From \autoref{lemma:step1}, we have $\rev(n) \leq \ub(n, n')$. Plugging in, we get:
\[
\ub(n, n') \leq \frac{\epsilon}{20} \cdot \ub(n', n') + \paren*{ 1- \epsilon } \cdot \ub(n, n') .
\]
Rearranging gives the lemma.

\end{proof}


\section{Proofs of \autoref{lemma:step3} and \autoref{lemma:step3alt}}
\label{sec:cdw}

We show \autoref{lemma:step3} and \autoref{lemma:step3alt} following the framework of \cite{CaiDW16}. The first step is common to both the lemmas and shows that $\ub(n'', n')$ is at most $4 \cdot \srev(n')$ plus an additional term corresponding to the term \core{} in \cite{CaiDW16}. This is captured in \autoref{lemma:step3step1}. The next step bounds \core{} in two different ways to show the two lemmas. These can be found in \autoref{sec:step3step2} and \autoref{sec:step3step2alt}.

\subsection{Step $1$ -- Decomposing $\ub(\cdot)$}
\label{sec:step3:decompose}

\begin{lemma}
\label{lemma:step3step1} For all $n'' \leq n'$, we have:
\[
\ub(n'', n') \leq 4 \cdot \srev(n') + \core ,
\]
where we define \core{} as:
\begin{align*}
r^*_{\ron, j}(x) &= \max_{y > x} y \cdot \Pr_{y' \sim \D_j}\paren*{ y' \geq y } &\forall& j \in [m] . \\
r^{(i)}_{\ron}(w_{-i}) &= \sum_{j = 1}^m r^*_{\ron, j}(\max(w_{-i})\vert_j) &\forall& i \in [n'], w_{-i} \in \V^{n'-1} . \\
\T_{i,j}(w_{-i}) &= r^{(i)}_{\ron}(w_{-i}) + \max(w_{-i})\vert_j &\forall& j \in [m], i \in [n'], w_{-i} \in \V^{n'-1} . 
\end{align*}
\[
\core = \sum_{j = 1}^m \sum_{i = 1}^{n'} \sum_{w_{-i}} \sum_{ \max(w_{-i})\vert_j \leq v_{i,j} \leq \T_{i,j}(w_{-i}) } f^*(w_{-i}) f_j(v_{i,j}) \cdot \paren*{ v_{i,j} - \max(w_{-i})\vert_j } .
\]
\end{lemma}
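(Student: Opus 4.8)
The plan is to follow the revenue-decomposition argument of \cite{CaiDW16}, adapted to the fact that the $\iu(n')$-virtual value is an \emph{average} over fresh ghost bidders rather than a function of the actual competitors. Since $\fv^{(n')}_j(\cdot) \geq 0$, the benchmark $\ub(n'', n') = \sum_j \E_{v \sim \D^{n''}}\sq*{ \max_{i \in [n'']} \fv^{(n')}_j(v_i) }$ is non-decreasing in $n''$, so it suffices to prove the bound for $n'' = n'$. Unfolding \autoref{eq:iuvv}, write $\fv^{(n')}_j(v_i) = \E_{w_{-i} \sim \D^{n'-1}}\sq*{ g_j(v_i, w_{-i}) }$, where $g_j(v_i, w_{-i}) = \ivv_j(v_{i,j})^+$ if $v_i \in \region^{(n')}_j(w_{-i})$ and $g_j(v_i, w_{-i}) = v_{i,j}$ otherwise. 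The crux is the pointwise inequality, valid for all $v_i, w_{-i}, j$,
\[
g_j(v_i, w_{-i}) \;\leq\; \ivv_j(v_{i,j})^+ \;+\; \sigma_{i,j} \;+\; \delta_{i,j}\cdot\mathds{1}\paren*{ \delta_{i,j} \leq r^{(i)}_{\ron}(w_{-i}) } \;+\; \delta_{i,j}\cdot\mathds{1}\paren*{ \delta_{i,j} > r^{(i)}_{\ron}(w_{-i}) } ,
\]
where $\delta_{i,j} = \paren*{ v_{i,j} - \max(w_{-i})\vert_j }^+$ is bidder $i$'s surplus over the ghosts on item $j$, and $\sigma_{i,j}$ is the second-largest among $v_{i,j}$ and the $n'-1$ values $w_{-i}$ assigns to item $j$. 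This follows by splitting on whether $v_i \in \region^{(n')}_j(w_{-i})$: in the ``yes'' case $g_j = \ivv_j(v_{i,j})^+$, and in the ``no'' case $g_j = v_{i,j} = \min\paren*{v_{i,j}, \max(w_{-i})\vert_j} + \delta_{i,j}$, while one checks $\min\paren*{v_{i,j}, \max(w_{-i})\vert_j} \leq \sigma_{i,j}$ always. Taking $\E_{w_{-i}}$, then $\max_{i \in [n']}$, then $\E_{v \sim \D^{n'}}$, and summing over $j$, it remains to bound the four resulting terms.

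\emph{Ironed term.} $\sum_j \E_v\sq*{ \max_i \ivv_j(v_{i,j})^+ } = \srev(n')$ by \autoref{prop:myerson}. \emph{Competition term.} The key observation is that $\E_{w_{-i}}\sq*{ \sigma_{i,j} }$ depends on $v_i$ only through $v_{i,j}$ and is \emph{non-decreasing} in $v_{i,j}$, so $\max_{i}\E_{w_{-i}}\sq*{\sigma_{i,j}}$ equals the same expression evaluated at $\max_i v_{i,j}$; unfolding the expectation over the (fresh, and hence unbiased) ghosts, this term is at most $\sum_j\E\sq*{\snd \text{ of } 2n'-1 \text{ i.i.d. samples from } \D_j}$, which is at most $\sum_j \srev_j(\D_j, 2n'-1) \leq \sum_j \srev_j(\D_j, 2n') \leq 2\srev(n')$ --- the first inequality because a second-price auction sells item $j$ separately, and the last because $\srev_j(\D_j, 2k) \leq 2\srev_j(\D_j, k)$ (split the $2k$ bidders into two halves and use \autoref{prop:myerson}). \emph{Core term.} Bounding $\max_i$ by $\sum_i$ turns the third term into exactly $\core$, since $\delta_{i,j} \leq r^{(i)}_{\ron}(w_{-i})$ is equivalent to $\max(w_{-i})\vert_j \leq v_{i,j} \leq \T_{i,j}(w_{-i})$. \emph{Tail term.} Bounding $\max_i$ by $\sum_i$, the fourth term becomes $\sum_i \sum_j \E_{v_i, w_{-i}}\sq*{ \delta_{i,j}\cdot\mathds{1}\paren*{ \delta_{i,j} > r^{(i)}_{\ron}(w_{-i}) } }$, which I would bound by $\srev(n')$ via the standard Li--Yao tail estimate: it crucially uses that the threshold $r^{(i)}_{\ron}(w_{-i}) = \sum_{j'} r^*_{\ron, j'}(\max(w_{-i})\vert_{j'})$ is a \emph{sum over all} $m$ items, so the expected number of items landing in the tail is at most $1$, together with the definition of $r^*_{\ron,j}(\cdot)$ as a Ronen-type single-item revenue. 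Adding the four bounds: $\ub(n', n') \leq \srev(n') + 2\srev(n') + \core + \srev(n') = 4\srev(n') + \core$, and the general case follows from monotonicity in $n''$.

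I expect the competition term and the tail term to require the most care. For the competition term, the monotonicity-in-$v_{i,j}$ trick is precisely what lets us commute $\max_{i \in [n']}$ past the expectation over the fresh ghosts; this is the point at which the ``fresh ghost'' design of the $\iu$-benchmark (as opposed to using the real competitors, as in \cite{CaiDW16}) forces an argument different from prior work, since the naive move of charging $v_{i,j}$ to the second-highest among the actual bidders is not available. For the tail term, the naive item-by-item estimate $\E_{y \sim \D_j}\sq*{ y \cdot \mathds{1}(y > p) } \leq 2 \max_{y > p} y \Pr_{y' \sim \D_j}(y' \geq y)$ fails with a universal constant (it loses a logarithmic factor on the equal-revenue distribution), so one genuinely needs the global structure of the cutoff $\T_{i,j}(w_{-i})$, exactly as in \cite{LiY13, CaiDW16}.
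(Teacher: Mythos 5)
Your overall skeleton mirrors the paper's: reduce to $n'' = n'$ by monotonicity, write the $\iu$-virtual value as an expectation over ghost bidders, split off an ironed-virtual-value term, a ``competition'' term corresponding to \under{}+\over{}, a \core{} term, and a \tail{} term, and then bound the first two and the last by $\srev(n')$. Your treatment of the competition term is genuinely different (the paper bounds \under{} and \over{} separately by posted-price arguments, whereas you commute $\max_i$ past $\E_{w_{-i}}$ via monotonicity of $\E_{w_{-i}}[\sigma_{i,j}]$ in $v_{i,j}$ and then use $\srev_j(\D_j, 2n') \le 2\srev_j(\D_j, n')$), and it is correct. Your core term matches the paper's \core{} exactly.

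The gap is in the tail term, and it stems from the pointwise inequality being too loose. You discard the case distinction too early: in the ``no'' case ($v_i \notin \region^{(n')}_j(w_{-i})$) with $\delta_{i,j} > 0$, you necessarily have $v_{i,j} > \max(w_{-i})|_j$, so the reason $j$ is not the favorite must be that another item $j'$ offers at least as much utility --- i.e. $w_{-i} \in E^{\srp}_j(v_i)$. The tight decomposition therefore carries an extra factor $\mathds{1}(w_{-i} \in E^{\srp}_j(v_i))$ on the surplus term $\delta_{i,j}$, and this factor is not decoration: it is the engine that makes the tail small. The paper bounds $\Pr_{v_{i,-j}}(w_{-i} \in E^{\srp}_j(v_i)) \le r^{(i)}_{\ron}(w_{-i}) / (v_{i,j} - \max(w_{-i})\vert_j)$ on the tail event, which cancels the potentially-large factor $v_{i,j} - \max(w_{-i})\vert_j$ and reduces the whole thing to a Ronen revenue. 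Your decomposition drops the indicator, leaving you with $\sum_{i,j,w_{-i}} f^*(w_{-i})\,\E_{v_{i,j}}[(v_{i,j} - \max(w_{-i})|_j)\,\mathds{1}(v_{i,j} > \T_{i,j}(w_{-i}))]$, which is a welfare-type quantity with \emph{no} cross-item cancellation available. On $m$ i.i.d.\ equal-revenue items on $[1,H]$ one has $r^{(i)}_{\ron}(w_{-i}) \approx m$, each item's contribution is $\approx \log(H/m)$, and $\srev(n') \approx m n'$, so your tail term is a factor $\Theta(\log(H/m))$ too big. The ``expected number of items in the tail is at most $1$'' observation controls $\sum_j \Pr(v_{i,j} > \T_{i,j})$ but not $\sum_j \E[(v_{i,j}-\max(w_{-i})|_j)\mathds{1}(v_{i,j}>\T_{i,j})]$. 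The fix is mechanical but essential: restore the indicator $\mathds{1}(v_i \notin \region^{(n')}_j(w_{-i}))$ on the $\delta_{i,j}$ terms (or, equivalently, split the ``no'' case into \under{} and \nonfavorite{} as the paper does) so that the tail sum retains the $E^{\srp}$ event, and then proceed as in \autoref{eq:tailprob}--\autoref{eq:tailsplit1}.
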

\begin{proof}
Fix $n'' \leq n'$. We first get rid of the parameter $n''$ by showing that $n'' = n'$ is the hardest case for the lemma. We have:
\begin{align*}
\ub(n'', n') &= \sum_{j = 1}^m \E_{v \sim \D^{n''}} \sq*{ \max_{i \in [n'']} \set*{ v_{i,j} \cdot \paren*{ 1 - \pregion_j(v_i) } + \ivv_j(v_{i,j})^+ \cdot \pregion_j(v_i) } } \\
&\leq \sum_{j = 1}^m \E_{v \sim \D^{n'}} \sq*{ \max_{i \in [n']} \set*{ v_{i,j} \cdot \paren*{ 1 - \pregion_j(v_i) } + \ivv_j(v_{i,j})^+ \cdot \pregion_j(v_i) } } \\
&= \ub(n') .
\end{align*}

Henceforth, we focus on upper bounding $\ub(n')$. Note that the term $1 - \pregion_j(v_i)$ in $\ub(n')$ corresponds to the event that $v_i \notin \region_j(w_{-i})$. By our choice of the regions $\region_j(\cdot)$, whenever this happens, either $v_i$ is less than $\max(w_{-i})\vert_j$ or there is a $j' \neq j$ such that the utility from $j'$ is at least as much as that from $j$. To capture these cases we define the sets\footnote{For readers familiar with \cite{CaiDW16}, our naming of these events corresponds to that used in \cite{CaiDW16}, {\em e.g.}, \nf{} corresponds to \nonfavorite{}.}:
\begin{equation}
\label{eq:events}
\begin{split}
E^{\und}_j(v_i) &= w_{-i} \in \V^{n'-1} \mid v_{i,j} < \max(w_{-i})\vert_j \\
E^{\srp}_j(v_i) &= w_{-i} \in \V^{n'-1} \mid \exists j' \neq j : v_{i,j} - \max(w_{-i})\vert_j \leq v_{i,j'} - \max(w_{-i})\vert_{j'} \\
E^{\nf}_j(v_i) &= E^{\srp}_j(v_i) \setminus E^{\und}_j(v_i) .
\end{split}
\end{equation}
As mentioned before, when $v_i \notin \region_j(w_{-i})$, we either have $w_{-i} \in E^{\und}_j(v_i)$ or $w_{-i} \in E^{\nf}_j(v_i)$. Thus, we have the following inequality.
\begin{equation}
\label{eq:eventunion}
1 - \pregion_j(v_i) \leq \Pr_{w_{-i} \sim \D^{n'-1}}\paren*{ w_{-i} \in E^{\und}_j(v_i) } + \Pr_{w_{-i} \sim \D^{n'-1}}\paren*{ w_{-i} \in E^{\nf}_j(v_i) } .
\end{equation}

Using \autoref{eq:eventunion}, we decompose $\ub(n')$ as follows:
\begin{align*}
\ub(n') &= \sum_{j = 1}^m \E_{v \sim \D^{n'}} \sq*{ \max_{i \in [n']} \set*{ v_{i,j} \cdot \paren*{ 1 - \pregion_j(v_i) } + \ivv_j(v_{i,j})^+ \cdot \pregion_j(v_i) } } \\
&\leq \sum_{j = 1}^m \E_{v \sim \D^{n'}} \sq*{ \max_{i \in [n']} \set*{ v_{i,j} \cdot \paren*{ 1 - \pregion_j(v_i) } } } + \sum_{j = 1}^m \E_{v \sim \D^{n'}} \sq*{ \max_{i \in [n']} \set*{ \ivv_j(v_{i,j})^+ \cdot \pregion_j(v_i) } } \\
&\leq \sum_{j = 1}^m \E_{v \sim \D^{n'}} \sq*{ \max_{i \in [n']} \set*{ \ivv_j(v_{i,j})^+ \cdot \pregion_j(v_i) } } \tag{\single{}}\\
&\hspace{0.5cm}+ \sum_{j = 1}^m \E_{v \sim \D^{n'}} \sq*{ \max_{i \in [n']} \set*{ v_{i,j} \cdot \Pr_{w_{-i}}\paren*{ w_{-i} \in E^{\und}_j(v_i) } } } \tag{\under{}}\\
&\hspace{0.5cm}+ \sum_{j = 1}^m \E_{v \sim \D^{n'}} \sq*{ \max_{i \in [n']} \set*{ v_{i,j} \cdot \Pr_{w_{-i}}\paren*{ w_{-i} \in E^{\nf}_j(v_i) } } } \tag{\nonfavorite{}} .
\end{align*}
We have now split $\ub(n')$ into three terms, \single{}, \under{}, and \nonfavorite{}. We will later show that both \single{} and \under{} are at most $\srev(n')$. As far as the term \nonfavorite{} goes, we need to decompose it further.
We have:
\begin{align*}
\nonfavorite &= \sum_{j = 1}^m \E_{v \sim \D^{n'}} \sq*{ \max_{i \in [n']} \set*{ \sum_{w_{-i}} f^*(w_{-i}) \cdot v_{i,j} \cdot \mathds{1}\paren*{ w_{-i} \in E^{\nf}_j(v_i) } } } \\
&\leq \sum_{j = 1}^m \E_{v \sim \D^{n'}} \sq*{ \max_{i \in [n']} \set*{ \sum_{w_{-i}} f^*(w_{-i}) \cdot \paren*{ v_{i,j} - \max(w_{-i})\vert_j } \cdot \mathds{1}\paren*{ w_{-i} \in E^{\nf}_j(v_i) } } } \\
&\hspace{0.5cm}+ \sum_{j = 1}^m \E_{v \sim \D^{n'}} \sq*{ \max_{i \in [n']} \set*{ \sum_{w_{-i}} f^*(w_{-i}) \cdot \max(w_{-i})\vert_j \cdot \mathds{1}\paren*{ w_{-i} \in E^{\nf}_j(v_i) } } } .
\end{align*}
Plugging into the previous decomposition and using the fact that $E^{\nf}_j(v_i)$ and $E^{\und}_j(v_i)$ are disjoint by \autoref{eq:events}, we get that:
\begin{align*}
\ub(n') &\leq \single + \under \\
&\hspace{0.5cm}+ \sum_{j = 1}^m \E_{v \sim \D^{n'}} \sq*{ \max_{i \in [n']} \set*{ \sum_{w_{-i}} f^*(w_{-i}) \cdot \max(w_{-i})\vert_j \cdot \mathds{1}\paren*{ w_{-i} \notin E^{\und}_j(v_i) } } } \tag{\over{}} \\
&\hspace{0.5cm}+ \sum_{j = 1}^m \E_{v \sim \D^{n'}} \sq*{ \sum_{i = 1}^{n'} \sum_{w_{-i}} f^*(w_{-i}) \cdot \paren*{ v_{i,j} - \max(w_{-i})\vert_j } \cdot \mathds{1}\paren*{ w_{-i} \in E^{\nf}_j(v_i) } } \tag{\surplus{}} .
\end{align*}
It can now be shown that \over{} is at most $\srev(n')$. However, \surplus{} needs to be decomposed even more before it is analyzable. For this, we first use linearity of expectation to take the expectation over $v$ inside. As the summand corresponding to $i$ only depends on $v_i$, we get:
\[
\surplus = \sum_{j = 1}^m \sum_{i = 1}^{n'} \sum_{w_{-i}} \E_{v_i}\sq*{ f^*(w_{-i}) \cdot \paren*{ v_{i,j} - \max(w_{-i})\vert_j } \cdot \mathds{1}\paren*{ w_{-i} \in E^{\nf}_j(v_i) } } .
\]
Writing the expectation is a sum and noting that $w_{-i} \in E^{\nf}_j(v_i)$ only happens when $v_{i,j} \geq \max(w_{-i})\vert_j$ and $w_{-i} \in E^{\srp}_j(v_i)$ by \autoref{eq:events}, we get that:
\[
\surplus \leq \sum_{j = 1}^m \sum_{i = 1}^{n'} \sum_{w_{-i}} \sum_{ v_{i,j} \geq \max(w_{-i})\vert_j } f^*(w_{-i}) f_j(v_{i,j}) \cdot \paren*{ v_{i,j} - \max(w_{-i})\vert_j } \cdot \Pr_{v_{i,-j}}\paren*{ w_{-i} \in E^{\srp}_j(v_i) } .
\]

To continue, we define, for all $j \in [m]$, the function $r^*_{\ron, j}(x) = \max_{y > x} y \cdot \Pr_{y' \sim \D_j}\paren*{ y' \geq y }$. This definition is identical to that in \autoref{lemma:standard3} and is closely connected to the payment of the highest bidder in Ronen's auction for item $j$ when the second highest bid is $x$ \cite{Ronen01}.  We also define, for all $i, w_{-i}$, the quantity $r^{(i)}_{\ron}(w_{-i}) = \sum_{j = 1}^m r^*_{\ron, j}(\max(w_{-i})\vert_j)$ and, for all $j \in [m]$, the quantity $\T_{i,j}(w_{-i}) = r^{(i)}_{\ron}(w_{-i}) + \max(w_{-i})\vert_j$. Using $v_{i,-j}$ to denote the tuple $(v_{i,1}, \cdots, v_{i,j-1}, v_{i,j+1}, \cdots v_{i,m})$, we continue decomposing \surplus{} as:
\begin{align*}
\surplus &\leq \sum_{j = 1}^m \sum_{i = 1}^{n'} \sum_{w_{-i}} \sum_{ v_{i,j} > \T_{i,j}(w_{-i}) } f^*(w_{-i}) f_j(v_{i,j}) \cdot \paren*{ v_{i,j} - \max(w_{-i})\vert_j } \cdot \Pr_{v_{i,-j}}\paren*{ w_{-i} \in E^{\srp}_j(v_i) } \tag{\tail{}} \\
&\hspace{0.5cm} + \sum_{j = 1}^m \sum_{i = 1}^{n'} \sum_{w_{-i}} \sum_{ \max(w_{-i})\vert_j \leq v_{i,j} \leq \T_{i,j}(w_{-i}) } f^*(w_{-i}) f_j(v_{i,j}) \cdot \paren*{ v_{i,j} - \max(w_{-i})\vert_j } \tag{\core{}} .
\end{align*}
We call the first term above \tail{} and the second term as \core{}. We shall show that \tail{} is at most $\srev(n')$ while \core{} can be bounded as a function of $\bvcg(n')$ and $\srev(n')$. First, we state our final decomposition for $\ub(n')$:
\begin{equation}
\label{eq:ubsplit}
\ub(n') \leq \single + \under + \over + \tail + \core .
\end{equation}

To finish the proof of \autoref{lemma:step3step1}, we now show that each of the first four terms above is bounded by $\srev(n')$. 

\paragraph{Bounding \single{}.} If the term \single{} did not have the factor $\pregion_j(v_i)$ inside, it will just be maximum (over all auctions) value of (Myerson's) ironed virtual welfare, and we could use \autoref{prop:myerson} to finish the proof. As adding the factor $\pregion_j(v_i)$ can only decrease the value of \single{}, we derive:
\[
\single \leq \sum_{j = 1}^m \E_{v \sim \D^{n'}} \sq*{ \max_{i \in [n']} \set*{ \ivv_j(v_{i,j})^+ } } \leq \srev(n') \tag{\autoref{prop:myerson}} .
\]

\paragraph{Bounding \under{}.}  Roughly speaking, the term $v_{i,j}$ contributes to \under{} only if it is not the highest amongst $n'$ bids. As the fact that $v_{i,j}$ is not the highest amongst $n'$ bids implies that it is also not the highest amongst $n' + 1$ bids, we get:
\begin{align*}
\under &\leq \sum_{j = 1}^m \E_{v \sim \D^{n'}} \sq*{ \max_{i \in [n']} \set*{ \sum_{w_{-i}} f^*(w_{-i}) \cdot v_{i,j} \cdot \mathds{1}\paren*{ v_{i,j} < \max(w_{-i})\vert_j } } } \tag{\autoref{eq:events}} \\
&\leq \sum_{j = 1}^m \E_{v \sim \D^{n'}} \sq*{ \max_{i \in [n']} \set*{ \sum_w f^*(w) \cdot v_{i,j} \cdot \mathds{1}\paren*{ v_{i,j} \leq \max(w)\vert_j } } } .
\end{align*}
Now, consider each term $w$ inside the max as the bids of $n'$ bidders. In this interpretation (as formalized in \autoref{lemma:standard1}), the term inside the max is at most the revenue generated by a VCG auction where the reserve for item $j$ is $v_{i,j}$. Using \autoref{lemma:standard1}, we get:
\[
\under \leq \sum_{j = 1}^m \E_{v \sim \D^{n'}} \sq*{ \max_{i \in [n']} \set*{ \srev_j(n') } } = \sum_{j = 1}^m \srev_j(n') = \srev(n') .
\]

\paragraph{Bounding \over{}.} We first manipulate \over{} so that $w_{-i}$ can be moved outside the max. Using \autoref{eq:events}, we have:
\begin{align*}
\over &\leq \sum_{j = 1}^m \E_{v \sim \D^{n'}} \sq*{ \max_{i \in [n']} \set*{ \sum_w f^*(w) \cdot \max(w_{-i})\vert_j \cdot \mathds{1}\paren*{ v_{i,j} \geq \max(w_{-i})\vert_j } } } \\
&\leq \sum_w f^*(w) \cdot \sum_{j = 1}^m \E_{v \sim \D^{n'}} \sq*{ \max_{i \in [n']} \set*{ \max(w_{-i})\vert_j \cdot \mathds{1}\paren*{ v_{i,j} \geq \max(w_{-i})\vert_j } } } .
\end{align*}
We now analyze the term corresponding to each $w$ separately. For each $w$, consider a sequential posted price auction that sells each item separately. When selling item $j$, the auction visits the bidders in non-increasing order of $\max(w_{-i})\vert_j$ and offers them the item at price $\max(w_{-i})\vert_j$. The revenue generated by this auction is at least term corresponding to $w$ above. \autoref{lemma:standard2} formalizes this and gives:
\[
\over \leq \sum_w f^*(w) \cdot \srev(n') = \srev(n') .
\]
\paragraph{Bounding \tail{}.} At a high level, the term \tail{} is large only when bidder $i$ gets high utility from item $j$ but there exists an item $j' \neq j$ that gives even higher utility. This should be unlikely. More formally, by \autoref{eq:events} and a union bound, we have:
\[
\Pr_{v_{i,-j}}\paren*{ w_{-i} \in E^{\srp}_j(v_i) } \leq \sum_{j' \neq j} \Pr_{v_{i,j'}}\paren*{ v_{i,j} - \max(w_{-i})\vert_j \leq v_{i,j'} - \max(w_{-i})\vert_{j'} } .
\]
As \tail{} only sums over $v_{i,j} > \T_{i,j}(w_{-i}) \geq \max(w_{-i})\vert_j$, the definition of $r^*_{\ron, j'}(x)$ allows us to further bound this by: 
\begin{equation}
\label{eq:tailprob}
\begin{split}
\Pr_{v_{i,-j}}\paren*{ w_{-i} \in E^{\srp}_j(v_i) } &\leq \sum_{j' \neq j} \frac{ r^*_{\ron, j'}(\max(w_{-i})\vert_{j'}) }{ v_{i,j} - \max(w_{-i})\vert_j } \\
&\leq \frac{ r^{(i)}_{\ron}(w_{-i}) }{ v_{i,j} - \max(w_{-i})\vert_j } .
\end{split}
\end{equation}
Plugging \autoref{eq:tailprob} into the term \tail{}, we have:
\begin{equation}
\label{eq:tailsplit1}
\begin{split}
\tail &\leq \sum_{j = 1}^m \sum_{i = 1}^{n'} \sum_{w_{-i}} \sum_{ v_{i,j} > \T_{i,j}(w_{-i}) } f^*(w_{-i}) f_j(v_{i,j}) \cdot r^{(i)}_{\ron}(w_{-i}) \\
&\leq \sum_{j = 1}^m \sum_{i = 1}^{n'} \sum_{w_{-i}} f^*(w_{-i}) \cdot r^{(i)}_{\ron}(w_{-i}) \cdot \Pr_{v_{i,j}}\paren*{ v_{i,j} > \T_{i,j}(w_{-i}) } .
\end{split}
\end{equation}
Now, we claim that $r^{(i)}_{\ron}(w_{-i}) \cdot \Pr_{v_{i,j}}\paren*{ v_{i,j} > \T_{i,j}(w_{-i}) } \leq r^*_{\ron, j}(\max(w_{-i})\vert_j)$. In the case $\Pr_{v_{i,j}}\paren*{ v_{i,j} > \T_{i,j}(w_{-i}) } = 0$, this holds trivially. Otherwise, there exists $x \in \V_j$ be the smallest such that $x > \T_{i,j}(w_{-i}) = r^{(i)}_{\ron}(w_{-i}) + \max(w_{-i})\vert_j$ and we get:
\[
r^{(i)}_{\ron}(w_{-i}) \cdot \Pr_{v_{i,j}}\paren*{ v_{i,j} > \T_{i,j}(w_{-i}) } \leq x \cdot \Pr_{v_{i,j}}\paren*{ v_{i,j} \geq x } \leq r^*_{\ron, j}(\max(w_{-i})\vert_j) .
\]
We continue \autoref{eq:tailsplit1} as:
\[
\tail \leq \sum_{j = 1}^m \sum_{i = 1}^{n'} \sum_{w_{-i}} f^*(w_{-i}) \cdot r^*_{\ron, j}(\max(w_{-i})\vert_j) .
\]
The last expression is closely related to the revenue of a Ronen's auction \cite{Ronen01} that sells the items separately, and is captured in \autoref{lemma:standard3}. Using \autoref{lemma:standard3}, we conclude:
\[
\tail \leq \srev(n') .
\]
This concludes the proof of \autoref{lemma:step3step1}.
\end{proof}

\subsection{Step $2$ -- Bounding {\core{}}}
\label{sec:step3:core}

The next (and final) step in the proof of \autoref{lemma:step3} and \autoref{lemma:step3alt} is to upper bound the term \core{} that was left unanalyzed in \autoref{lemma:step3step1}. To this end, we first recall some definitions made in \autoref{sec:step3:decompose}. Recall that, for all $j \in [m]$, $r^*_{\ron, j}(x) = \max_{y > x} y \cdot \Pr_{y' \sim \D_j}\paren*{ y' \geq y }$ roughly (but not exactly) corresponds to the payment of the highest bidder in a Ronen's auction when the second highest bid is $x$. We also defined, for all $i, w_{-i}$ $r^{(i)}_{\ron}(w_{-i}) = \sum_{j = 1}^m r^*_{\ron, j}(\max(w_{-i})\vert_j)$ and for all $j \in [m]$, $\T_{i,j}(w_{-i}) = r^{(i)}_{\ron}(w_{-i}) + \max(w_{-i})\vert_j$. The term \core{} equals:
\[
\core = \sum_{j = 1}^m \sum_{i = 1}^{n'} \sum_{w_{-i}} \sum_{ \max(w_{-i})\vert_j \leq v_{i,j} \leq \T_{i,j}(w_{-i}) } f^*(w_{-i}) f_j(v_{i,j}) \cdot \paren*{ v_{i,j} - \max(w_{-i})\vert_j } .
\]
Observe that the term $\paren*{ v_{i,j} - \max(w_{-i})\vert_j }$ in the above equation is closely related to the utility that bidder with valuation $v_i$ gets from item $j$ in a VCG auction when the bids of the other bidders are $w_{-i}$. To capture this, we define the notation:
\begin{align*}
\util_{i,j,w_{-i}}(v_{i,j}) &= \max\paren*{ v_{i,j} - \max(w_{-i})\vert_j, 0 } \\
\caputil_{i,j,w_{-i}}(v_{i,j}) &= \util_{i,j,w_{-i}}(v_{i,j}) \cdot \mathds{1}\paren*{ \util_{i,j,w_{-i}}(v_{i,j}) \leq r^{(i)}_{\ron}(w_{-i}) } .
\end{align*}
These will primarily be used in the following form:
\begin{equation}
\label{eq:util}
\U_{i,w_{-i}}(v_i) = \sum_{j = 1}^m \util_{i,j,w_{-i}}(v_{i,j}) \hspace{1cm}\text{and}\hspace{1cm} \capU_{i,w_{-i}}(v_i) = \sum_{j = 1}^m \caputil_{i,j,w_{-i}}(v_{i,j}) .
\end{equation}
Using this notation, \core{} satsifies:
\begin{equation}
\label{eq:core}
\core = \sum_{i = 1}^{n'} \sum_{w_{-i}} f^*(w_{-i}) \cdot \E_{v_i}\sq*{ \capU_{i,w_{-i}}(v_i) } .
\end{equation}
Observe that, written this way, \core{} is closely related to the random variable $\capU_{i,w_{-i}}(v_i)$. It is in this form that we upper bound \core{} in \autoref{sec:step3step2} and \autoref{sec:step3step2alt}. But first, let us show using \autoref{lemma:variance-ub} that the variance of $\capU_{i,w_{-i}}(v_i)$ is small.

\begin{lemma}
\label{lemma:variancecaputil} 
It holds for all $i \in [n']$ and all $w_{-i}$ that:
\[
\Var_{v_i \sim \D}\paren*{ \capU_{i,w_{-i}}(v_i) } \leq 2 \cdot \paren*{ r^{(i)}_{\ron}(w_{-i}) }^2 .
\]
\end{lemma}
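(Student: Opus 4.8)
The plan is to apply \autoref{lemma:variance-ub} to the non-negative random variable $g(v_i) = \capU_{i,w_{-i}}(v_i)$, which will reduce the task to two observations about $\capU_{i,w_{-i}}$: that it is bounded above by $r^{(i)}_{\ron}(w_{-i})$, and that the "tail weight" $\max_x g(x) \cdot \Pr_y(g(y) \ge g(x))$ is also controlled by $r^{(i)}_{\ron}(w_{-i})$. Fix $i$ and $w_{-i}$, and write $r = r^{(i)}_{\ron}(w_{-i})$ for brevity; the goal is $\Var(\capU_{i,w_{-i}}(v_i)) \le 2r^2$.

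First I would establish the pointwise bound $\capU_{i,w_{-i}}(v_i) \le r$ for every $v_i$. Indeed, $\capU_{i,w_{-i}}(v_i) = \sum_j \caputil_{i,j,w_{-i}}(v_{i,j})$, and each summand $\caputil_{i,j,w_{-i}}(v_{i,j})$ is nonzero only when $\util_{i,j,w_{-i}}(v_{i,j}) \le r$ (by the indicator in its definition); in particular \emph{at most one} coordinate $j$ can contribute, since if two coordinates $j \ne j'$ both had positive utility this would be exactly the event $w_{-i} \in E^{\srp}_j(v_i)$ forcing $v_{i,j}$ large—more carefully, one should argue that whenever $\util_{i,j,w_{-i}}(v_{i,j}) > 0$ and $\util_{i,j',w_{-i}}(v_{i,j'}) > 0$, the larger of the two exceeds... hmm, actually the clean route is simply: each $\caputil_{i,j,w_{-i}}(v_{i,j}) \le r$ individually, but a sum of $m$ such terms is only bounded by $mr$, which is not good enough. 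So the \textbf{key structural fact} I need is that at most one item gives capped utility at a time, OR, more robustly, that $\capU_{i,w_{-i}}(v_i) \le r$ directly. Let me reconsider: $\T_{i,j}(w_{-i}) = r + \max(w_{-i})\vert_j$, so the cap $\util_{i,j,w_{-i}}(v_{i,j}) \le r$ is exactly $v_{i,j} \le \T_{i,j}(w_{-i})$. If two coordinates both satisfy this with positive utility, their sum of utilities is at most $2r$... This suggests the bound I can cleanly prove is $\capU_{i,w_{-i}}(v_i) \le m \cdot r$ pointwise but $\le 2r$ "typically"—the right move is probably to bound $\max_x \capU_{i,w_{-i}}(x) \le mr$ and $\Pr(\capU_{i,w_{-i}} \ge t)$ sharply. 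I expect the intended argument is cleaner: since each nonzero $\caputil$ term is at most $r$, and the event $\{\caputil_{i,j,w_{-i}} > 0, \caputil_{i,j',w_{-i}} > 0\}$ for $j \ne j'$ is a $\srp$-type event of probability at most $r / (\text{excess})$, one gets $\E[\capU^2] \le 2 r^2$ by the layer-cake computation of \autoref{lemma:variance-ub} applied coordinate-by-coordinate combined with a union bound over cross terms.

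So the concrete steps I would carry out are: (1) apply \autoref{lemma:variance-ub} with $g = \capU_{i,w_{-i}}$ to reduce to bounding $\max_x g(x) \cdot \Pr_y(g(y) \ge g(x)) \cdot \max_x g(x)$; (2) show $\max_x g(x) \cdot \Pr_y(g(y) \ge g(x)) \le r$ by arguing that for any threshold value $t$ attained by $g$, the probability that $g(v_i) \ge t$ times $t$ is at most $r$—this uses the fact that $g(v_i) \ge t > 0$ requires some coordinate $j$ with $\util_{i,j,w_{-i}}(v_{i,j}) \ge t/m$ paired with... again the $m$ is the obstacle; (3) similarly show $\max_x g(x) \le$ something times $r$. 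Given the target constant $2$, I believe the actual proof exploits that \emph{when more than one coordinate contributes, the total is still dominated} because of the $\srp$ structure, so that effectively $g$ behaves like a single capped variable bounded by $r$ with tail probability $\le r/t$, giving $\E[g^2] \le 2r^2$ directly from the layer-cake identity $\E[g^2] = \int_0^\infty 2t \Pr(g \ge t)\,dt \le \int_0^r 2t\,dt + 0 = r^2$ plus a cross-term correction bounded by another $r^2$.

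The \textbf{main obstacle} will be step (2)/(3): controlling the multi-item interaction so that the bound does not degrade by a factor of $m$. The resolution should be that the capping $\util_{i,j,w_{-i}}(v_{i,j}) \le r^{(i)}_{\ron}(w_{-i})$ is precisely calibrated—$r^{(i)}_{\ron}$ is a sum over all items of the Ronen quantities—so that by the same union bound used to bound \tail{} in \autoref{lemma:step3step1} (see \autoref{eq:tailprob}), the probability of two items simultaneously having large capped utility is small enough to absorb the cross terms. I would mimic that union-bound argument here: bound $\E[\capU_{i,w_{-i}}(v_i)^2] = \sum_j \E[\caputil_{i,j,w_{-i}}^2] + \sum_{j \ne j'} \E[\caputil_{i,j,w_{-i}} \caputil_{i,j',w_{-i}}]$, handle the diagonal terms via \autoref{lemma:variance-ub} applied per-coordinate (getting $\sum_j 2 r^*_{\ron,j}(\max(w_{-i})\vert_j) \cdot r \le 2 r \cdot r^{(i)}_{\ron}(w_{-i}) \cdot (\text{something} \le 1)$), and kill the off-diagonal terms using that $\caputil_{i,j',w_{-i}} > 0$ is a low-probability $\srp$-style event. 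Assembling these gives $\Var(\capU_{i,w_{-i}}(v_i)) \le \E[\capU_{i,w_{-i}}(v_i)^2] \le 2 (r^{(i)}_{\ron}(w_{-i}))^2$, completing the proof.
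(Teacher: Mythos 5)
Your proposal gets the per-coordinate computation right but misroutes the overall argument, and the misrouting leaves a gap that your proposed fix cannot close. You correctly see that one should apply \autoref{lemma:variance-ub} to each $\caputil_{i,j,w_{-i}}$ individually, obtaining $\Var(\caputil_{i,j,w_{-i}}) \le 2\, r^*_{\ron,j}(\max(w_{-i})\vert_j)\cdot r^{(i)}_{\ron}(w_{-i})$, and that summing over $j$ then yields exactly $2\,(r^{(i)}_{\ron}(w_{-i}))^2$. But you embed this inside an expansion of $\E[\capU_{i,w_{-i}}^2]$ and then plan to control the cross terms $\sum_{j\neq j'}\E[\caputil_{i,j,w_{-i}}\caputil_{i,j',w_{-i}}]$ by a union bound over $\srp$-style events. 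That step cannot work: by independence across items, $\E[\caputil_{i,j,w_{-i}}\caputil_{i,j',w_{-i}}] = \E[\caputil_{i,j,w_{-i}}]\,\E[\caputil_{i,j',w_{-i}}]$, and these cross terms sum (together with the diagonal squares of the means) to $(\E[\capU_{i,w_{-i}}])^2$, which is not small --- indeed it is essentially the quantity \core{} that the rest of the section is devoted to bounding. Your claimed chain $\Var(\capU_{i,w_{-i}}) \le \E[\capU_{i,w_{-i}}^2] \le 2(r^{(i)}_{\ron}(w_{-i}))^2$ is therefore false at the second inequality.

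The missing ingredient is the elementary fact that variance is additive over independent summands (\autoref{fact:variance}, \autoref{item:varianceind}): since $v_{i,1},\ldots,v_{i,m}$ are independent, $\Var(\capU_{i,w_{-i}}(v_i)) = \sum_{j=1}^m \Var(\caputil_{i,j,w_{-i}}(v_{i,j}))$, with no cross terms at all. Once you route through this identity, your per-coordinate application of \autoref{lemma:variance-ub} (using $\max \caputil_{i,j,w_{-i}} \le r^{(i)}_{\ron}(w_{-i})$ and $\max \caputil_{i,j,w_{-i}}\cdot\Pr(\caputil_{i,j,w_{-i}}(v')\ge\caputil_{i,j,w_{-i}}(v)) \le r^*_{\ron,j}(\max(w_{-i})\vert_j)$) immediately finishes the proof. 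In short: you had the right local estimates but the wrong global decomposition, and the $\srp$-type union bound you reach for as a patch is neither needed nor sufficient.
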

\begin{proof}
Recall that $\D = \bigtimes_{j = 1}^m \D_j$ is such that all the items are independent. Using the fact that variance is linear when over independent random variables (\autoref{fact:variance}, \autoref{item:varianceind}) and \autoref{eq:util}, we get:
\begin{equation}
\label{eq:variancecaputil1}
\Var_{v_i \sim \D}\paren*{ \capU_{i,w_{-i}}(v_i) } = \sum_{j = 1}^m \Var_{v_{i,j} \sim \D_j}\paren*{ \caputil_{i,j,w_{-i}}(v_{i,j}) } .
\end{equation}
Our goal now is to bound each term using \autoref{lemma:variance-ub}. To this end, note that $\caputil_{i,j,w_{-i}}(v_{i,j})$ is always at most $r^{(i)}_{\ron}(w_{-i})$ and thus, we can conclude that $\max_{v_{i,j}} \caputil_{i,j,w_{-i}}(v_{i,j}) \leq r^{(i)}_{\ron}(w_{-i})$. Moreover, we have for all $v_{i,j}$ that
\begin{multline*}
\caputil_{i,j,w_{-i}}(v_{i,j}) \cdot \Pr_{v'_{i,j} \sim \D_j}\paren*{ \caputil_{i,j,w_{-i}}(v'_{i,j}) \geq \caputil_{i,j,w_{-i}}(v_{i,j}) } \\
\leq \caputil_{i,j,w_{-i}}(v_{i,j}) \cdot \Pr_{v'_{i,j} \sim \D_j}\paren*{ v'_{i,j} \geq \max(w_{-i})\vert_j + \caputil_{i,j,w_{-i}}(v_{i,j}) } .
\end{multline*}
Now, if $\caputil_{i,j,w_{-i}}(v_{i,j}) = 0$, then, the right hand side is $0$ and consequently, is at most $r^*_{\ron, j}(\max(w_{-i})\vert_j)$. We show that the latter holds even when $\caputil_{i,j,w_{-i}}(v_{i,j}) > 0$. Indeed, we have:
\begin{align*}
&\caputil_{i,j,w_{-i}}(v_{i,j}) \cdot \Pr_{v'_{i,j} \sim \D_j}\paren*{ v'_{i,j} \geq \max(w_{-i})\vert_j + \caputil_{i,j,w_{-i}}(v_{i,j}) } \\
&\hspace{0.5cm} \leq \paren*{ \caputil_{i,j,w_{-i}}(v_{i,j}) + \max(w_{-i})\vert_j } \cdot \Pr_{v'_{i,j} \sim \D_j}\paren*{ v'_{i,j} \geq \max(w_{-i})\vert_j + \caputil_{i,j,w_{-i}}(v_{i,j}) } \\
&\hspace{0.5cm} \leq r^*_{\ron, j}(\max(w_{-i})\vert_j) \tag{Definition of $r^*_{\ron, j}(\cdot)$} .
\end{align*}
Thus, we can conclude that:
\[
\max_{v_{i,j}} \caputil_{i,j,w_{-i}}(v_{i,j}) \cdot \Pr_{v'_{i,j} \sim \D_j}\paren*{ \caputil_{i,j,w_{-i}}(v'_{i,j}) \geq \caputil_{i,j,w_{-i}}(v_{i,j}) } \leq r^*_{\ron, j}(\max(w_{-i})\vert_j).
\]
Plugging this and $\max_{v_{i,j}} \caputil_{i,j,w_{-i}}(v_{i,j}) \leq r^{(i)}_{\ron}(w_{-i})$ into \autoref{lemma:variance-ub}, we get:
\[
\Var_{v_{i,j} \sim \D_j}\paren*{ \caputil_{i,j,w_{-i}}(v_{i,j}) } \leq 2 \cdot r^*_{\ron, j}(\max(w_{-i})\vert_j) \cdot r^{(i)}_{\ron}(w_{-i}) .
\]
Plugging into \autoref{eq:variancecaputil1}, we get:
\[
\Var_{v_i \sim \D}\paren*{ \capU_{i,w_{-i}}(v_i) } \leq 2 \cdot r^{(i)}_{\ron}(w_{-i}) \cdot \sum_{j = 1}^m r^*_{\ron, j}(\max(w_{-i})\vert_j) \leq 2 \cdot \paren*{ r^{(i)}_{\ron}(w_{-i}) }^2 .
\]

\end{proof}

\subsubsection{Bounding {\core{}} for \autoref{lemma:step3}}
\label{sec:step3step2}

In this section, we finish our proof of \autoref{lemma:step3} by upper bounding the right hand side of \autoref{eq:core} by the revenue of a BVCG auction (and $\srev(n')$). Specifically, we shall consider a BVCG auction with $n'$ bidders, where the fee charged for player $i$, when the types of the other bidders are $w_{-i}$ is:
\[
\fee_{i,w_{-i}} = \max\paren*{ \E_{v_i}\sq*{ \capU_{i,w_{-i}}(v_i) } - 2 \cdot r^{(i)}_{\ron}(w_{-i}), 0 } .
\]
The following lemma shows that most bidders will agree to pay this extra fee, and thus, expectation of the total fee is at most $2 \cdot \bvcg(n')$.
\begin{lemma}
\label{lemma:bvcgcore}
It holds that:
\[
\sum_{i = 1}^{n'} \sum_{w_{-i}} f^*(w_{-i}) \cdot \fee_{i,w_{-i}} \leq 2 \cdot \bvcg(n') .
\]
\end{lemma}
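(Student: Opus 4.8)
The plan is to lower-bound $\bvcg(n')$ by the revenue of one specific BVCG auction with $n'$ bidders, namely the one with $k=0$ (all bidders participate) in which the entry fee charged to bidder $i$ when the others report $w_{-i}\in\V^{n'-1}$ is precisely $\fee_{i,w_{-i}}$. By the definition of a BVCG auction, bidder $i$ with type $v_i$ accepts the VCG allocation (and hence pays the entry fee, on top of the VCG prices, which are nonnegative since all values are nonnegative) exactly when his total VCG surplus $\sum_{j=1}^m\max\paren*{v_{i,j}-\max(w_{-i})\vert_j,0}=\U_{i,w_{-i}}(v_i)$ is at least $\fee_{i,w_{-i}}$; and when $\U_{i,w_{-i}}(v_i)>0$ he is indeed the (strict) top bidder on some item, so this surplus is actually realized as payment. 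Dropping the VCG prices and taking expectations over $v\sim\D^{n'}$ therefore gives
\[
\bvcg(n') \;\geq\; \sum_{i=1}^{n'} \sum_{w_{-i}} f^*(w_{-i}) \cdot \fee_{i,w_{-i}} \cdot \Pr_{v_i \sim \D}\paren*{ \U_{i,w_{-i}}(v_i) \geq \fee_{i,w_{-i}} } .
\]

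It then suffices to prove the pointwise estimate $\Pr_{v_i \sim \D}\paren*{ \U_{i,w_{-i}}(v_i) \geq \fee_{i,w_{-i}} } \geq \tfrac12$ for every $i\in[n']$ and $w_{-i}\in\V^{n'-1}$. If $\fee_{i,w_{-i}}=0$ this is immediate. Otherwise $\fee_{i,w_{-i}}=\E_{v_i}\sq*{\capU_{i,w_{-i}}(v_i)}-2r^{(i)}_{\ron}(w_{-i})>0$; since $\capU_{i,w_{-i}}\geq 0$ pointwise, this forces $r^{(i)}_{\ron}(w_{-i})>0$ (if $r^{(i)}_{\ron}(w_{-i})=0$ then $\caputil_{i,j,w_{-i}}\equiv 0$ for every $j$, hence $\capU_{i,w_{-i}}\equiv 0$ and $\fee_{i,w_{-i}}=0$, a contradiction). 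Now using $\U_{i,w_{-i}}(v_i)\geq\capU_{i,w_{-i}}(v_i)$ (each summand satisfies $\caputil\leq\util$), Chebyshev's inequality (\autoref{fact:variance}, \autoref{item:chebyshev}), and \autoref{lemma:variancecaputil},
\[
\Pr_{v_i}\paren*{ \U_{i,w_{-i}}(v_i) < \fee_{i,w_{-i}} } \leq \Pr_{v_i}\paren*{ \abs*{ \capU_{i,w_{-i}}(v_i) - \E_{v_i}\sq*{ \capU_{i,w_{-i}}(v_i) } } \geq 2 r^{(i)}_{\ron}(w_{-i}) } \leq \frac{ \Var_{v_i}\paren*{ \capU_{i,w_{-i}}(v_i) } }{ 4 \paren*{ r^{(i)}_{\ron}(w_{-i}) }^2 } \leq \frac12 ,
\]
i.e.\ $\Pr_{v_i}\paren*{\U_{i,w_{-i}}(v_i)\geq\fee_{i,w_{-i}}}\geq\tfrac12$. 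Plugging this into the displayed lower bound on $\bvcg(n')$ gives $2\cdot\bvcg(n') \geq \sum_{i=1}^{n'}\sum_{w_{-i}} f^*(w_{-i})\cdot\fee_{i,w_{-i}}$, which is the lemma.

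I do not expect a genuine obstacle here; this is the standard \cite{CaiDW16}-style argument in which the entry fee is set to the mean of a capped-utility term minus a couple of standard deviations, so that in expectation at least half of the bidders choose to pay it. The two details that need care are: (i) ensuring the Chebyshev step never divides by zero, which is why one must observe that a strictly positive fee forces $r^{(i)}_{\ron}(w_{-i})>0$; and (ii) being precise that a bidder's payment in the BVCG auction is the entry fee \emph{plus} the nonnegative VCG prices and that a positive surplus is actually attained on some item, so that lower-bounding the BVCG revenue by the expected collected entry fees is legitimate.
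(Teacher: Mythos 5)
Your proposal is correct and follows the same route as the paper's own proof: lower-bound $\bvcg(n')$ by the revenue of the BVCG auction with entry fees $\fee_{i,w_{-i}}$, observe that a bidder accepts whenever $\U_{i,w_{-i}}(v_i)\geq\fee_{i,w_{-i}}$, pass from $\U$ to $\capU$, and then apply Chebyshev together with \autoref{lemma:variancecaputil} to get the $\tfrac12$ acceptance probability. The only difference is that you spell out the edge case $r^{(i)}_{\ron}(w_{-i})=0$ (to justify Chebyshev) and the observation that positive surplus implies winning an item; both are correct and fill in small details the paper leaves implicit.
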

\begin{proof}
Consider the BVCG auction defined by $\fee_{i,w_{-i}}$. That is, consider the auction where the auctioneer first asks all bidders $i \in [n']$ for their bids $w_i$ and runs a VCG auction based on these bids. If bidder $i \in [n']$ is not allocated any items in the VCG auction, he departs without paying anything. Otherwise, he gets all the items allocated to him in the VCG auction if and only if he agrees to pay an amount equal to $\fee_{i,w_{-i}}$ in addition to the prices charged by the VCG auction. 

This auction is truthful as we ensure that $\fee_{i,w_{-i}} \geq 0$. Moreover, if bidder $i$ does not pay at least $\fee_{i,w_{-i}}$, we must have that his utility from the VCG auction is (strictly) smaller that $\fee_{i,w_{-i}}$. Thus, we get the following lower bound on $\bvcg(n')$.
\begin{align*}
\bvcg(n') &\geq \sum_{i = 1}^{n'} \sum_{w \in \V^{n'}} f^*(w) \cdot \fee_{i,w_{-i}} \cdot \mathds{1}\paren*{ \fee_{i,w_{-i}} \leq \U_{i,w_{-i}}(w_i) } \\
&\geq \sum_{i = 1}^{n'} \sum_{w_{-i}} f^*(w_{-i}) \cdot \fee_{i,w_{-i}} \cdot \Pr_{w_i}\paren*{ \fee_{i,w_{-i}} \leq \U_{i,w_{-i}}(w_i) } \\
&\geq \sum_{i = 1}^{n'} \sum_{w_{-i}} f^*(w_{-i}) \cdot \fee_{i,w_{-i}} \cdot \Pr_{w_i}\paren*{ \fee_{i,w_{-i}} \leq \capU_{i,w_{-i}}(w_i) } ,
\end{align*}
where the last step is because $\U$ upper bounds $\capU$. The next step is to lower bound the probability on the right hand side. We do this using Chebyshev's inequality (\autoref{fact:variance}, \autoref{item:chebyshev}) and use the variance bound in \autoref{lemma:variancecaputil}. We have:
\[
\Pr_{w_i}\paren*{ \capU_{i,w_{-i}}(w_i) < \fee_{i,w_{-i}} } \leq \frac{1}{2} .
\]
Plugging in, we have:
\[
\bvcg(n') \geq \sum_{i = 1}^{n'} \sum_{w_{-i}} f^*(w_{-i}) \cdot \fee_{i,w_{-i}} \cdot \frac{1}{2} .
\]
and the lemma follows.
\end{proof}

We now present our proof of \autoref{lemma:step3}.
\begin{proof}[Proof of \autoref{lemma:step3}]
From \autoref{eq:core} and the definition of $\fee_{i,w_{-i}}$, we have:
\begin{align*}
\core &\leq \sum_{i = 1}^{n'} \sum_{w_{-i}} f^*(w_{-i}) \cdot \paren*{ \fee_{i,w_{-i}} + 2 \cdot r^{(i)}_{\ron}(w_{-i}) } \\
&\leq \sum_{i = 1}^{n'} \sum_{w_{-i}} f^*(w_{-i}) \cdot \fee_{i,w_{-i}} + 2 \cdot \sum_{i = 1}^{n'} \sum_{w_{-i}} f^*(w_{-i}) \cdot r^{(i)}_{\ron}(w_{-i}) \\
&\leq \sum_{i = 1}^{n'} \sum_{w_{-i}} f^*(w_{-i}) \cdot \fee_{i,w_{-i}} + 2 \cdot \sum_{j = 1}^m \sum_{i = 1}^{n'} \sum_{w_{-i}} f^*(w_{-i}) \cdot r^*_{\ron, j}(\max(w_{-i})\vert_j) .
\end{align*}
These two terms can be bounded by \autoref{lemma:bvcgcore} and \autoref{lemma:standard3} respectively yielding $\core \leq 2 \cdot \bvcg(n') + 2 \cdot \srev(n')$. Plugging into \autoref{lemma:step3step1}, we get:
\[
\ub(n'', n') \leq 4 \cdot \srev(n') + \core \leq 2 \cdot \bvcg(n') + 6 \cdot \srev(n') .
\]

\end{proof}

\subsubsection{Bounding {\core{}} for \autoref{lemma:step3alt}}
\label{sec:step3step2alt}

Now, we finish our proof of \autoref{lemma:step3alt} by upper bounding the right hand side of \autoref{eq:core} by the revenue of a {\em prior-independent} BVCG auction. The auction defined in \autoref{sec:step3step2} was not prior independent as to compute the fees charged to the bidders required knowledge of the distribution $\D$. Our main idea follows \cite{GoldnerK16}, we construct an auction with $n' + 1$ bidders, and treat the last bidder as `special'. This special bidder does not receive any items or pay anything, but his bids allow us to get a good enough estimate of the distribution $\D$.  

We shall reserve $s$ to denote the bid of the special bidder and $w \in \V^{n'}$ will denote the bids of the other bidders. For $i \in [n']$, the notation $w_i$ will (as before) denote the bid of player $i$, while $w_{-i}$ will denote the bids of all the other players {\em excluding the special player}. This time the fee for player $i \in [n']$ is defined as (recall \autoref{eq:util}):
\begin{equation}
\label{eq:feepi}
\fee_{i,w_{-i},s} = \U_{i,w_{-i}}(s) .
\end{equation}
Importantly, this is determined by the bids of the bidders and is independent of $\D$. We also define, for all $i$ and $w_{-i}$, the set $\N_{i, w_{-i}}$ as follows:
\begin{equation}
\label{eq:nice}
\N_{i, w_{-i}} = \set*{ v \in \V~\middle\vert~\capU_{i,w_{-i}}(v) \geq \frac{1}{2} \cdot \E_{v' \sim \D}\sq*{ \capU_{i,w_{-i}}(v') } } .
\end{equation}

We now show a prior-independent analogue of \autoref{lemma:bvcgcore}.

\begin{lemma}
\label{lemma:bvcgcorepi}
It holds that:
\[
\sum_{i = 1}^{n'} \sum_{w_{-i}} f^*(w_{-i}) \cdot \Pr_{v' \sim \D}\paren*{ v' \in \N_{i, w_{-i}} }^2 \cdot \E_{v' \sim \D}\sq*{ \capU_{i,w_{-i}}(v') } \leq 4 \cdot \pibvcg(n' + 1) .
\]
\end{lemma}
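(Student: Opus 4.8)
The plan is to exhibit one prior-independent BVCG auction with $n'+1$ bidders and show that its revenue is at least $\tfrac14$ of the left-hand side. Take the auction in which the last bidder ($n'{+}1$) is \emph{special} — he receives no item and pays nothing — while each regular bidder $i\in[n']$ faces the entry fee $\fee_{i,w_{-i},s}=\U_{i,w_{-i}}(s)$ from \autoref{eq:feepi}, where $w_{-i}\in\V^{n'-1}$ are the bids of the other regular bidders and $s$ is the special bidder's bid. This fee is non-negative and does not depend on bidder $i$'s own bid, so this is a legitimate truthful BVCG auction, and it is manifestly prior-independent, so its revenue is at most $\pibvcg(n'+1)$.

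First I would lower bound this revenue. If $\U_{i,w_{-i}}(w_i)>0$, then bidder $i$ is the (lexicographically) highest regular bidder on at least one item, so provided $\U_{i,w_{-i}}(w_i)\ge\fee_{i,w_{-i},s}$ he is allocated that item and therefore pays at least the entry fee. Moreover the fee equals $0$ in the only remaining case where $\U_{i,w_{-i}}(w_i)=0$ and still $\U_{i,w_{-i}}(w_i)\ge\fee_{i,w_{-i},s}$. Hence pointwise $p_i\ge \U_{i,w_{-i}}(s)\cdot\mathds{1}\!\paren*{\U_{i,w_{-i}}(w_i)\ge\U_{i,w_{-i}}(s)}$, and by linearity of expectation
\[
\pibvcg(n'+1)\ \ge\ \sum_{i=1}^{n'}\ \E_{w_{-i},\,s,\,w_i}\sq*{\U_{i,w_{-i}}(s)\cdot\mathds{1}\!\paren*{\U_{i,w_{-i}}(w_i)\ge\U_{i,w_{-i}}(s)}},
\]
where $w_{-i}\sim\D^{n'-1}$ and $s,w_i\sim\D$ are independent.

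Next, fix $i$ and $w_{-i}$ and abbreviate $\mu=\E_{v'\sim\D}\sq*{\capU_{i,w_{-i}}(v')}$, $N=\N_{i,w_{-i}}$, and $p=\Pr_{v'\sim\D}(v'\in N)$; if $p\mu=0$ the corresponding summand vanishes, so assume $p,\mu>0$. I would discard everything in the event except $\{s\in N\}\cap\{w_i\in N\}$; on this event $\U_{i,w_{-i}}(s)\ge\capU_{i,w_{-i}}(s)\ge \mu/2$ by \autoref{eq:util} and \autoref{eq:nice}. Conditioned on $\{s\in N\}\cap\{w_i\in N\}$, the pair $(s,w_i)$ consists of two i.i.d.\ draws from $\D$ restricted to $N$, hence is exchangeable, so $\Pr\!\paren*{\U_{i,w_{-i}}(w_i)\ge\U_{i,w_{-i}}(s)\mid s\in N\wedge w_i\in N}\ge\tfrac12$. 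Since $\{s\in N\}$ and $\{w_i\in N\}$ are independent with probability $p$ each, combining these gives
\[
\E_{s,w_i}\sq*{\U_{i,w_{-i}}(s)\cdot\mathds{1}\!\paren*{\U_{i,w_{-i}}(w_i)\ge\U_{i,w_{-i}}(s)}}\ \ge\ p^2\cdot\frac{\mu}{2}\cdot\frac12\ =\ \frac{p^2\mu}{4}.
\]
Plugging this into the revenue bound, summing over $i$ and averaging over $w_{-i}$, and multiplying through by $4$ yields exactly the claimed inequality.

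The only genuinely non-routine ingredient is the exchangeability argument giving the probability-$\tfrac12$ bound, together with the realization that it requires \emph{both} the "sample" bid $s$ and the real bid $w_i$ to land in the nice set $\N_{i,w_{-i}}$ — which is precisely why the square appears on $\Pr_{v'\sim\D}(v'\in\N_{i,w_{-i}})$ in the statement. The remaining subtlety to handle with a little care is the claim that whenever $\U_{i,w_{-i}}(w_i)\ge\fee_{i,w_{-i},s}$ the fee is actually collected, which needs the $\U_{i,w_{-i}}(w_i)>0$ caveat above; everything else is bookkeeping about the BVCG format. This mirrors \autoref{lemma:bvcgcore}: there, Chebyshev's inequality on $\capU_{i,w_{-i}}$ (via \autoref{lemma:variancecaputil}) collects the fee with probability $\ge\tfrac12$ from the true distribution, while here the price of not knowing the distribution is paid by (i) a factor $\tfrac12$ from the symmetry between sample and bidder and (ii) a factor $\Pr(v'\in\N_{i,w_{-i}})$ for needing the sample to be representative.
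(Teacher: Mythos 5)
Your proposal is correct and follows essentially the same route as the paper: the same prior-independent BVCG auction with one special sample bidder, the same fee $\fee_{i,w_{-i},s}=\U_{i,w_{-i}}(s)$, the same restriction to $s,w_i\in\N_{i,w_{-i}}$ to get a $\mu/2$ lower bound on the fee, and the same $s\leftrightarrow w_i$ symmetry yielding the factor $\tfrac12$. The only cosmetic difference is that you phrase the symmetry step via exchangeability of conditional i.i.d.\ draws rather than the paper's direct sum manipulation, and you make explicit the $\U_{i,w_{-i}}(w_i)>0$ caveat when arguing the fee is actually collected, which the paper leaves implicit.
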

\begin{proof}
We follow the proof approach in \autoref{lemma:bvcgcore} but this time use the fees defined in \autoref{eq:feepi} as they are prior-independent. More specifically, we consider the auction that first receives the bids $w$ and $s$ of the non-special and special players respectively and runs a VCG auction based on $w$. Thus, the special bidder never receives or pays anything. If bidder $i \in [n']$ is not allocated any items in the VCG auction, he departs without paying anything. Otherwise, he gets all the items allocated to him in the VCG auction if and only if he agrees to pay an amount equal to $\fee_{i,w_{-i},s}$ in addition to the prices charged by the VCG auction. 

This auction is truthful as we ensure that $\fee_{i,w_{-i},s} \geq 0$. Moreover, if bidder $i$ does not pay at least the amount $\fee_{i,w_{-i},s}$, we must have that his utility from the VCG auction is (strictly) smaller than $\fee_{i,w_{-i},s}$. From \autoref{eq:feepi}, we get the following lower bound on the revenue of this auction:
\begin{align*}
\pibvcg(n' + 1) &\geq \sum_{i = 1}^{n'} \sum_{s \in \V} \sum_{w \in \V^{n'}} f(s) f^*(w) \cdot \U_{i,w_{-i}}(s) \cdot \mathds{1}\paren*{ \U_{i,w_{-i}}(s) \leq \U_{i,w_{-i}}(w_i) } \\
&\geq \sum_{i = 1}^{n'} \sum_{w_{-i}} \sum_{s, w_i \in \N_{i, w_{-i}}} f(s) f(w_i) f^*(w_{-i}) \cdot \U_{i,w_{-i}}(s) \cdot \mathds{1}\paren*{ \U_{i,w_{-i}}(s) \leq \U_{i,w_{-i}}(w_i) } .
\end{align*}
As $\U$ upper bounds $\capU$ and we only consider $s \in \N_{i, w_{-i}}$, we have $\U_{i,w_{-i}}(s) \geq \capU_{i,w_{-i}}(s) \geq \frac{1}{2} \cdot \E_{v' \sim \D}\sq*{ \capU_{i,w_{-i}}(v') }$. Plugging in, we have:
\begin{multline*}
\pibvcg(n' + 1) \\
\geq \frac{1}{2} \cdot \sum_{i = 1}^{n'} \sum_{w_{-i}} f^*(w_{-i}) \cdot \E_{v' \sim \D}\sq*{ \capU_{i,w_{-i}}(v') } \cdot \sum_{s, w_i \in \N_{i, w_{-i}}} f(s) f(w_i) \cdot \mathds{1}\paren*{ \U_{i,w_{-i}}(s) \leq \U_{i,w_{-i}}(w_i) } .
\end{multline*}
By symmetry, we conclude that:
\[
\pibvcg(n' + 1) \geq \frac{1}{4} \cdot \sum_{i = 1}^{n'} \sum_{w_{-i}} f^*(w_{-i}) \cdot \Pr_{v' \sim \D}\paren*{ v' \in \N_{i, w_{-i}} }^2 \cdot \E_{v' \sim \D}\sq*{ \capU_{i,w_{-i}}(v') } .
\]
\end{proof}

We now present our proof of \autoref{lemma:step3alt}.
\begin{proof}[Proof of \autoref{lemma:step3alt}]
Call a pair $i, w_{-i}$ ``high'' if
\begin{equation}
\label{eq:high}
\E_{v' \sim \D}\sq*{ \capU_{i,w_{-i}}(v') } \geq 6 \cdot r^{(i)}_{\ron}(w_{-i}) ,
\end{equation}
and call it ``low'' otherwise. Using Chebyshev's inequality (\autoref{fact:variance}, \autoref{item:chebyshev}) and the variance bound in \autoref{lemma:variancecaputil}, we have for all high $(i, w_{-i})$ that:
\[
1 - \Pr_{v' \sim \D}\paren*{ v' \in \N_{i, w_{-i}} } \leq \frac{ 4 \cdot \Var_{v_i \sim \D}\paren*{ \capU_{i,w_{-i}}(v_i) } }{ \paren*{ \E_{v' \sim \D}\sq*{ \capU_{i,w_{-i}}(v') }^2 } } \leq \frac{2}{9} .
\]
Thus, if the pair $(i, w_{-i})$ is high, we get that $\Pr_{v' \sim \D}\paren*{ v' \in \N_{i, w_{-i}} }$ is at least $\frac{7}{9}$. We now bound \core{} from \autoref{eq:core} and finish the proof. We have:
\begin{align*}
\core &\leq \sum_{\text{high~} (i, w_{-i})} f^*(w_{-i}) \cdot \E_{v' \sim \D}\sq*{ \capU_{i,w_{-i}}(v') } + \sum_{\text{low~} (i, w_{-i})} f^*(w_{-i}) \cdot \E_{v' \sim \D}\sq*{ \capU_{i,w_{-i}}(v') } \\
&\leq \frac{81}{49} \cdot \sum_{\text{high~} (i, w_{-i})} f^*(w_{-i}) \cdot \Pr_{v' \sim \D}\paren*{ v' \in \N_{i, w_{-i}} }^2 \cdot \E_{v' \sim \D}\sq*{ \capU_{i,w_{-i}}(v') } \\
&\hspace{1cm} + 6 \cdot \sum_{\text{low~} (i, w_{-i})} f^*(w_{-i}) \cdot r^{(i)}_{\ron}(w_{-i}) ,
\end{align*}
where, for high $(i, w_{-i})$, we plug in $\Pr_{v' \sim \D}\paren*{ v' \in \N_{i, w_{-i}} } \geq \frac{7}{9}$, while for low $(i, w_{-i})$, we use \autoref{eq:high}. This gives:
\[
\core \leq \frac{81}{49} \cdot \sum_{i = 1}^{n'} \sum_{w_{-i}} f^*(w_{-i}) \cdot \Pr_{v' \sim \D}\paren*{ v' \in \N_{i, w_{-i}} }^2 \cdot \E_{v' \sim \D}\sq*{ \capU_{i,w_{-i}}(v') } + 6 \cdot \sum_{i = 1}^{n'} \sum_{w_{-i}} f^*(w_{-i}) \cdot r^{(i)}_{\ron}(w_{-i}) .
\]
Using \autoref{lemma:bvcgcorepi} and using the definition of $r^{(i)}_{\ron}(w_{-i})$, we get: 
\[
\core \leq 7 \cdot \pibvcg(n' + 1) + 6 \cdot \sum_{j = 1}^m \sum_{i = 1}^{n'} \sum_{w_{-i}} f^*(w_{-i}) \cdot r^*_{\ron, j}(\max(w_{-i})\vert_j) .
\]
Using \autoref{lemma:standard3} on the second term, we have $\core \leq 7 \cdot \pibvcg(n' + 1) + 6 \cdot \srev(n')$. Plugging into \autoref{lemma:step3step1}, we get $\ub(n'', n') \leq 4 \cdot \srev(n') + \core \leq 7 \cdot \pibvcg(n' + 1) + 10 \cdot \srev(n')$. As we assumed that all the items are regular, we have from \autoref{prop:bk} that $\srev(n') \leq \vcg(n' + 1) \leq \pibvcg(n' + 1)$. This yields:
\[
\ub(n'', n') \leq 17 \cdot \pibvcg(n' + 1) .
\]
\end{proof}

\bibliographystyle{alpha}
\bibliography{MasterBib}

\appendix


\section{Some Lower Bounds on $\srev(\cdot)$} 
\label{app:standard}

In this section, we analyze the revenue of some auctions that sell the items separately. By definition, the revenue of any such auction is a lower bound for $\srev(\cdot)$. All lemmas in this section are for a fixed auction setting $(n, m, \D)$ (see \autoref{sec:prelim}).

\begin{lemma}[VCG with reserves]
\label{lemma:standard1}
Fix item $j \in [m]$. For all $x \geq 0$, it holds that:
\[
\sum_{v \in \V^n} f^*(v) \cdot x \cdot \mathds{1}\paren*{ \max(v)\vert_j \geq x } \leq \srev_j(n) .
\]
\end{lemma}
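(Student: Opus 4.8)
The plan is to recognize the left-hand side as the revenue of a concrete truthful single-item auction for item $j$, and then invoke the optimality of Myerson's auction. First I would rewrite the sum: since $f^*(\cdot)$ is the mass function of $\D^n$ and the indicator depends only on the values for item $j$,
\[
\sum_{v \in \V^n} f^*(v) \cdot x \cdot \mathds{1}\paren*{ \max(v)\vert_j \geq x } = x \cdot \Pr_{v \sim \D^n}\paren*{ \max_{i \in [n]} v_{i,j} \geq x } .
\]
So it suffices to exhibit a truthful auction selling item $j$ separately whose revenue is at least this quantity.

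Next I would take $\auc^{(j)}$ to be the VCG auction for item $j$ with a reserve price of $x$ (equivalently, a posted price of $x$): the item is awarded to the highest bidder for item $j$, provided that bid is at least $x$, at a price equal to the maximum of $x$ and the second-highest bid; if no bid reaches $x$, the item is unsold and no one pays. This is a standard truthful mechanism (it is the single-item Myerson auction with reserve $x$, and in particular satisfies \autoref{eq:truthful}), and it clearly sells the items separately in the sense of \autoref{sec:prelimauc} since it only looks at $\set{v_{i,j}}_{i\in[n]}$. Its expected revenue is exactly $\E_{v\sim\D^n}\sq{p^{(j)}_{\text{sold}}}$, and whenever $\max_{i}v_{i,j}\geq x$ the item is sold at a price of at least $x$, while otherwise the revenue contribution is nonnegative; hence its revenue is at least $x \cdot \Pr_{v\sim\D^n}\paren*{\max_{i\in[n]}v_{i,j}\geq x}$.

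Finally, since $\srev_j(n)$ is by definition the Myerson-optimal revenue for selling item $j$ to $n$ bidders, it upper bounds the revenue of any truthful single-item auction for item $j$, in particular that of $\auc^{(j)}$. Chaining the inequalities gives
\[
\sum_{v \in \V^n} f^*(v) \cdot x \cdot \mathds{1}\paren*{ \max(v)\vert_j \geq x } = x \cdot \Pr_{v \sim \D^n}\paren*{ \max_{i \in [n]} v_{i,j} \geq x } \leq \rev(\auc^{(j)}, \D_j, n) \leq \srev_j(n) ,
\]
as claimed. There is no real obstacle here; the only thing to be careful about is stating $\auc^{(j)}$ precisely enough that truthfulness and the ``sells separately'' property are manifest, which is routine.
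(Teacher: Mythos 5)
Your proposal is correct and follows essentially the same route as the paper: both exhibit a VCG auction for item $j$ with reserve price $x$, note that whenever $\max_i v_{i,j} \geq x$ the item sells for at least $x$, and then appeal to the optimality of Myerson's auction to bound the resulting revenue by $\srev_j(n)$. The only difference is that you spell out the rewriting of the sum as $x \cdot \Pr_{v\sim\D^n}(\max_i v_{i,j}\geq x)$ explicitly, which the paper leaves implicit.
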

\begin{proof}
Consider the auction that sells item $j$ through a VCG auction with reserve $x$. Namely, it solicits bits $v_{i,j}$ for item $j$ for each bidder $i \in [m]$ and proceeds as follows: If the highest bid is at least $x$, then allocate this item to the highest bidder for a price equal to the maximum of $x$ and the second highest bid. Otherwise, the item stays unallocated. Clearly, the auction is truthful and generates revenue at least:
\[
\sum_{v \in \V^n} f^*(v) \cdot x \cdot \mathds{1}\paren*{ \max(v)\vert_j \geq x } .
\]
Thus, we can upper bound the above quantity by $\srev_j(n)$ and the lemma follows.
\end{proof}

\begin{lemma}[Sequential Posted Price]
\label{lemma:standard2}
Let non-negative numbers $\{x_{i,j}\}_{i \in [n], j \in [m]}$ be given. It holds that:
\[
\sum_{j = 1}^m \sum_{v \in \V^n} f^*(v) \cdot \max_{i \in [n]} \set*{ x_{i,j} \cdot \mathds{1}\paren*{ v_{i,j} \geq x_{i,j} } } \leq \srev(n) .
\]
\end{lemma}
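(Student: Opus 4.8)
The plan is to exhibit, for each item $j \in [m]$, a truthful single-item auction whose expected revenue is exactly the $j$-th summand on the left-hand side, and then combine these $m$ auctions into a single auction that sells the items separately; the revenue of that combined auction is then the entire left-hand side, which is at most $\srev(n)$ by definition of $\srev$.

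Fix $j \in [m]$ and consider the following sequential posted-price auction for item $j$: order the bidders $i \in [n]$ by non-increasing value of $x_{i,j}$ (breaking ties arbitrarily but in some fixed way), visit them in this order, and when bidder $i$ is visited offer him item $j$ at the take-it-or-leave-it price $x_{i,j}$, stopping as soon as some bidder accepts. This auction is truthful, since each bidder faces a fixed price that does not depend on his report and accepts if and only if his value is at least that price. For a fixed profile $v \in \V^n$, let $i_0 = i_0(v)$ be the first bidder in the order with $v_{i_0,j} \geq x_{i_0,j}$, if such a bidder exists. Because the order is non-increasing in $x_{\cdot,j}$, the bidder $i_0$ has the largest value of $x_{i,j}$ among all $i$ with $v_{i,j} \geq x_{i,j}$; hence the revenue collected from item $j$ on profile $v$ equals $x_{i_0,j} = \max_{i \in [n]} \set*{ x_{i,j} \cdot \mathds{1}\paren*{ v_{i,j} \geq x_{i,j} } }$ (and both sides are $0$ when no bidder accepts). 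Taking expectations over $v \sim \D^n$ shows that the revenue of this auction is exactly $\sum_{v \in \V^n} f^*(v) \cdot \max_{i \in [n]} \set*{ x_{i,j} \cdot \mathds{1}\paren*{ v_{i,j} \geq x_{i,j} } }$, and is therefore at most $\srev_j(n)$.

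Finally, run these $m$ auctions in parallel, one per item, letting each bidder's total payment be the sum of his payments across the $m$ auctions. The resulting auction sells the items separately, so its revenue is at most $\srev(n) = \sum_{j = 1}^m \srev_j(n)$. Summing the per-item identity from the previous paragraph over $j$ then yields $\sum_{j = 1}^m \sum_{v \in \V^n} f^*(v) \cdot \max_{i \in [n]} \set*{ x_{i,j} \cdot \mathds{1}\paren*{ v_{i,j} \geq x_{i,j} } } \leq \srev(n)$, as claimed. There is no real obstacle here: the two points to verify are that posted-price auctions are truthful and that the revenue of the item-$j$ auction matches the $j$-th summand, and both are immediate; the only thing to be careful about is the choice to visit bidders in non-increasing order of $x_{i,j}$, which is exactly what forces the per-item revenue to equal (rather than merely lower-bound) the desired maximum.
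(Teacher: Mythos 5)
Your proof is correct and follows essentially the same approach as the paper: both construct, for each item, a sequential posted-price auction visiting bidders in non-increasing order of $x_{i,j}$, whose per-item revenue matches the corresponding summand, and then bound by $\srev(n)$ via selling separately. Your argument is slightly more detailed (you prove the per-item revenue is an exact equality rather than just a lower bound), but the idea and structure are identical.
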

\begin{proof}
Consider the auction that sells each item $j \in [m]$ separately through the following auction: It goes over all the bidders in decreasing order of $x_{i,j}$, bidder $i$ can either take the item and pay price $x_{i,j}$, in which case the auction terminates, or skip the item, in which case the auction goes to the next bidder. Clearly, the auction is truthful and generates revenue at least:
\[
\sum_{j = 1}^m \sum_{v \in \V^n} f^*(v) \cdot \max_{i \in [n]} \set*{ x_{i,j} \cdot \mathds{1}\paren*{ v_{i,j} \geq x_{i,j} } } .
\]
Thus, we can upper bound the above quantity by $\srev(n)$ and the lemma follows.
\end{proof}

\begin{lemma}[Ronen's auction \cite{Ronen01}]
\label{lemma:standard3}
For all $j \in [m]$ and $x \geq 0$, define $r^*_{\ron, j}(x) = \max_{y > x} y \cdot \Pr_{y' \sim \D_j}\paren*{ y' \geq y }$. It holds that:
\[
\sum_{j = 1}^m \sum_{i = 1}^n \sum_{v_{-i} \in \V^{n-1}} f^*(v_{-i}) \cdot r^*_{\ron, j}(\max(v_{-i})\vert_j) \leq \srev(n) .
\]
\end{lemma}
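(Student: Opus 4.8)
The plan is to prove the inequality by exhibiting a concrete truthful auction that sells the items separately and whose revenue is \emph{exactly} the left-hand side; the bound then follows since $\srev(n)$ is, by definition, the maximum revenue of any auction that sells the items separately. For each item $j$ I would run \emph{Ronen's lookahead auction} \cite{Ronen01} independently: solicit the bids $\{v_{i,j}\}_{i \in [n]}$, let $i^{*}$ be the bidder with the highest bid (ties broken lexicographically) and $x = \max_{i \neq i^{*}} v_{i,j}$ the largest of the remaining bids, and then make $i^{*}$ a take-it-or-leave-it offer for item $j$ at the price $y^{*}(x) \in \argmax_{y > x}\, y \cdot \Pr_{y' \sim \D_j}(y' \ge y)$ (making no sale at all when no such $y$ exists, i.e.\ when $x \ge \max(\V_j)$, in which case $r^{*}_{\ron,j}(x) = 0$). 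A preliminary observation I would record is that, since $y \cdot \Pr_{y'\sim\D_j}(y'\ge y)$ is constant on each interval between consecutive points of the finite set $\V_j$ and weakly increasing there, the supremum defining $r^{*}_{\ron,j}(x)$ is attained at an element of $\V_j$ strictly larger than $x$; in particular $y^{*}(x) > x$ and $r^{*}_{\ron,j}(x) = y^{*}(x)\cdot\Pr_{y'\sim\D_j}(y' \ge y^{*}(x))$.

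Two easy facts come next. First, this auction sells the items separately in the sense of \autoref{sec:prelimauc}, so its revenue is at most $\srev(n)$. Second, it is truthful: fixing the other bidders' reports, bidder $i$ faces, for item $j$, a posted price equal to $y^{*}(\max_{i'\neq i} v_{i',j})$ (it wins item $j$ precisely when its own report is at least this price, since $y^{*}(\cdot)$ strictly exceeds the second-highest bid), and a posted-price mechanism with a bidder-independent threshold is trivially incentive compatible. Alternatively, one may simply cite that Ronen's auction is truthful.

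The one genuinely substantive step is the revenue computation. The per-item-$j$ revenue of the auction at a profile $v$ is $y^{*}(x)\cdot\mathds 1\big(v_{i^{*},j} \ge y^{*}(x)\big)$, where $x$ is the second-highest bid for item $j$. On the other side, rewriting $\Pr_{y'\sim\D_j}(y'\ge y) = \E_{v_{i,j}\sim\D_j}[\mathds 1(v_{i,j}\ge y)]$ turns the left-hand side of the lemma into $\sum_{j}\sum_{i}\E_{v}\big[\,y^{*}(\max_{i'\neq i}v_{i',j})\cdot\mathds 1\big(v_{i,j}\ge y^{*}(\max_{i'\neq i}v_{i',j})\big)\big]$. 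The key point is that, for a fixed $v$ and $j$, the indicator $\mathds 1\big(v_{i,j}\ge y^{*}(\max_{i'\neq i}v_{i',j})\big)$ can equal $1$ for at most one index $i$: since $y^{*}(x) > x$, this event forces $v_{i,j} > v_{i',j}$ for every $i'\neq i$, so $i$ must be the (unique) highest bidder $i^{*}$, and then $\max_{i'\neq i}v_{i',j}$ is exactly the second-highest bid $x$. Hence the inner sum over $i$ collapses, term by term, to the per-item revenue $y^{*}(x)\cdot\mathds 1(v_{i^{*},j}\ge y^{*}(x))$; summing over $j$ and taking expectations identifies the left-hand side with the total revenue of the auction, which is $\le \srev(n)$.

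I expect no real obstacle here; the argument is the same in spirit as the proofs of \autoref{lemma:standard1} and \autoref{lemma:standard2}, and the only places requiring a little care are the discrete treatment of the maximizer $y^{*}(x)$ and the ``at most one $i$ contributes'' collapse described above.
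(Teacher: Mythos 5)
Your proof is correct and follows essentially the same approach as the paper's: both exhibit Ronen's lookahead auction as a truthful mechanism that sells each item separately, argue via $y^*(x) > x$ that only the highest bidder could ever accept (so the auction is well-defined and feasible), and then identify the left-hand side with the auction's revenue. The only cosmetic difference is that the paper describes the mechanism as posting a price $y^*_{i,j}(v_{-i})$ to every bidder simultaneously (with the strict-threshold property ensuring at most one acceptance), whereas you present the equivalent formulation of offering only the highest bidder the price $y^*(\text{second-highest bid})$ and then observe the collapse to a single contributing index; your extra care about the maximizer being attained at a point of $\V_j$ is a welcome but minor detail that the paper takes for granted.
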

\begin{proof}
Consider the auction that sells each item $j \in [m]$ separately through the following auction: First, it solicits bids $v_{i,j}$ for item $j$ from each bidder $i \in [n]$. Then, for $i \in [n]$, it sets $y^*_{i,j}(v_{-i})$ to be\footnote{We write $y^*_{i,j}$ as a function of $v_{-i}$ but note that it only depends on the bidders' bids for item $j$.} the maximizer in the definition of $r^*_{\ron, j}(\max(v_{-i})\vert_j)$, and offers each bidder $i$ to purchase item $j$ at a price of $y^*_{i,j}(v_{-i})$. As $y^*_{i,j}(v_{-i}) > \max(v_{-i})\vert_j$ by definition, at most one bidder will ever purchase the item and the auction is well defined (\autoref{eq:allocconstraint}). 

Also, as the price offered to bidder $i$ does not depend on his bid, the auction is also truthful. Thus, its revenue is a lower bound on $\srev(n)$ and we get:
\begin{align*}
\srev(n) &\geq \sum_{j = 1}^m \sum_{v \in \V^n} f^*(v) \cdot \sum_{i = 1}^n y^*_{i,j}(v_{-i}) \cdot \mathds{1}\paren*{ v_{i,j} \geq y^*_{i,j}(v_{-i}) } \\
&\geq \sum_{j = 1}^m \sum_{i = 1}^n \sum_{v_{-i} \in \V^{n-1}} f^*(v_{-i}) \cdot y^*_{i,j}(v_{-i}) \cdot \Pr_{v_i \in \V}\paren*{ v_{i,j} \geq y^*_{i,j}(v_{-i}) } \\
&\geq \sum_{j = 1}^m \sum_{i = 1}^n \sum_{v_{-i} \in \V^{n-1}} f^*(v_{-i}) \cdot r^*_{\ron, j}(\max(v_{-i})\vert_j) .
\end{align*}

\end{proof}


\section{Proof of Corollary 27 of \cite{CaiDW16}} 
\label{app:CDW} 

This section recalls the proof of Corollary $27$ from \cite{CaiDW16} as \autoref{lemma:cor27}. Our presentation is different from \cite{CaiDW16} as we do not need their ideas in full generality.

\begin{lemma}
\label{lemma:cor27}
Let $(n, m, \D)$ be an auction setting as in \autoref{sec:prelimauc}. Let $n' > 0$ and suppose that for all $i \in [n]$, valuations $w_{-i} \in \V^{n' - 1}$ are given. For all $(\barpi, \barp)$ that correspond to a truthful auction $\auc$, we have that:
\[
\rev(\auc, n) \leq \sum_{i = 1}^n \sum_{j = 1}^m \E_{v_i} \sq*{ \barpi_{i,j}(v_i) \cdot \paren*{ v_{i,j} \cdot \mathds{1}\paren*{ v_i \notin \region^{(n')}_j(w_{-i}) } + \ivv_j(v_{i,j})^+ \cdot \mathds{1}\paren*{ v_i \in \region^{(n')}_j(w_{-i}) } } } .
\]
\end{lemma}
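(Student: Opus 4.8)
The plan is to peel the claim down to a per-bidder statement, and then, using the structure of the partition, down to a one-dimensional Myerson argument carried out separately in each coordinate on a sub-ray of the support.

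\emph{Reduction to a per-bidder inequality.} By the standard characterization of (Bayesian) truthful mechanisms (Myerson \cite{Myerson81} and its multi-dimensional extension), for each bidder $i$ there is a non-negative interim utility function $u_i : \V \to \reals^{+}$ with $\barp_i(v_i) = \sum_{j} \barpi_{i,j}(v_i)\, v_{i,j} - u_i(v_i)$; moreover $u_i$ satisfies the envelope inequality in each coordinate, $u_i(z, v_{i,-j}) \geq u_i(z^-, v_{i,-j}) + (z - z^-)\cdot \barpi_{i,j}(z^-, v_{i,-j})$ for adjacent values $z^- < z$ of $\V_j$, and (adding the two incentive constraints between two types that differ only in coordinate $j$) $z \mapsto \barpi_{i,j}(z, v_{i,-j})$ is non-decreasing. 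Summing \autoref{eq:revenue} over $i$ and subtracting the claimed right-hand side, and using $v_{i,j} - \ivv_j(v_{i,j})^+ \geq 0$ (which follows from \autoref{lemma:ivv-ub} and $v_{i,j}\geq 0$), the lemma becomes equivalent to proving, separately for each bidder $i$,
\[
\sum_{j = 1}^m \E_{v_i \sim \D}\sq*{ \barpi_{i,j}(v_i) \cdot \paren*{ v_{i,j} - \ivv_j(v_{i,j})^+ } \cdot \mathds{1}\paren*{ v_i \in \region^{(n')}_j(w_{-i}) } } \leq \E_{v_i \sim \D}\sq*{ u_i(v_i) } .
\]

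\emph{The per-coordinate bound.} Fix $i$, $j$, and $v_{i,-j}$. Because the partition $\set{ \region^{(n')}_j(w_{-i}) }_j$ is upwards closed, the slice $\set*{ z \in \V_j ~\middle\vert~ (z, v_{i,-j}) \in \region^{(n')}_j(w_{-i}) }$ is an up-ray $\set*{ z \geq \tau }$ for a threshold $\tau = \tau_{i,j}(v_{i,-j})$ (possibly empty, in which case the bound is trivial). Write $g(\cdot) = \barpi_{i,j}(\cdot, v_{i,-j})$, which is non-decreasing. Iterating the envelope inequality up from the bottom of the ray and using $u_i \geq 0$ there, a Fubini rearrangement against $f_j$ together with \autoref{def:vv} gives $\sum_{z \geq \tau} f_j(z)\, g(z)\, z - \sum_{z \geq \tau} f_j(z)\, u_i(z, v_{i,-j}) \leq \sum_{z \geq \tau} f_j(z)\, g(z)\, \vv_j(z)$. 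Next, the ironing inequality on this sub-ray, namely $\sum_{z \geq \tau} f_j(z)\, g(z)\, \vv_j(z) \leq \sum_{z \geq \tau} f_j(z)\, g(z)\, \ivv_j(z)$ for non-decreasing $g$ — proved by Abel summation, using that $g$ is non-decreasing and that the partial sums of $f_j \ivv_j$ taken from the top dominate those of $f_j \vv_j$ (a standard property of the ironing procedure of \autoref{algo:ivv}, with equality over the whole support) — combined with $\ivv_j \leq \ivv_j^+$ and $g \geq 0$, yields $\sum_{z \geq \tau} f_j(z)\, g(z)\, \paren*{ z - \ivv_j(z)^+ } \leq \sum_{z \geq \tau} f_j(z)\, u_i(z, v_{i,-j})$.

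\emph{Summing up.} Taking expectation over $v_{i,-j}$ and summing over $j \in [m]$, the left-hand side becomes exactly $\sum_{j} \E_{v_i}\sq*{ \barpi_{i,j}(v_i)\paren*{v_{i,j} - \ivv_j(v_{i,j})^+}\mathds{1}\paren*{v_i \in \region^{(n')}_j(w_{-i})} }$, while the right-hand side becomes $\sum_{j} \E_{v_i}\sq*{ u_i(v_i)\,\mathds{1}\paren*{v_i \in \region^{(n')}_j(w_{-i})} } \leq \E_{v_i}\sq*{ u_i(v_i) }$, where the last step uses that the regions $\region^{(n')}_1, \dots, \region^{(n')}_m$ are pairwise disjoint and $u_i \geq 0$. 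This establishes the per-bidder inequality, and hence the lemma. The main obstacle I anticipate is the discrete Myerson/ironing bookkeeping in the per-coordinate step — specifically, setting up the ironing inequality on a sub-ray $\set*{ z \geq \tau }$ rather than on the full support, and verifying that the boundary term at $\tau$ produced by Abel summation has the correct (non-negative) sign; the other point requiring care is the combinatorial fact that upwards-closedness forces each region's slice to be an up-ray and that the smallest-index tie-breaking in \autoref{eq:region} keeps the regions disjoint, both of which are used in the final summation.
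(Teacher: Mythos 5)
Your proof is correct, and it takes a genuinely different route from the paper's. The paper works inside the Cai--Devanur--Weinberg Lagrangian duality framework: it first derives the abstract bound (\autoref{eq:flowvv}) that holds for any non-negative dual $\Lambda$ satisfying the flow-conservation constraint (\autoref{eq:flowconserve}), and then exhibits an explicit dual $\Lambda = \Lambda' + \Lambda^*$, where $\Lambda'$ pushes mass down coordinate $j$ inside $\region^{(n')}_j(w_{-i})$ and $\Lambda^*$ implements ironing via \autoref{algo:lambda}. Your proof dispenses with the dual formalism altogether: you reduce directly to a per-bidder inequality using the payment identity $\barp_i = \sum_j \barpi_{i,j}\, v_{i,j} - u_i$ with $u_i \geq 0$, observe that upwards-closedness forces each slice $\set*{z : (z, v_{i,-j}) \in \region^{(n')}_j(w_{-i})}$ to be an up-ray, iterate the one-dimensional envelope inequality up that ray, and convert the resulting term to $\vv_j$ and then $\ivv_j$ via Abel summation against the non-negative tail sums $\sum_{y' \geq y} f_j(y')\paren*{\ivv_j(y') - \vv_j(y')}$. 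Under the hood the two proofs build the same object (the telescoping $\Lambda'$ is exactly your iterated envelope argument, and $\Lambda^*$ encodes your ironing step), but your presentation is more elementary and self-contained, at the cost of being specific to additive bidders and this particular product structure, whereas the paper's explicit-dual framing plugs into the broader CDW machinery. Two tiny notes: the reduction makes the per-bidder inequality sufficient, not equivalent, so "the lemma becomes equivalent to" is a slight overstatement; and the observation $v_{i,j} - \ivv_j(v_{i,j})^+ \geq 0$ is not actually needed for the reduction to go through, though it is a true and harmless remark.
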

\begin{proof} 
We start with some notation. We use $v_{\varnothing}$ to denote a dummy valuation for the bidders and adopt the convention $\barpi_{i,j}(v_{\varnothing}) =  \barp_i(v_{\varnothing}) = 0$ for all $i \in [n]$, $j \in [m]$. Suppose that non-negative numbers $\Lambda = \set*{ \lambda_i(v_i, v'_i) }_{i \in [n], v_i \in \V, v'_i \in \V \cup \set*{ v_{\varnothing} }}$ are given that satisfy for all $i \in [n]$ and $v_i \in \V$ that:
\begin{equation}
\label{eq:flowconserve}
f(v_i) - \sum_{v'_i \in \V \cup \set*{ v_{\varnothing} } } \lambda_i(v_i, v'_i) + \sum_{v'_i \in \V} \lambda_i(v'_i, v_i) = 0 .
\end{equation}
As $(\barpi, \barp)$ correspond to a truthful auction $\auc$, we have from \autoref{eq:revenue} that $\rev(\auc, n) = \sum_{i = 1}^n \E_{v_i \sim \D} \sq*{ \barp_i(v_i) }$. Continuing using the non-negativity of $\lambda_i(v_i, v'_i)$ and \autoref{eq:truthful}, we have:
\begin{multline*}
\rev(\auc, n) \leq \sum_{i = 1}^n \sum_{v_i \in \V} f(v_i) \cdot \barp_i(v_i) \\
+ \sum_{i = 1}^n \sum_{v_i \in \V} \sum_{v'_i \in \V \cup \set*{ v_{\varnothing} } } \lambda_i(v_i, v'_i) \cdot \paren*{ \sum_{j = 1}^m \paren*{ \barpi_{i,j}(v_i) - \barpi_{i,j}(v'_i) } \cdot v_{i,j} - \paren*{ \barp_i(v_i) - \barp_i(v'_i) } } .
\end{multline*}
This can be rearranged to:
\begin{multline*}
\rev(\auc, n) \leq \sum_{i = 1}^n \sum_{v_i \in \V} \paren*{ f(v_i) - \sum_{v'_i \in \V \cup \set*{ v_{\varnothing} } } \lambda_i(v_i, v'_i) + \sum_{v'_i \in \V} \lambda_i(v'_i, v_i) } \cdot \barp_i(v_i) \\
+ \sum_{i = 1}^n \sum_{v_i \in \V} \sum_{j = 1}^m \paren*{ \sum_{v'_i \in \V \cup \set*{ v_{\varnothing} } } \lambda_i(v_i, v'_i) \cdot v_{i,j} - \sum_{v'_i \in \V} \lambda_i(v'_i, v_i) \cdot v'_{i,j} } \cdot \barpi_{i,j}(v_i) .
\end{multline*}
Plugging in \autoref{eq:flowconserve}, we get:
\[
\rev(\auc, n) \leq \sum_{i = 1}^n \sum_{v_i \in \V} \sum_{j = 1}^m \paren*{ f(v_i) \cdot v_{i,j} - \sum_{v'_i \in \V} \lambda_i(v'_i, v_i) \cdot \paren*{ v'_{i,j} - v_{i,j} } } \cdot \barpi_{i,j}(v_i) .
\]
Rearranging again, and denoting by $\fv^{\Lambda}_{i,j}(v_i) = v_{i,j} - \frac{1}{ f(v_i) } \cdot \sum_{v'_i \in \V} \lambda_i(v'_i, v_i) \cdot \paren*{ v'_{i,j} - v_{i,j} }$, we get:
\begin{equation}
\label{eq:flowvv}
\rev(\auc, n) \leq \sum_{i = 1}^n \sum_{v_i \in \V} \sum_{j = 1}^m f(v_i) \cdot \barpi_{i,j}(v_i) \cdot \fv^{\Lambda}_{i,j}(v_i) .
\end{equation}

Observe that \autoref{eq:flowvv} holds for any $\Lambda$ that is non-negative and satisfies \autoref{eq:flowconserve}. In order to show \autoref{lemma:cor27}, we construct a suitable $\Lambda$ and apply \autoref{eq:flowvv}. This is done by defining $\Lambda'$ and $\Lambda^*$ as below and setting $\Lambda = \Lambda' + \Lambda^*$.

\paragraph{Defining $\Lambda'$.} We start with some notation. For $j \in [m]$, let $f_{-j}(\cdot)$ denote the probability mass function of the distribution $\bigtimes_{j' \neq j} \D_{j'}$. Also, for $j \in [m]$ and $v_i \in \V$, let $\dec_j(v_i)$ be defined to be $v_{\varnothing}$ if $v_{i,j} = \min \V_j$. Otherwise define $\dec_{j,k}(v_i) = v_{i,k}$ for all $k \neq j$ and $\dec_{j,j}(v_i) = \max_{x \in \V_j, x < v_{i,j}} x$. Recall the definition of the regions $\set*{ \region^{(n')}_j(w_{-i}) }_{j \in \set*{0} \cup [m]}$ from \autoref{eq:region} and for all $i \in [n], v_i \in \V, v'_i \in \V \cup \set*{ v_{\varnothing} }$, define the numbers:
\[
\lambda'_i(v_i, v'_i) = \begin{cases}
f(v_i), &\text{~if~} v_i \in \region^{(n')}_0(w_{-i}) \text{~and~} v'_i = v_{\varnothing} \\
\Pr_{y \sim \D_j}\paren*{ y \geq v_{i,j} } \cdot f_{-j}(v_{i,-j}), &\text{~if~} \exists j \in [m] : v_i, v'_i \in \region^{(n')}_j(w_{-i}) \wedge v'_i = \dec_j(v_i) \\
\Pr_{y \sim \D_j}\paren*{ y \geq v_{i,j} } \cdot f_{-j}(v_{i,-j}), &\text{~if~} \exists j \in [m] : v_i \in \region^{(n')}_j(w_{-i}) \wedge v'_i = v_{\varnothing} \\
&\hspace{1cm} \text{~and~} \dec_j(v_i) \notin \region^{(n')}_j(w_{-i}) \\
0, &\text{~otherwise} 
\end{cases} .
\]

\paragraph{Defining $\Lambda^*$.} For all $i \in [n]$, we define $\lambda^*_i(\cdot)$ using the procedure described in \autoref{algo:lambda}. In \autoref{line:algocall} of \autoref{algo:lambda}, when we say we invoke \autoref{algo:ivv} restricted to values at least $x$, we mean that \autoref{line:setystar} of \autoref{algo:ivv} would only include values that are at least $x$ in the $\argmax$ and \autoref{line:abort} of \autoref{algo:ivv} will abort as soon as $y^* = x$ (instead of when $y^* = \min(\V_j)$). \autoref{algo:ivv} guarantees that the output $\vv_j^{v_{i,-j}}(\cdot)$ produced in this manner is a lower bound of $\ivv_j(\cdot)$, and therefore, also a lower bound of $\ivv_j(\cdot)^+$, for all values at least $x$. Moreover it satisfies, for all $y \geq x$ and with equality when $y=x$, that:
\begin{equation}
\label{eq:lambdavv}
\sum_{y' \geq y \in \V_j} f_j(y') \cdot \vv_j^{v_{i,-j}}(y') \geq \sum_{y' \geq y \in \V_j} f_j(y') \cdot \vv_j(y') .
\end{equation}

\begin{algorithm}
\caption{Computing $\lambda^*_i(\cdot)$ for $i \in [n]$.}
\label{algo:lambda}
\begin{algorithmic}[1]

\State Set $\lambda^*_i(v_i, v'_i) = 0$ for all $v_i \in \V$ and $v'_i \in \V \cup \set*{ v_{\varnothing} }$.

\For{$j \in [m]$}

\For{$v_{i,-j} \in \V_{-j}$}

\State $S \gets \min \set*{ x \in \V_j \mid (x,v_{i,-j}) \in \region^{(n')}_j(w_{-i}) }$. If $S = \emptyset$, continue to next iteration.

\State $x^* \gets \min(S)$.

\State $\vv_j^{v_{i,-j}}(\cdot) \gets$ the output of \autoref{algo:ivv} when restricted to values at least $x^*$. \label{line:algocall}

\For{$x \in \V_j$ such that  $x^* \leq x < \max \V_j$}

\State $x' \gets$ smallest element $> x$ in $\V_j$.

\State Set both $\lambda^*_i((x, v_{i,-j}), (x', v_{i,-j}))$ and $\lambda^*_i((x', v_{i,-j}), (x, v_{i,-j}))$ to
\[
\frac{ f_{-j}(v_{i,-j}) }{ x' - x } \cdot \sum_{x'' > x \in \V_j} f_j(x'') \paren*{ \vv_j^{v_{i,-j}}(x'') - \vv_j(x'') } .
\]

\EndFor

\EndFor

\EndFor

\end{algorithmic}
\end{algorithm}

We now finish the proof of \autoref{lemma:cor27}. Having defined $\Lambda'$ and $\Lambda^{\star}$, we first observe that they are both non-negative ($\Lambda^*$ is non-negative due to \autoref{eq:lambdavv}). Moreover, observe that setting $\Lambda = \Lambda' + \Lambda^*$ satisfies \autoref{eq:flowconserve}. Plugging into \autoref{eq:flowvv}, we get:
\[
\rev(\auc, n) \leq \sum_{i = 1}^n \sum_{j = 1}^m \E_{v_i}\sq*{ \barpi_{i,j}(v_i) \cdot \fv^{\Lambda}_{i,j}(v_i) } .
\]
Where, using \autoref{eq:lambdavv} and \autoref{def:vv}, the value $\fv^{\Lambda}_{i,j}(v_i)$ can be simplified to:
\[
\fv^{\Lambda}_{i,j}(v_i) = v_{i,j} \cdot \mathds{1}\paren*{ v_i \notin \region^{(n')}_j(w_{-i}) } + \vv_j^{v_{i,-j}}(v_{i,j}) \cdot \mathds{1}\paren*{ v_i \in \region^{(n')}_j(w_{-i}) } .
\]
Plugging in and using the fact that $\vv_j^{v_{i,-j}}(\cdot) \leq \ivv_j(\cdot)^+$, we get:
\[
\rev(\auc, n) \leq \sum_{i = 1}^n \sum_{j = 1}^m \E_{v_i} \sq*{ \barpi_{i,j}(v_i) \cdot \paren*{ v_{i,j} \cdot \mathds{1}\paren*{ v_i \notin \region^{(n')}_j(w_{-i}) } + \ivv_j(v_{i,j})^+ \cdot \mathds{1}\paren*{ v_i \in \region^{(n')}_j(w_{-i}) } } } .
\]

\end{proof}

\end{document}